\documentclass[11pt,a4paper,reqno]{article}
\usepackage{amsmath,amssymb,amsthm}
\usepackage{graphicx}
\usepackage[english]{babel}
\usepackage[utf8]{inputenc}
\usepackage[T1]{fontenc}
\usepackage{mathtools}
\usepackage{hyperref}
\allowdisplaybreaks

\newcommand{\N}{\mathbb{N}}
\newcommand{\Z}{\mathbb{Z}}

\newcommand{\R}{\mathbb{R}}

\newcommand{\ud}{\,\mathrm{d}}
\DeclareMathOperator{\diverg}{div}

\DeclareMathOperator{\inte}{int}

\DeclareMathOperator{\dist}{dist}

\newtheorem{theorem}{Theorem}[section]
\newtheorem{proposition}[theorem]{Proposition}
\newtheorem{lemma}[theorem]{Lemma}
\newtheorem{corollary}[theorem]{Corollary}
\theoremstyle{definition}
\newtheorem{definition}[theorem]{Definition}
\theoremstyle{remark}
\newtheorem{remark}[theorem]{Remark}
\numberwithin{equation}{section}

\begin{document}

\title{Estimates for the Green's function of the discrete bilaplacian in dimensions 2 and 3}

\author{Stefan Müller\footnote{Hausdorff Center for Mathematics, Universität Bonn, Endenicher Allee 60, 53115 Bonn, Germany, E-Mail: \texttt{stefan.mueller@hcm.uni-bonn.de}}\and Florian Schweiger\footnote{Institut für angewandte Mathematik, Universität Bonn, Endenicher Allee 60, 53115 Bonn, Germany, E-Mail: \texttt{schweiger@iam.uni-bonn.de}}}

\date{\today}

\maketitle
\begin{center}
\textit{Dedicated to the memory of Eberhard Zeidler who has been an inspiration in so many ways. Who ever had the good fortune to meet him will never forget him.}
\end{center}
\begin{abstract} We prove estimates for the  Green's function of the discrete bilaplacian in squares and cubes in two and three dimensions which are optimal
except possibly near the corners of the square and the edges and corners of the cube. The main idea is to transfer estimates
for the continuous bilaplacian using a new discrete compactness argument and a discrete version of the Cacciopoli (or reverse
Poincaré) inequality. One application that we have in mind is the study of entropic repulsion for the membrane model from statistical physics.
\end{abstract}
\bigskip
\noindent
\textbf{Keywords:} discrete bilaplacian, finite differences, discrete Campanato \linebreak spaces, membrane model, entropic repulsion, Gaussian field\\
\textbf{Mathematics Subject Classification (2010):} 65N06, 31B30, 39A14, 60K35, 82B41

\section{Introduction}
Let $V=[-1,1]^n$  and $V_N=NV\cap\Z^n$ with $n\in\N^+$ and $N\in\N^+$. Consider the Hamiltonian $H_N(\psi)=\frac12\sum_{x\in V_N}|\Delta_1\psi_x|^2$,
 where $\Delta_1$ is the discrete Laplacian and $\psi\in\R^{V_N}$ is a function on $V_N$, extended by 0 to all of $\Z^n$. The associated Gibbs measure
\[
	P_N(\mathrm{d}\psi)=\frac1{Z_N}\exp(-H_N(\psi))\prod_{x\in V_N}\ud\psi_x\prod_{x\in\Z^n\setminus V_N}\delta_0(\ud\psi_x)
\]
is then the distribution of a Gaussian random field on $\Z^n$ with 0 boundary data, the so-called membrane model. Its covariance matrix is given by the Green's function $G_N$ of the discrete bilaplacian on $V_N$ with zero boundary data outside $V_N$. We prove estimates for the  Green's function of the discrete bilaplacian for $n=2$ and $n=3$ which are optimal except possibly near the corners of the square and the edges and corners of the cube.

One motivation for our work is to understand entropic repulsion, i.e. the probability of the event $\Omega_{V_N,+}=\{\psi\colon\psi_x\ge0\ \forall x\in V_N\}$, and the behaviour of $P_N$ conditioned on $\Omega_{V_N,+}$. For this analysis a good understanding of the Green's function $G_N$ is crucial. We focus here on the subcritical dimensions $n=2$ and $n=3$, since entropic repulsion for the membrane model has already been studied 
by Kurt in the supercritical case $n\ge5$ \cite{Kurt2007} and in the critical case $n=4$ \cite{Kurt2009}.
For earlier results in the supercritical case see also Sakagawa \cite{Sakagawa2003}. 
In the case $n=1$ Hamiltonians of the form  $\sum_{x\in V_N} V(\Delta_1\psi_x)$ 
for convex $V$ have been studied by Caravenna and Deuschel  \cite{Caravenna2008,Caravenna2009} using renewal methods. 
Entropic repulsion in the gradient model corresponding to the Hamiltonian $\frac12\sum_{x\in V_N}|\nabla_1 \psi_x|^2$ with $0$ boundary condition 
was analysed
by Deuschel \cite{Deuschel1996}, see also Bolthausen-Deuschel-Giacomin  \cite{Bolthausen2001} and the survey by Velenik \cite{Velenik2006}.

More recently the membrane model with periodic boundary conditions  has also been discussed as
a scaling limit of the divisible sandpile model, see Levine et al.  \cite{Levine2016} for the expression of the odometer function 
as a shifted discrete bilaplacian field
 and Cipriani, Subhra Hazra and Ruszel \cite{Cipriani2016a,Cipriani2016b}
 for the convergence of the rescaled odometer to a continuum bilaplacian field on the unit  torus and further properties of the odometer function. 

The analysis of the discrete Green's function is very closely related to stability estimates for the inverse of the corresponding fourth order finite difference operator.
In numerical analysis such stability estimates and related convergence results estimates  go back to the seminal work of Courant, Friedrich and Lewy \cite{Courant1928},
who followed a variational approach for second and fourth order equations  and showed in particular
apriori estimates for the $L^2$ norm of the discrete derivatives and convergence of discretely biharmonic functions to
a continuous biharmonic function,
and Gerschgorin \cite{Gerschgorin1930}  who proved an error estimate of order $h^2$ for the Poisson equation on a grid of size $h$
if the continuous solution is in $C^4$. There has been a large amount of subsequent work in particular for second order
equations, including estimates under weak regularity assumptions. See, e.g., 
the recent monographs  by  Jovanovi\'c and Süli  \cite{Jovanovic2014}  and Hackbusch \cite{Hackbusch2017} for the state of the art and further references.
For special domains such as a rectangle or an orthant explicit formulae for the Green's function of the Laplace operator
are available \cite{McCrea1940,Chiarini2016}.
The biharmonic case has been less studied. Early references include \cite{Duffin1958,Mangad1967,Simpson1967}
and error estimates for low regularity solutions have been obtained in \cite{Hackbusch1981,Lazarov1981,Gavrilyuk1983,Gavrilyuk1983b,Ivanovich1986,Jovanovic2014}.

The main result of this paper is

\begin{theorem}\label{t:mainthmN}
Let $n=2$ or $n=3$, let $G_N$ be the Green's function of the discrete bilaplacian with zero boundary data outside $V_N$, and let $d(z)=\dist(z,\Z^n\setminus V_N)$. Then there exist $c,C>0$ independent of $N$ such that $G_N$ and its discrete derivatives satisfy the following estimates.
\begin{itemize}
	\item[i)] For any $x,y\in \Z^n$
\begin{align}
	|G_N(x,y)|&\le C\min\left(d(x)^{2-\frac n2}d(y)^{2-\frac n2},\frac{d(x)^2d(y)^2}{(|x-y|+1)^n}\right)\,,\label{e:estGuppN}\\
	|\nabla_xG_N(x,y)|&\le C\min\left(d(y)^{3-n},\frac{(d(x)+1)d(y)^2}{(|x-y|+1)^n}\right)\,,\label{e:estnablaxGuppN}\\
	|\nabla_x^2G_N(x,y)|&\le \begin{cases}C\log\left(1+\frac{d(y)^2}{(|x-y|+1)^2}\right)&\quad n=2\\ C\min\left(\frac{1}{|x-y|+1},\frac{d(y)^2}{(|x-y|+1)^3}\right)&\quad n=3\end{cases}\,,\label{e:estnablax2GuppN}\\
	|\nabla_x\nabla_yG_N(x,y)|&\le\begin{cases}C\log\left(1+\frac{(d(x)+1)(d(y)+1)}{(|x-y|+1)^2}\right)&\quad n=2\\ C\min\left(\frac{1}{|x-y|+1},\frac{(d(x)+1)(d(y)+1)}{(|x-y|+1)^3}\right)&\quad n=3\end{cases}\,.\label{e:estnablaxnablayGuppN}
\end{align}
	\item[ii)] For any $x\in \Z^n$
	\begin{equation}
	G_N(x,x)\ge cd(x)^{4-n}\,.\label{e:estGlowN}
	\end{equation}
\end{itemize}
\end{theorem}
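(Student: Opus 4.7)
The plan is to transfer the known estimates for the continuous Green's function $G$ of $\Delta^2$ on the unit cube $V=[-1,1]^n$ with Dirichlet boundary conditions to the discrete setting. On the continuous side, the bounds analogous to \eqref{e:estGuppN}--\eqref{e:estnablaxnablayGuppN} hold away from the corners of the square (respectively edges and corners of the cube), by standard elliptic regularity together with the explicit half-space Green's function: in the interior one has translation invariance, while near a flat piece of $\partial V$ the local behaviour is controlled by reflection to a half-space problem. The discrete estimates should therefore be derived by a three-step procedure: rescaling to unit scale, a discrete Caccioppoli inequality to pass from $L^2$ control to pointwise and derivative control, and a compactness argument which allows one to borrow the continuous estimate in the limit.

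Concretely, for the upper bounds \eqref{e:estGuppN}--\eqref{e:estnablaxnablayGuppN}, I would fix $x,y\in V_N$ and set $r=\tfrac12\min(|x-y|+1,d(x)+1,d(y)+1)$. Dilating by $1/r$ maps $V_N$ to a domain in the rescaled lattice $r^{-1}\Z^n$ inside which $x,y$ sit at mutual distance $\gtrsim 1$ and distance $\gtrsim 1$ from the part of the boundary that matters. The rescaled discrete Green's function $\tilde G$ solves the rescaled discrete bilaplace equation $\Delta_1^2\tilde G(\cdot,y)=\delta_y$ on its own scale and is discretely biharmonic away from $y$. The discrete Caccioppoli (reverse Poincaré) inequality, iterated, bounds $\tilde G$ and its difference quotients on a ball of radius $\sim 1$ away from $y$ by the $L^2$ norm of $\tilde G$ on a slightly larger ball. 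It then remains to estimate this $L^2$ norm; this one does by contradiction and compactness: if the desired bound failed along a sequence $N_k\to\infty$, a suitable normalisation of the rescaled Green's functions would converge, thanks to the discrete compactness argument, to a nontrivial continuous biharmonic function on a limiting domain (all of $\R^n$, a half-space, or a rescaled copy of $V$ away from singular boundary points), contradicting the corresponding continuous estimate.

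For the lower bound \eqref{e:estGlowN} I would use the variational characterization
\[
G_N(x,x)\;=\;\max\Bigl\{\psi(x)^2 \;:\; \psi\in\R^{V_N},\ \sum_{z\in V_N}|\Delta_1\psi_z|^2\le 1\Bigr\},
\]
and plug in an explicit test function. A smooth bump supported in a ball of radius $\tfrac12 d(x)$ around $x$, rescaled so that $\psi(x)\sim d(x)^2$, has $\|\Delta_1\psi\|_{\ell^2}^2\sim d(x)^{n}$ by a direct computation (the second differences are of order one on a set of measure $\sim d(x)^n$). This yields $G_N(x,x)\gtrsim d(x)^{4-n}$.

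The main obstacle is the execution of the discrete-to-continuous compactness step in a form that is both quantitative enough and sufficiently flexible to cover the several distinct geometric regimes (interior; near a face but far from an edge/corner; and the relation of these to the weighted factors $d(x),d(y)$). This requires a careful discrete version of the usual Campanato-space machinery, reliable $H^2$-type compactness for discretely biharmonic functions with controlled discrete boundary values, and a classification of possible limits so that in each regime the contradiction really is with an \emph{available} continuous estimate; points close to the corners (or to the edges in $3$D), where the continuous estimate itself is not of the clean form \eqref{e:estGuppN}--\eqref{e:estnablaxnablayGuppN}, are for this reason not covered by the theorem.
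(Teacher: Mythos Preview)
Your lower bound argument for \eqref{e:estGlowN} is correct and essentially identical to the paper's: test the equation with a bump of radius $\sim d(x)$.

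For the upper bounds, however, the sketch has real gaps. You correctly identify the two central tools---the discrete Caccioppoli inequality and a compactness argument transferring continuous decay to the lattice---but you try to run them directly on the Green's function, and that does not close.

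First, your compactness step is not well posed. After normalising so that the putative bound fails, you say the limit is ``a nontrivial continuous biharmonic function on a limiting domain \dots\ contradicting the corresponding continuous estimate''. But a generic nontrivial biharmonic function on $\R^n$ or a half-space contradicts nothing; you would need a Liouville-type statement, and you have not set up the normalisation to force the right growth or boundary behaviour. The paper avoids this by proving the compactness/decay statement once and for all for \emph{arbitrary} discretely biharmonic functions (an $L^\infty$--$L^2$ estimate for $\nabla_h^2 u_h$ on nested cubes), and only afterwards specialising to $G_h$. The contradiction in the compactness argument is then with a clean Campanato-type decay for $\nabla^2 u$ of a continuous biharmonic $u$, not with Green's-function estimates.

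Second, and more seriously, your scheme gives no mechanism for the $|x-y|^{-n}$ decay. Caccioppoli only converts an $L^2$ bound on a ball into an $L^\infty$ bound on a smaller ball; it does not by itself produce decay as the ball moves away from the singularity. The paper obtains this by a duality argument: the inner $L^\infty$--$L^2$ estimate is turned, via testing against an auxiliary solution of $\Delta_h^2 v_h=\diverg_{-h}\diverg_h f_h$, into an \emph{outer} $L^2$ decay estimate on complements of cubes. You are missing this step entirely.

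Third, to get the sharp bounds when $|x-y|\lesssim d(y)$ (and in particular the $\log$ in $n=2$ and the $(|x-y|+1)^{-1}$ in $n=3$), the singularity of $G_N(\cdot,y)$ at $y$ must be removed. The paper does this by subtracting a cut-off times the explicit full-space discrete Green's function $\tilde G_h$, whose asymptotics are known; the difference is genuinely biharmonic near $y$, and the decay machinery applies. Your proposal never addresses this regime.

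Fourth, the separate powers of $d(x)$ and $d(y)$ in the zeroth and first derivative estimates are obtained in the paper by first proving the estimate for $\nabla_{h,x}^2\nabla_{h,y}G_h$ and then integrating along carefully chosen lattice paths from $x$ (or $y$) to the boundary that avoid the singularity. Your single rescaling by $r=\tfrac12\min(|x-y|+1,d(x)+1,d(y)+1)$ does not separate these factors.

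Finally, your last paragraph misreads the theorem: the estimates \eqref{e:estGuppN}--\eqref{e:estnablaxnablayGuppN} \emph{do} hold at all points, including near corners and edges. The phrase ``optimal except possibly near the corners'' concerns sharpness, not validity. Near singular boundary points the continuous second derivative actually decays \emph{faster} than in the flat case, so the corner/edge regimes are covered by the same $L^\infty$--$L^2$ estimate with room to spare; they do not need to be excluded.
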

$G_N$ is symmetric in $x$ and $y$, so we also have the analogous estimates for $|\nabla_yG_N(x,y)|$ and $|\nabla_y^2G_N(x,y)|$.
For the optimality of these estimates, see the discussion after Theorem~\ref{t:mainthmh}.

The estimates \eqref{e:estGuppN} and \eqref{e:estGlowN} immediately provide estimates for the variance and covariance of $\psi$ under $P_N$. From the estimates \eqref{e:estGuppN} and \eqref{e:estGlowN} one can also deduce that $P_N(\Omega_{V_N, +}) \le e^{- c N^{n-1}}$ for $n \in \{2,3\}$
and some $c > 0$
\cite{Kurt,Schweiger2016}.
In addition Theorem~\ref{t:mainthmN} implies the  following continuity estimates.
\begin{corollary}
 Under $P_N$, the random field $\psi$ satisfies
\begin{equation}  \label{e:continuity_field}
	E_N(|\psi_x-\psi_y|^2) \le
	\begin{cases} C|x-y|^{2}\log\left(2+\frac{N}{|x-y|}\right) & n=2, \\
	C |x-y|  & n=3
	\end{cases}.
\end{equation}
\end{corollary}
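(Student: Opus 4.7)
The starting point is that $\psi$ is a centered Gaussian field with covariance $G_N$, so
\begin{equation*}
E_N(|\psi_x - \psi_y|^2) = G_N(x,x) - 2 G_N(x,y) + G_N(y,y).
\end{equation*}
The task therefore reduces to bounding this combination using the derivative estimates of Theorem~\ref{t:mainthmN}. To set up both cases at once, fix a lattice path $z_0 = x, z_1, \ldots, z_k = y$ in $\Z^n$ (for instance the L-shaped path that changes one coordinate at a time), with $k = |x-y|_1$ and the property $|z_i - z_j| \ge c|i-j|$ for a dimensional constant $c > 0$.

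For $n = 3$ a single telescoping in the second variable suffices. Rewrite
\begin{equation*}
G_N(x,x) - 2 G_N(x,y) + G_N(y,y) = [G_N(x,x) - G_N(x,y)] + [G_N(y,y) - G_N(y,x)]
\end{equation*}
and telescope each bracket along the path. By symmetry of $G_N$ and \eqref{e:estnablaxGuppN}, the discrete $y$-gradient satisfies $|\nabla_y G_N(x,\cdot)| \le C\, d(x)^{3-n} = C$; summing the $k$ unit increments bounds each bracket by $Ck$, and the claim follows since $k \le C|x-y|$.

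For $n = 2$ the estimate $|\nabla_y G_N(x,\cdot)| \le C d(x)$ is too weak, so one should expand in both variables:
\begin{equation*}
G_N(x,x) - G_N(x,y) - G_N(y,x) + G_N(y,y) = \sum_{i,j=0}^{k-1} D_i^{(1)} D_j^{(2)} G_N(z_i, z_j),
\end{equation*}
with $D^{(1)}$ and $D^{(2)}$ the unit discrete differences along the path in the first and second argument respectively. Each summand is controlled by \eqref{e:estnablaxnablayGuppN}; since $d(z_i), d(z_j) \le N+1$, one obtains $|\nabla_x\nabla_y G_N(z_i, z_j)| \le C \log(1 + N^2/(|z_i-z_j|+1)^2)$. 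The proof then reduces to the arithmetic estimate
\begin{equation*}
\sum_{i,j=0}^{k-1} \log\!\left(1 + \frac{N^2}{(|i-j|+1)^2}\right) \le C k^2 \log(2 + N/k),
\end{equation*}
which produces $C|x-y|^2 \log(2 + N/|x-y|)$ after noting that $k$ is comparable to $|x-y|$.

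The principal obstacle is this double sum. It can be handled by splitting the inner $m = |i-j|$ sum at $m = N$: for $m \ge N$ one uses $\log(1+t) \le t$ and gets a total contribution $O(k)$, while for $m < N$ the logarithm equals $2\log N - 2\log(m+1) + O(1)$, and Stirling's formula $\log k! = k \log k - k + O(\log k)$ gives an inner bound of order $k \log(2+N/k)$. The degenerate regime $|x-y| \ge cN$ is covered trivially by the variance bound $G_N(z,z) \le C d(z)^{4-n} \le C N^{4-n}$, and the case $x \notin V_N$ (where $\psi_x = 0$ almost surely) reduces via $|x-y| \ge d(y)$ to the diagonal bound \eqref{e:estGuppN} on $G_N(y,y)$.
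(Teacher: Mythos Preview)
Your proof is correct and follows essentially the same strategy as the paper: for $n=3$ you telescope once and use the first-derivative bound \eqref{e:estnablaxGuppN}, and for $n=2$ you telescope in both variables and use the mixed-derivative bound \eqref{e:estnablaxnablayGuppN}, exactly as the paper sketches. One small slip: in your split of the double sum, the $m\ge N$ contribution is $O(kN)$ rather than $O(k)$ (or simply vacuous once you restrict to the non-degenerate regime $k\lesssim N$), but either way it is absorbed by $Ck^2\log(2+N/k)$ and the argument goes through.
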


To show  \eqref{e:continuity_field} for $n=2$ one uses the identity
\begin{equation}  \label{e:continuity_E}
E_N(|\psi_x-\psi_y|^2)=G_N(x,x)-G_N(x,y)-G_N(y,x)+G_N(y,y)\,,
\end{equation}
as well as a discrete counterpart of the identity
\begin{align*}
	&H(x,x)-H(x,y)-H(y,x)+H(y,y) \\
	&\quad= \int_0^1 \int_0^1 \partial_s \partial_t  H(x + s(y-x), x + t(y-x)) \ud s \ud t\,,
\end{align*}
valid for every smooth function $H$, and \eqref{e:estnablaxnablayGuppN}.
For $n=3$ one uses  \eqref{e:continuity_E} and the estimates for $G(x,x) - G(x,y)$ and $G(y,y) - G(y,x)$ provided by \eqref{e:estnablaxGuppN} and its analogue for the $y$-derivative.
Since $\psi$ is a Gaussian field the estimate \eqref{e:continuity_field} and the Kolmogorov continuity criterion imply that
 the rescaled fields $ \psi'_{x'} = N^{-2 + n/2} \psi_{N x'}$ are uniformly H\"older continuous with exponents $\alpha < \alpha_n$ where 
 $\alpha_2 = 1$ and $\alpha_3 = \frac12$. More precisely
 \[ P \Big( \big\{ \psi' : \sup_{x' \ne y'}  \frac{|\psi'_{x'} - \psi'_{y'}|}{|x' - y'|^\alpha} \le K    \big\} \Big) \ge 1 - \varepsilon_\alpha(K)\]
 with $\lim_{K \to \infty} \varepsilon_\alpha(K) =0$.

\bigskip

In order to prove Theorem~\ref{t:mainthmN},
 we need  regularity improving estimates for discrete biharmonic functions and optimal decay estimates
 for various norms in annuli around the singularity. 
  The corresponding estimates for continuous biharmonic functions can be proved using well-established techniques. 
 One insight of this paper is that these estimates can 
be transferred to the discrete realm using two ingredients: a new compactness argument and the 
discrete version of the Cacciopoli (or reverse Poincaré) inequality.
It should also be possible to transfer continuous estimate to discrete estimates by using error estimates
in numerical analysis, see the discussion below Corollary~\ref{c:convergence_G}.

In order to derive the estimates in detail and to highlight the similarities between the continuous and discrete setting, it is convenient to change notation. In particular, we rescale our lattice to have width $h$, while the domain is fixed. We also shift the boundary by $h$ inwards.

Consider the lattice $(h\Z)^n$, where we assume $\frac1h\in\N$. Let $\Lambda_h^n=[0,1]^n\cap(h\Z)^n$, $\inte\Lambda_h^n=\left[\frac1h,1-\frac1h\right]^n\cap(h\Z)^n$ and let $\Delta_h$ be the discrete Laplacian on $(h\Z)^n$. Let $G_h(x,y)$ be the Green's function for $\Delta_h^2=(\Delta_h)^2$ on $\inte\Lambda_h^n$ with zero boundary values on $(h\Z)^n\setminus\inte\Lambda_h^n$. In this setting, Theorem~\ref{t:mainthmN} becomes
\begin{theorem}\label{t:mainthmh}
Let $n=2$ or $n=3$, and let $d(z)$ denote the distance of $z\in \inte\Lambda_h^n$ to $(h\Z)^n\setminus\inte\Lambda_h^n$. Then there exist $c,C>0$ independent of $h$ such that
\begin{itemize}
	\item[i)] for any $x,y\in (h\Z)^n$
\begin{align}
	|G_h(x,y)|&\le C\min\left(d(x)^{2-\frac n2}d(y)^{2-\frac n2},\frac{d(x)^2d(y)^2}{(|x-y|+h)^n}\right)\,,\label{e:estGupp}\\
	|\nabla_{h,x}G_h(x,y)|&\le C\min\left(d(y)^{3-n},\frac{(d(x)+h)d(y)^2}{(|x-y|+h)^n}\right)\,,\label{e:estnablaxGupp}\\	
	|\nabla_{h,x}^2G_h(x,y)|&\le \begin{cases}C\log\left(1+\frac{d(y)^2}{(|x-y|+h)^2}\right)&\quad n=2\\ C\min\left(\frac{1}{|x-y|+h},\frac{d(y)^2}{(|x-y|+h)^3}\right)&\quad n=3\end{cases}\,,\label{e:estnablax2Gupp}\\
	|\nabla_{h,x}\nabla_{h,y}G_h(x,y)|&\le\begin{cases}C\log\left(1+\frac{(d(x)+h)(d(y)+h)}{(|x-y|+h)^2}\right)&\quad n=2\\ C\min\left(\frac{1}{|x-y|+h},\frac{(d(x)+h)(d(y)+h)}{(|x-y|+h)^3}\right)&\quad n=3\end{cases}\,.\label{e:estnablaxnablayGupp}
\end{align}
	\item[ii)] for any $x\in (h\Z)^n$
\begin{equation}
	G_h(x,x)\ge cd(x)^{4-n}\,.\label{e:estGlow}
\end{equation}
\end{itemize}
\end{theorem}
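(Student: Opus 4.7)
The plan follows the roadmap sketched after the theorem: first establish the analogous bounds for the continuous biharmonic Green's function on $(0,1)^n$ with zero Dirichlet data, then transfer them to the discrete level via a compactness argument supported by a discrete Cacciopoli inequality, and finally read off pointwise bounds on $G_h$ and its derivatives through a dyadic decomposition around the singularity.

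First I would record the continuous estimates for the biharmonic Green's function $G$ on $(0,1)^n$ away from corners/edges. These follow from standard elliptic theory: interior analyticity of biharmonic functions gives the factor $|x-y|^{-n}$ away from the diagonal, while near a flat piece of the boundary a local half-space analysis (via explicit Poisson-type kernels or iterated Cacciopoli plus Poincaré) produces the weight $d(x)^2 d(y)^2$ that reflects the second-order vanishing of $G(\cdot,y)$ on the boundary. The statement of Theorem~\ref{t:mainthmh} is then the expected discrete analogue, with $h$ replacing the microscopic scale.

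Next I would prove a discrete Cacciopoli inequality for discrete biharmonic functions: if $\Delta_h^2 u = 0$ on a discrete ball $B_{2r}$, then
\[
\|\nabla_h^2 u\|_{\ell^2(B_r)} \le C r^{-2} \|u\|_{\ell^2(B_{2r})},
\]
obtained by testing $\Delta_h^2 u = 0$ against $\eta^4 u$ with a smooth cutoff $\eta$ and handling the commutator terms with discrete Leibniz-type identities; iteration yields higher-order analogues controlling $\nabla_h^k u$. The compactness step then proceeds by contradiction: suppose some scaling-invariant interior (or boundary) mean-value estimate fails along a sequence $h_j \to 0$; after normalization, the iterated Cacciopoli bounds combined with a discrete-to-continuous interpolation make the rescaled functions precompact (with their discrete derivatives) on fixed balls or half-balls, so a subsequence converges to a continuous biharmonic limit. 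The established continuous estimate applied to the limit contradicts the assumption and so yields the discrete estimate.

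Finally, I would fix $y$ with $d(y)=r$ and apply these discrete regularity estimates to $G_h(\cdot,y)$, which is discrete biharmonic off $\{y\}$ and vanishes outside $\inte\Lambda_h^n$. Dyadic decomposition into annuli $\{|x-y|\sim 2^j h\}$, combined with the global $\ell^2$ energy bound on $G_h(\cdot,y)$ from the definition, yields the pointwise bounds on $G_h$ and its mixed derivatives with the claimed powers of $d(x)$, $d(y)$, and $|x-y|+h$. The lower bound \eqref{e:estGlow} is obtained from the variational identity
\[
G_h(x,x) = \sup_\varphi \bigl( 2\varphi(x) - \|\Delta_h \varphi\|_{\ell^2}^2 \bigr),
\]
tested against a smooth bump of scale $d(x)/2$ centered at $x$, for which $\|\Delta_h \varphi\|_{\ell^2}^2 \lesssim d(x)^{n-4}$. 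The main obstacle is the compactness/transfer step: one must arrange the rescaling so that discrete biharmonic functions converge strongly enough — together with second discrete derivatives — to pass the continuous estimates to the limit, and one must set up a discrete reflection/half-space variant to capture the extra boundary factor $d(x)^2 d(y)^2$ rather than the single vanishing one would get naively.
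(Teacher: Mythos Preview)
Your outline captures the spirit of the paper's approach --- Cacciopoli, compactness transfer from the continuous biharmonic theory, and a bump-function test for the lower bound --- and your treatment of part~ii) is essentially the paper's argument. However, the route you propose for part~i) has a genuine gap.

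The issue is the step where you write ``dyadic decomposition into annuli $\{|x-y|\sim 2^j h\}$, combined with the global $\ell^2$ energy bound on $G_h(\cdot,y)$''. The energy identity gives only $\|\nabla_h^2 G_{h,y}\|_{L^2(\R^n)}^2 = G_h(y,y) \le C\,d(y)^{4-n}$. Applying your $L^\infty$--$L^2$ regularity estimate on a cube $Q_{|x-y|/2}(x)$ away from $y$ then yields at best
\[
|\nabla_{h,x}^2 G_h(x,y)| \le C\,|x-y|^{-n/2}\,\|\nabla_h^2 G_{h,y}\|_{L^2(\R^n)} \le C\,|x-y|^{-n/2}\,d(y)^{2-n/2},
\]
which is off by a factor $|x-y|^{n/2}$ from the target $d(y)^2/|x-y|^n$. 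The global energy does not by itself tell you how the $L^2$ mass of $\nabla_h^2 G_{h,y}$ distributes across dyadic shells, and without that information the dyadic argument stalls at the suboptimal exponent.

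The paper supplies two ingredients you are missing. First, a \emph{duality argument} (Lemma~\ref{l:decaycubeoutsideav} and Theorem~\ref{t:decaycubeoutside}): if $u_h\in\Phi_h$ is biharmonic outside $Q_r(x)$ with $r\ge d(x)$, then $\|\nabla_h^2 u_h\|_{L^2(\R^n\setminus Q_s(x))}\le C(r/s)^{n/2}\|\nabla_h^2 u_h\|_{L^2(\R^n\setminus Q_r(x))}$ for $s\ge r$. This is proved by solving an auxiliary problem with right-hand side $\diverg_{-h}\diverg_h(\nabla_h^2 u_h\,\chi_{\{\,|\cdot - x|>s\}})$ and feeding the inner $L^\infty$--$L^2$ estimate back into the pairing; it is exactly what upgrades $|x-y|^{-n/2}$ to $|x-y|^{-n}$. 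Second, near the singularity the paper does not work with $G_{h,y}$ directly but subtracts off a cut-off full-space fundamental solution $\eta_h\tilde G_{h,y}$ (Section~\ref{s:full_space_green}), so that $H_{h,y}=G_{h,y}-\eta_h\tilde G_{h,y}$ is discretely biharmonic across $y$ and the regularity theory applies there; the singular behaviour is then read off from the explicit asymptotics of $\tilde G_h$. The pointwise bounds are first obtained for the third mixed derivative $\nabla_{h,x}^2\nabla_{h,y}G_h$ and then integrated along carefully chosen lattice paths (avoiding the singularity) down to the lower-order derivatives. Your sketch does not indicate either of these mechanisms, and without them the sharp decay in $|x-y|$ and the correct behaviour at the diagonal are out of reach.
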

Theorem~\ref{t:mainthmN} can be easily derived from Theorem~\ref{t:mainthmh} if one chooses $h=\frac1{2N+2}$, rescales by a factor of $2N+2$ and observes that the estimates are scale-invariant. 
One can also obtain estimates for higher discrete derivatives, see Remark~\ref{r:otherest} below.

Comparison with the Green's function of the continuous bilaplacian in the ball (see \cite[eqn. (48)]{Boggio1905} or \cite[eqn. (2.65) and Thm.\ 4.7]{Gazzola2010}),
a general bounded smooth set \cite[Thm.\  3 and Thm.\ 12]{DallAcqua2004}
or a half-space \cite[eqn. (2.66)]{Gazzola2010}
shows that the estimates in Theorem~\ref{t:mainthmh} are optimal in the interior and near the regular boundary points  (edges for $n=2$ and 
faces for $n=3$). 

Near the singular boundary points (corners for $n=2$ and edges and corners for $n=3$) the continuous regularity theory gives
a more rapid decay of biharmonic functions (and their derivatives) and hence a more rapid decay for the Green's function
 with a decay exponent $\gamma$. Our  compactness argument can be used to establish a similar decay estimate
 for all exponents  $\gamma' < \gamma$. Since the general continuum  theory provides an open interval of admissible
 exponents $\gamma$ (due to possible logarithmic terms) there is no loss in passing to the discrete estimates.
 
The general statement is rather tedious, so let us look instead at an illustrative example, the corner point $0$ of the square $(0,1)^2$. 
In this case the distance of a point $x$ from the corner point is given by $|x|$. If $|x| < \frac14 |y|$ then $|x-y| \ge \frac12 |y| \ge \frac12 d(y)$ and
 the continuous theory implies that
\begin{equation} \label{e:decay_corner}
 |G(x,y)| \le C \left(\frac{|x|}{|y|}\right)^{2 + \theta/2} d^2(y).
\end{equation}
where $0<\theta <\theta_0$, and $\theta_0\approx3.47918$. To  see this use Lemma~\ref{l:decayvertexcontinuous} and note that 
\[\|\nabla^2  G( \cdot, y)\|_{L^2(Q_{|y|/2} \cap (0,1)^2)} \le C |y|^{-1} d^2(y)\] (this follows
from the continuous counterparts of
\eqref{e:upperbound1} and Lemma~\ref{l:decaycubeoutsideav}). Moreover we have\[\sup_{Q_s \cap (0,1)^2} G(\cdot, y)  \le s \| \nabla_h^2 G(\cdot, y\|_{L^2(Q_s \cap (0,1)^2)}\] by the  Sobolev-Poincar\'e inequality and
scaling. 

The estimate  \eqref{e:decay_corner} is better than the  estimate $$G(x,y) \le \frac{d^2(x) d^2(y)}{|x-y|^2} \sim C \frac{d(x)^2}{|y|^2}   d^2(y)$$ if
$$ \frac{d(x)}{|y|} \gg    \left(\frac{|x|}{|y|}\right)^{1 + \theta/4}.$$
Note that this condition holds in particular if $|x|$ and $d(x)$ are comparable and $|x| \ll |y|$. 
 The compactness argument  shows that the discrete Green's function $G_h$  satisfies a counterpart of  \eqref{e:decay_corner} if we replace $\theta$ by any smaller exponent $\theta'$ and $C$ by $C_{\theta'}$. 
 
\bigskip

It is also easy to show that the discrete Green's function converges to the the continuous Green's function.
\begin{corollary}  \label{c:convergence_G}
Let $G(\cdot, y) \in W^{2,2}_0( (0,1)^n)$ denote the continuous Green's function, i.e., the unique weak solution of
$\Delta^2 G(\cdot, y) = \delta_y$. Extend $G_h(x,y)$ to $y \in (0,1)^n$ by piecewise constant interpolation in the second variable. 
Then  for each $y \in (0,1)^n$ the following assertions hold.
\begin{itemize}
\item[i)] \label{it:uniform_conv}
We have \[I^{pc}_h G_h(\cdot, y) \to G(\cdot, y)\quad \text{uniformly}\,,\]
where $I^{pc}_h$ denotes the piecewise constant interpolation in the first variable.
\item[ii)] \label{it:conv_n=2}
If $n=2$ then $I^{pc}_h \nabla_h G_h(\cdot, y)$ converges uniformly to $\nabla G(\cdot, y)$
and \linebreak$I^{pc}_h \nabla^2 G_h(\cdot, y)$ converges to $\nabla^2 G(\cdot, y)$ in $L^p((0,1)^2)$ for all $p < \infty$.
\item[iii)] \label{it:conv_n=3}
If $n=3$ then $I^{pc}_h \nabla_h G_h(\cdot,y)$ is uniformly bounded and converges to $\nabla G(\cdot, y)$ in $L^p((0,1)^3)$ for all $p<\infty$ and locally uniformly in $[0,1]^3 \setminus \{y\}$. Moreover $I^{pc}_h \nabla^2_h G_h(\cdot, y)$ converges to $\nabla^2 G(\cdot, y)$ in $L^p$ for all
$p < 3$. 
\end{itemize}
\end{corollary}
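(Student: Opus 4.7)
The plan is to combine the a priori estimates of Theorem~\ref{t:mainthmh} with an Arzelà–Ascoli compactness argument, identify subsequential limits as the continuous Green's function by passing to the limit in the discrete weak formulation, and then upgrade the mode of convergence using the same pointwise and integral bounds.

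Fix $y \in (0,1)^n$ and let $y_h \in \inte \Lambda_h^n$ denote a lattice point closest to $y$, so that $G_h(\cdot, y) = G_h(\cdot, y_h)$ by the piecewise constant convention. By \eqref{e:estGupp} and \eqref{e:estnablaxGupp} both $G_h(\cdot, y_h)$ and $\nabla_h G_h(\cdot, y_h)$ are uniformly bounded in $h$, with a constant depending on $d(y)$. Applying Arzelà–Ascoli to the piecewise affine interpolation extracts, from any sequence $h \to 0$, a uniformly convergent sub-subsequence with continuous limit $u$ vanishing on $\partial (0,1)^n$; the piecewise constant and piecewise affine interpolations differ by $O(h)$ thanks to the gradient bound, so they share the same limit. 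For every $\phi \in C_c^\infty((0,1)^n)$ discrete integration by parts yields
\[
	\sum_{x \in (h\Z)^n} h^n\, G_h(x, y_h)\, \Delta_h^2 \phi(x) = \phi(y_h),
\]
because $\Delta_h^2 G_h(\cdot, y_h)$ equals $h^{-n}$ at $y_h$ and zero elsewhere, and $\phi$ is supported inside $\inte \Lambda_h^n$ for $h$ small enough. Sending $h \to 0$ using uniform convergence of $G_h$ and of $\Delta_h^2 \phi \to \Delta^2 \phi$ gives $\int u\, \Delta^2 \phi\, dx = \phi(y)$. The uniform $W^{2,2}$ bound on $G_h(\cdot, y)$ provided by \eqref{e:estnablax2Gupp}, together with the vanishing outside $\inte \Lambda_h^n$, forces $u \in W^{2,2}_0((0,1)^n)$; unique weak solvability of $\Delta^2 u = \delta_y$ then identifies $u = G(\cdot, y)$. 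Since every subsequential limit coincides, the entire family $I^{pc}_h G_h(\cdot, y)$ converges uniformly, proving (i).

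For (ii) and (iii) I would proceed in two stages. Away from $y$, higher-order discrete derivative estimates (of the type mentioned in Remark~\ref{r:otherest}) make $\nabla_h G_h(\cdot, y)$ and $\nabla_h^2 G_h(\cdot, y)$ equicontinuous on every compact subset of $[0,1]^n \setminus \{y\}$; a second Arzelà–Ascoli extraction combined with interior smoothness of the already-identified limit $G(\cdot, y)$ yields locally uniform convergence to $\nabla G(\cdot, y)$ and $\nabla^2 G(\cdot, y)$. The uniform bound $\|\nabla_h G_h(\cdot, y)\|_\infty \le C d(y)^{3-n}$ from \eqref{e:estnablaxGupp} is then used to globalize: for $n=2$ it upgrades local uniform convergence to uniform convergence on $[0,1]^2$, and for $n=3$ it gives uniform boundedness and, via dominated convergence, $L^p$ convergence for every $p < \infty$. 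For the second derivatives, the majorants supplied by \eqref{e:estnablax2Gupp} are uniformly integrable to any power $p < \infty$ in $n=2$ and to any $p < 3$ in $n=3$, so dominated convergence again upgrades locally uniform convergence off $\{y\}$ to $L^p$ convergence on all of $(0,1)^n$.

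The main obstacle I anticipate lies in the derivative convergence step: equicontinuity of $\nabla_h^2 G_h$ off the singularity needs discrete third-order bounds that are not explicitly listed in Theorem~\ref{t:mainthmh}, so one must either invoke Remark~\ref{r:otherest} or run a direct discrete Caccioppoli argument on dyadic annuli avoiding $y$. A secondary, mild technical point is verifying that the discrete source $h^{-n}$ concentrated at $y_h$ converges to $\delta_y$ against smooth test functions; this is handled by continuity of $\phi$ together with $|y - y_h| \le \tfrac{\sqrt n}{2}h$.
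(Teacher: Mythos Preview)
Your argument for part (i) is essentially the paper's: extract a subsequential limit via compactness, identify it as $G(\cdot,y)$ by passing to the limit in the weak formulation, and conclude by uniqueness. The paper actually economises here, using only the elementary energy identity $\|\nabla_h^2 G_{h,y}\|_{L^2}^2=G_h(y,y)\le Cd(y)^{4-n}$ and the compact embedding $W^{2,2}\hookrightarrow C^0$, rather than the full pointwise bounds of Theorem~\ref{t:mainthmh}; but your route is equally valid.

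For (ii)--(iii) your strategy diverges from the paper's at the second-derivative step. You propose to obtain local equicontinuity of $\nabla_h^2 G_h$ away from $y$ via third-order bounds (Remark~\ref{r:otherest}) and then run Arzel\`a--Ascoli. The paper instead reuses the compactness machinery of Section~\ref{s:inner_special}: on any cube away from $y$ the function is discretely biharmonic, so the Caccioppoli inequality together with the Kolmogorov--Riesz--Fr\'echet criterion (the argument of Steps~3.1--3.2 in the proof of Lemma~\ref{l:decayfull2}, and its boundary analogues) gives strong $L^2_{\rm loc}$ convergence of $\nabla_h^2 G_h$ without any third-order information. The $L^q$ bounds from Theorem~\ref{t:mainthmh} then upgrade this to global $L^p$. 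Your approach would work but leans on estimates the paper only sketches; the paper's approach is self-contained and exploits exactly the machinery it has already built.

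One small gap: in (ii) you claim the uniform bound $\|\nabla_h G_h\|_\infty\le C$ ``upgrades local uniform convergence to uniform convergence on $[0,1]^2$''. Uniform boundedness alone does not do this; you need equicontinuity near $y$ as well. The fix is immediate: the logarithmic bound \eqref{e:estnablax2Gupp} on $\nabla_h^2 G_h$ gives a global $C^{0,\alpha}$ modulus for $\nabla_h G_h$ (via discrete Sobolev with $p>2$), and then Arzel\`a--Ascoli applies on all of $[0,1]^2$. This is how the paper proceeds.
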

A slight variant of the argument given below shows that the convergence in i) is also uniform in $y$, i.e., that we
have uniform convergence of the piecewise constant interpolation of $G_h$ to $G$ in $(0,1)^n \times (0,1)^n$. 
The proof of  asssertion i)  in Corollary~\ref{c:convergence_G} uses essentially  only the elementary discrete $W^{2,2}$ estimate
in Lemma~\ref{l:upperbound} and the compact embedding from $W^{2,2}$ to $C^0$.
The other two assertion follow from Theorem~\ref{t:mainthmh} and the  local compactness argument in Section~\ref{s:inner_special}. See Section~\ref{s:proofs} for the details.

For $n=2$ quantitative estimates for the discrete $W^{2,2}$ norm of  difference between the solutions of the discretised and the continuous biharmonic equation
under weak assumptions on  the regularity of the continuous solution have been obtained by 
Lazarov \cite{Lazarov1981},  Gavrilyuk, Makarov and Pirnazarov  \cite{Gavrilyuk1983}, Gavrilyuk et al.\ 
\cite{Gavrilyuk1983b} and 
Ivanovi\'c, Jovanovi\'c and  Süli  \cite{Ivanovich1986},  see also Chapter 2.7 in \cite{Jovanovic2014} which includes estimates
for more general fourth order equations in divergence form with variable coefficients. 
More precisely, let $u \in (W^{2,2}_0 \cap W^{s,2})((0,1)^2)$ and let $\hat u_h$ be the solution of

\[\Delta_h^2 \hat u_h = K_h \ast \Delta^2 u \quad  \hbox{in $\inte \Lambda_h^2$}\]

subject to the discrete boundary conditions 
\begin{equation} \label{e:symmetric_discrete_bc}
\hat u_h(x) = 0 \quad \hbox{and} \quad \hat  u_h(x + h e_i) - \hat  u_h(x - h e_i) = 0 \quad \forall x \in \Lambda_h^2 \setminus \inte \Lambda_h^2 \quad  \forall i \in \{1, 2\}.
\end{equation}
Here  $K_h(x) = h^{-2} K(\frac{x}{h})$ and $K(z) = (1 - |z_1|)_+ (1 - |z_2|)_+$.
The boundary condition  \eqref{e:symmetric_discrete_bc} has the advantage that it leads to a higher order of consistency
compared to our boundary condition $u_h = 0$ on $(h\Z)^2 \setminus \inte \Lambda_h$ (this latter condition is arguably more natural from the point
of view of probability and statistical mechanics).
For the discrete $W^{2,2}$ norm the optimal error estimates
\begin{equation}  \label{e:optimal_error}
\| u - \hat u_h\|_{W^{2,2}(\Lambda_h)} \le C |h|^{s-2} \| u\|_{W^{s,2}((0,1)^2)}
\end{equation}
were established in  \cite{Gavrilyuk1983} for $s=3$ and in  
 \cite[Thm. 2.69]{Jovanovic2014}
for $\frac52< s < \frac72$.
In  \cite{Gavrilyuk1983}  the estimate  \eqref{e:optimal_error} is also proved for $s=4$, but under the additional condition that
that the symmetric extension $\tilde u$ of $u$ outside $(0,1)^2$ still belongs to $W^{4,2}$. This holds only if the third normal derivatives
of $u$ (which exist in the sense of trace) vanish. 

Because $K_h \ast \delta = \delta_h$ these estimates can be used to compare the continuous Green's function
$G_y \in W^{2,2}_0$ and the discrete Green's function $\hat G_{h,y}$ (defined using the boundary conditions \eqref{e:symmetric_discrete_bc}
rather than $G_{h,y} = 0$ on $(h\Z)^2 \setminus \inte \Lambda_h$) and one obtains $\| G_y - \hat G_{h,y} \|_{W^{2,2}(\Lambda_h)} \le  C_s h^{s-2} d^{3-s}(y)$ 
for $s \in (\frac52, 3)$. More precise estimates can be obtained if one applies the error estimates to $u = G_y -  \eta \tilde G_y$ where
$\tilde G_y$ is a suitable Green's function in $\R^2$ and $\eta$ is a suitable cut-off function (see below). 

One can also use Theorem~\ref{t:mainthmh} to obtain quantitative  error estimates for $G_h - G$ and its discrete derivatives 
and we  plan to pursue this elsewhere.
\bigskip

Let us briefly discuss some other approaches to prove Theorem~\ref{t:mainthmh}.
For $n=2$ the estimates \eqref{e:estGupp} and \eqref{e:estGlow} 
as well as a discrete BMO estimate
 for the mixed derivative were proved in the second author's MSc thesis  \cite{Schweiger2016}. There a different
 approach was used to obtain the estimates near the corners. One starts from a discrete biharmonic function,
 defines a careful interpolation to get a continuous functions which is biharmonic up to a small error and uses
 the continuous theory to get good estimates for that interpolation which can then be transferred back to the original discrete function. 
 This approach can in principle be extended to $n=3$, but we found the compactness argument more  flexible and more convenient to use.

Hackbusch \cite[Thm. 2.1] {Hackbusch1983} has developed a very general approach 
to derive discrete stability estimates on a scale of Banach spaces from the corresponding continuous
estimates. One advantage of the compactness method is that it avoids the construction of suitable
discrete norms and restriction and prolongation operators which is a bit delicate near the singular boundary points.

Alternatively, for $n=2$ and the symmetric boundary condition \eqref{e:symmetric_discrete_bc} one can use the optimal error estimates \eqref{e:optimal_error} in  connection with the asymptotic expansion of the discrete Green's function $\tilde G_{h,y}$ on $(h \Z)^2$ in \cite{Mangad1967} (see also Section~\ref{s:full_space_green}).
One applies the estimate   \eqref{e:optimal_error} with $s=3$ to $u = G_y -  \eta \tilde G_y$ where $\tilde G_y$ is a suitable Green's function in $\R^2$. It is not difficult to estimate the additional error term 
$w_h = G_h-\eta \tilde{G}_h-\hat{u}_h$  in the discrete $W^{2,2}$ norm by computing
$\Delta_h^2 w_h$ and testing with $w_h$. This yields the estimate
$\| \hat G_{h,y} - G_y\|_{W^{2,2}(\Lambda^2_h)} \le C h$ and the discrete inverse estimate implies that
$\| \hat G_{h,y} - G_y\|_{W^{2,\infty}(\Lambda^2_h)} \le C$. Together with the known  estimates for $\nabla^2 G_y$
one concludes in particular that 
\begin{equation} \label{e:est_hat_Gh}
|\nabla^2_h \hat G_{h,y}| \le C d^2(y)/ (|x-y| + h)^2  \quad \hbox{for $|x-y| \le C d(y)$.}
\end{equation}

To get the optimal estimate for $|x-y| \gg d(y)$ one may proceed as follows.
 From the estimate for $|x-y| \le C d(y)$ one can obtain the crucial discrete $L^\infty-L^2$ estimate \eqref{e:decaycube}  for the second discrete
 derivatives for cubes of length $2r$  that touch the boundary by using the identity $u(x) = \sum_{y \in \inte \Lambda_h}
\hat  G_h(x,y) \Delta_h^2 (\eta u)(y)h^2$  for an arbitrary  lattice function $u$ and  a suitable cut-off function $\eta$ with $|\nabla_h^k \eta| \le C_k r^{-k}$.
 For cubes which do not touch the boundary one can apply the identity $v(x) = \sum_{y \in \inte \Lambda_h} \hat G_h(x,y) \Delta_h^2 (\eta v) (y)h^2$ to $v(x) = u(x) - a - b \cdot x$ where $a$ is the average of $u$ over the cube and $b$ is the average of $\nabla_h u$. Together with the duality argument in   Lemma~\ref{l:decaycubeoutsideav}  and Theorem~\ref{t:decaycubeoutside} and similar estimates for the discrete $y$-derivatives of $G_y - \hat G_{h,y}$ this yields the estimates in  Theorem~\ref{t:mainthmh} for $n=2$ for the Green's function $\hat G_{h,y}$ which satisfies the modified boundary conditions \eqref{e:symmetric_discrete_bc}.
 The same argument applies to $G_h$.

These estimates initially hold for  $\hat G_{h,y}$ and not  for the function $G_{h,y}$ in  Theorem~\ref{t:mainthmh}. 
Note, however, that $\Delta_h^2 (G_{h,y} - \hat G_{h,y}) = 0$ in $\inte \Lambda_h$. 
Using  this fact as well as    careful comparison of the 
different boundary conditions for $\hat G_h$ and $G_h$  one can show that $\| \hat G_{h,y} - G_{h,y} \|_{W^{2,2}(\Lambda_h)}  \le Ch$.
This shows that the estimate \eqref{e:est_hat_Gh} also holds for $G_h$. For the estimates for $|x-y| \gg d(y)$ one can then argue as for 
$\hat G_h$.

\bigskip

The remainder of this paper is organised as follows. 
In Section~\ref{s:preliminaries} we introduce some notation in the discrete setting and recall 
discrete counterparts of the product rule as well as  Sobolev and Poincar\'e estimates.
In Section~\ref{s:discrete_bilaplacian} we give the weak and strong formulation of the discrete bilaplace equation and
prove the Cacciopoli inequality (or reverse Poincar\'e inequality). The proof is very similar to the argument  in the continuous case
based on testing the equation with a cut-off function  times the solution, 
but due to the discrete product rule some additional terms appear. 
In Section~\ref{s:interpolation} we associate to each discrete function a continuous function by discrete convolution with a B-spline
and prove basic estimates of the interpolation. 

Sections~\ref{s:inner_special} and~\ref{s:general_decay} contain the key estimates. The first key ingredient is an $L^\infty-L^2$ estimate for 
the discrete second derivative of
discrete biharmonic functions in cubes which may intersect the boundary (see Theorem~\ref{t:decaycube}).
This estimate is deduced from decay estimates for the second derivative of continuous biharmonic functions
using a discrete version of the Kolmogorov-Riesz-Fr\'echet compactness criterion and the Cacciopoli inequality.
The transition from continuous to discrete decay estimates is carried out in Section~\ref{s:inner_special} separately for interior cubes, cubes near
regular boundary points and cubes near singular boundary points.

The second key estimate is an  $L^\infty$ decay estimate for discretely biharmonic functions in the complement of a cube
(see Lemma~\ref{l:decaycubeoutsideav}  and Theorem~\ref{t:decaycubeoutside}). 
This follows by duality from the $L^\infty-L^2$ estimate in Theorem~\ref{t:decaycube}.
The estimates in the interior and near regular boundary points can alternatively be derived by using discrete scaled
$L^2$ estimates, i.e., by translating the
continuous Campanato regularity theory to the discrete setting (see Dolzmann \cite{Dolzmann1993,Dolzmann1999}).
For the behaviour near the singular boundary points there seems to be no argument, however, which is only based on 
scaled $L^2$-norms and testing. For ease of exposition we  use the compactness approach in all three regimes: interior points, regular boundary
points and singular boundary points. 

In Section~\ref{s:full_space_green} we recall Mangad's  \cite{Mangad1967} asymptotic expansion 
of  a Green's function $\tilde G_h$ of the discrete biharmonic operator in $(h\Z)^n$.
Finally in Section~\ref{s:proofs} we prove Theorem~\ref{t:mainthmh} and Corollary~\ref{c:convergence_G}.  An $L^2$ estimate for the second discrete derivatives
of $G_h$ is easily obtained by testing with $G_h$ and Poincar\'e's inequality.  We then choose a suitable cut-off function $\eta_h$ and use the fact that
$G_h(\cdot, y) - \eta_h(x) \tilde{G}_h(x-y)$ is biharmonic near $x=y$  
to prove estimates for the mixed third discrete derivative $\nabla_{h,x}^2 \nabla_{h,y} G_h$.
The estimates for the lower derivatives now follow essentially   by discrete integration over suitable paths
(the relevant path are the discrete counterparts of the paths used  in  \cite{DallAcqua2004}). 
For the  estimate for the first discrete derivatives for $n=3$ we directly use the discrete Sobolev embedding since integration of the second derivative would generate an unnecessary additional logarithmic term.

\section{Preliminaries}\label{s:preliminaries}
\subsection{Notation}
In the following $C$ denotes a constant that may change from line to line but is independent of $h$, unless stated otherwise.

Let $n\in\N^+$ (most of the time $n=2$ or $n=3$) be the dimension. We use standard notation for continuous quantities: We consider $\R^n$ with the standard basis $e_1,\ldots ,e_n$ and the usual Euclidean norm $|\cdot|$ and the $l^\infty$-norm $|\cdot|_\infty$. The differential of a map $f\colon\R^n\rightarrow\R^m$ is $Df=(D_if_j)_{ij}$. For $\alpha\in \N^n$ we let $D^\alpha=D_1^{\alpha_1}\ldots D_n^{\alpha_n}$. We also use the gradient $\nabla$, the Hessian $\nabla^2$, the Laplacian $\Delta$ and the divergence $\diverg$.

By $B_r(x)$ we denote the open ball of radius $r$ around $x\in\R^n$, and by $Q_r(x)=x+(-r,r)^n$ the open cube with half-sidelength $r$ around $x$. If $x=0$, we omit $x$.

Given $a\in\R^n$, we define $\tau_af=f(\cdot+a)$ for any $f$. This corresponds to shifting $f$ by $-a$.

For a function $f$ we denote by $[f]_\Omega=\frac1{|\Omega|}\int_\Omega f\ud x$ its average over the bounded open set $\Omega$.

We use the standard $L^p$-norms $\|\cdot\|_{L^p}$, Sobolev-norms $\|\cdot\|_{W^{k,p}}$ and Hölder norms $\|\cdot\|_{C^{k,\alpha}}$ and  Hölder seminorms $[\cdot]_{C^{0,\alpha}}$.

For discrete quantities we choose notation in such a way that it resembles the continuous notation. Let $h>0$ be the (typically small) lattice width. We consider the lattice $(h\Z)^n\subset\R^n$.

For $r\in\R$ we define $\lfloor r\rfloor_h:=h\left\lfloor \frac{r}{h}\right\rfloor$, the largest element of $h\Z$ less than or equal to $r$.

For $x\in (h\Z)^n$, $r\in\R$, $r\ge0$ we define $Q^h_r(x)=\{y\in (h\Z)^n\colon|y-x|_\infty\le r\}=\overline{Q_r(x)}\cap(h\Z)^n$. Then $Q^h_r(x)$ is a cube of sidelength $2\lfloor r\rfloor_h$ and center $x$. If $x=0$, we omit $x$.

Given $A_h\subset (h\Z)^n$, we define a corresponding subset $(A_h)_{pc}\subset \R^n$ as \[(A_h)_{pc}=\inte\left(A+\left[-\frac h2,\frac h2\right]^n\right)\,.\] For example, for $x\in(h\Z)^n$, $r\in h\N$, $\left(Q^h_r(x)\right)_{pc}=Q_{r+\frac h2}(x)$. For a function $u_h\colon A_h\rightarrow\R$, we define its piecewise constant interpolation $I_h^{pc}u_h\colon A_{pc}\rightarrow\R$ by $I_h^{pc}u_h(y)=u_h(x)$ on each square $x+\left[-\frac h2,\frac h2\right)^n$, where $x\in A$.

Given $u_h\colon (h\Z)^n\rightarrow \R$ and $x\in(h\Z)^n$, define the forward derivative $D^h_iu_h(x)=\frac1h(u_h(x+he_i)-u_h(x))$ and the backward derivative $D^h_{-i}u_h(x)=\frac1h(u_h(x)-u_h(x-he_i))$. Furthermore, $\nabla_{\pm h}u_h(x)=(D^h_{\pm 1}u_h(x),\ldots, D^h_{\pm n}u_h(x))$ are the forward and backward gradient, $\diverg_{\pm h}u_h(x)=\sum_{i=1}^nD^h_{\pm i}u_{h,i}(x)$ the forward and backward divergence, $\Delta_hu_h(x)=\sum_{i=1}^nD^h_{-i}D^h_iu_h(x)$ the Laplacian, and $\nabla^2_hu_h(x)=(D^h_{-i}D^h_ju_h(x))_{i,j}$ the Hessian matrix. Note that the Hessian matrix is in general not symmetric.

For a multi-index $\alpha\in\N^n$ we also define \[D_{\pm h}^\alpha u_h(x)=(D^h_{\pm 1})^{\alpha_1}\ldots(D^h_{\pm n})^{\alpha_n}u_h(x)\,,\] and for $a\in\N$, $a>2$ we set \[\nabla^a_hu_h(x)=(D^h_{-i_1}D^h_{i_2}\ldots\linebreak D^h_{i_n}u_h(x))_{i_1,i_2,\ldots ,i_n}\,.\]

If $a\in(h\Z)^n$ then $\tau_a$ also defines the shift $\tau_ax=a+x$ of $(h\Z)^n$. We denote $\tau_{\pm he_i}$ by $\tau^h_{\pm i}$. Thus $\tau^h_{\pm i}f_h(x)=f_h(x\pm he_i)$.

The discrete product rule then takes the form
\[D^h_i(f_hg_h)=(D^h_if_h)g_h+\tau^h_if_hD^h_ig_h\,.\]
When dealing with functions of several variables we use a sub- or superscript to indicate the variable with respect to which a derivative is taken. So for example in $\nabla_{h,x}\nabla_{h,y}G_h(x,y)$ we take one gradient in each variable.

As mentioned in the introduction, we set $\Lambda_h^n=[0,1]^n\cap(h\Z)^n$ and $\inte\Lambda_h^n=\left[\frac1h,1-\frac1h\right]^n\cap(h\Z)^n$. We also set $\partial\Lambda_h^n=\Lambda_h^n\setminus\inte\Lambda_h^n$.

\subsection{Function spaces and inequalities}\label{s:functionspaces}
Let $u_h,v_h\colon (h\Z)^n\rightarrow \R$. For $\Omega\subset\R^n$ measurable, $p\in[1,\infty]$, $k\in\N$, $\alpha\in[0,1]$ we define (slightly abusing notation)
\begin{align*}
	\|u_h\|_{L^p(\Omega)}&:=\|I_h^{pc}u_h\|_{L^p(\Omega)}\,,\\
	(u_h,v_h)_{L^2(\Omega)}&:=\left(I_h^{pc}u_h,I_h^{pc}v_h\right)_{L^2(\Omega)}\,,\\
	\|u_h\|_{W^{k,p}(\Omega)}&:=\left(\sum_{|\alpha|\le k}\|I_h^{pc}D_h^\alpha u_h\|_{L^p(\Omega)}^p\right)^\frac1p\,,\\
	[u_h]_{C^{0,\alpha}_h(\Omega)}&=\sup_{\substack{x,y\in\Omega\\|x-y|\ge h}}\frac{|I_h^{pc}u_h(x)-I_h^{pc}u_h(y)|}{|x-y|^\alpha}\,.
\end{align*}
For $[\cdot]_{C^{0,\alpha}_h}$ we add the index $h$ to emphasize the fact that we only take the supremum over $x,y$ with $|x-y|\ge h$.

For $A_h\subset(h\Z)^n$ these definitions take a familiar form. For example, if $p<\infty$
\begin{align*}
	\|u_h\|_{L^p((A_h)_{pc})}&=\left( \sum_{x\in A_h}h^n|u_h(x)|^p\right)^\frac1p\,,\\
	[u_h]_{C^{0,\alpha}_h((A_h)_{pc})}&=\sup_{\substack{x,y\in A_h\\x\neq y}}\frac{|u_h(x)-u_h(y)|}{|x-y|^\alpha}
\end{align*}

We extend these definitions to vector-valued functions by taking the Euclidean norm of the norms of the components.

We also set $[u_h]_\Omega=[I_h^{pc}u_h]_\Omega=\frac{1}{|\Omega|}\int_\Omega I_h^{pc}u_h$.

We then have the discrete analogues of Poincaré and Sobolev inequalities. All of them can be proved easily by applying their continuous counterpart to the piecewise multilinear interpolation of the function. We state the results that we will need.
\begin{lemma}[Poincaré inequality on cubes with 0 boundary values]
Let $p\in[1,\infty]$, let $u_h\colon(h\Z)^n\rightarrow\R$, $x\in (h\Z)^n$, $r\in h\N+\frac h2$, and suppose that $u_h=0$ on at least one of the faces of $Q^h_r(x)$. Then
\[\|u_h\|_{L^p(Q_r(x))}\le Cr\|\nabla_hu_h\|_{L^p(Q_r(x))}\]
where $C$ is independent of $h$ and $r$.
\end{lemma}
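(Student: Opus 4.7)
The plan is to carry out a direct telescoping argument, which is slightly cleaner in the present setting than passing through the multilinear interpolation (although the latter, as the authors suggest, also works). After relabelling coordinates I may assume that $x=0$ and that the distinguished face on which $u_h$ vanishes is the ``bottom'' face $\{y\in Q^h_r : y_1 = -\lfloor r\rfloor_h\}$. Set $m=\lfloor r\rfloor_h/h\in\N$, so that the lattice points of $Q^h_r$ have first coordinate $jh$ with $-m\le j\le m$, and so that $2mh\le 2r$.

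For any $y=(jh,y')\in Q^h_r$ the boundary condition $u_h(-mh,y')=0$ together with a telescoping sum gives
\[ u_h(y)\;=\;h\sum_{k=-m}^{j-1} D_1^h u_h(kh,y'). \]
For $1\le p<\infty$ Hölder's inequality and $h(j+m)\le 2r$ yield
\[ |u_h(y)|^p \;\le\; (2r)^{p-1}\, h\sum_{k=-m}^{j-1} |D_1^h u_h(kh,y')|^p, \]
while for $p=\infty$ the same telescoping identity gives $|u_h(y)|\le 2r\,\|D_1^h u_h\|_{L^\infty(Q_r)}$ directly.

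Multiplying by $h^n$ and summing over $y\in Q^h_r$, then exchanging the order of summation and noting that for each fixed $(kh,y')$ the number of admissible $j>k$ is at most $2m+1=2r/h$, I obtain
\[ \sum_{y\in Q^h_r} h^n |u_h(y)|^p \;\le\; (2r)^p \sum_{k=-m}^{m-1}\sum_{y'} h^n |D_1^h u_h(kh,y')|^p \;\le\; (2r)^p \|\nabla_h u_h\|_{L^p(Q_r)}^p, \]
which is the claimed inequality with $C=2$. The last inequality uses that each forward difference $D_1^h u_h(kh,y')$ with $-m\le k\le m-1$ depends only on lattice values in $Q^h_r$, so the corresponding piecewise constant cells $(kh,y')+[-h/2,h/2)^n$ lie inside $Q_r=(-r,r)^n$.

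No real obstacle arises; the only bookkeeping point is the slight mismatch between the continuous cube $Q_r=(-mh-h/2,mh+h/2)^n$ and the convex hull $[-mh,mh]^n$ of the lattice points, which is why the direct telescoping argument is preferable to the multilinear interpolation route: it avoids the need to relate norms on these two cubes. Extension to general $p\in[1,\infty]$ and to any of the $2n$ faces is immediate by symmetry and the above covers both the finite-$p$ and the $p=\infty$ cases uniformly.
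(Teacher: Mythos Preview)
Your telescoping argument is correct and yields the explicit constant $C=2$. The paper takes a different route: it only states that the discrete Poincaré and Sobolev inequalities ``can be proved easily by applying their continuous counterpart to the piecewise multilinear interpolation of the function'', without giving details. Your direct approach is more elementary and self-contained, avoiding the need to compare the $L^p$ norms of the multilinear interpolant and its continuous derivatives with the discrete $L^p$ norms of $u_h$ and $\nabla_h u_h$ (which is straightforward but involves some bookkeeping near the boundary of the cube, as you yourself note). The interpolation route, on the other hand, has the advantage of proving all the inequalities in Section~\ref{s:functionspaces} uniformly by a single reduction, including the Sobolev--Poincaré embedding, where a direct discrete proof would be more involved.
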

\begin{lemma}[Poincaré inequality on annuli with 0 boundary values]
Let $p\in[1,\infty]$, $u_h\colon(h\Z)^n\rightarrow\R$, let $x\in (h\Z)^n$, $r,s\in h\N+\frac h2$, $s<r$ and suppose that $u_h=0$ on at least one of the faces of $Q^h_r(x)$. Then
\[\|u_h\|_{L^p(Q_r(x)\setminus Q_s(x))}\le Cr\|\nabla_hu_h\|_{L^p(Q_r(x)\setminus Q_s(x))}\]
where $C$ only depends on $\frac sr$, $p$ and $n$.
\end{lemma}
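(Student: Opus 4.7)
The plan is to follow the authors' stated strategy: apply the continuous Poincaré inequality to the piecewise multilinear interpolation of $u_h$. By translation assume $x=0$, and by rescaling space by the factor $1/r$ (which rescales the lattice width from $h$ to $h/r$ but preserves all relevant structure) reduce to the case $r=1$ with $\sigma := s/r\in(0,1)$. Because $r,s\in h\N+\frac{h}{2}$, the faces of the continuous cubes $Q_1$ and $Q_\sigma$ lie exactly on the midplanes between adjacent lattice points, so the open annulus $A_\sigma := Q_1\setminus\overline{Q_\sigma}$ equals, up to a null set, the disjoint union of cells $z+(-\frac{h}{2},\frac{h}{2})^n$ indexed by $z\in Q^h_1\setminus Q^h_\sigma$. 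This alignment is the only geometric subtlety and is precisely what the half-integer shift convention buys.

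Let $\bar u := I^{ml}_h u_h$ denote the tensor-product multilinear interpolation of $u_h$, defined on each closed grid cell $z+[0,h]^n$. On any union $\Omega$ of such cells one has the standard comparisons $\|\bar u\|_{L^p(\Omega)} \sim \|u_h\|_{L^p(\Omega)}$ and $\|\nabla \bar u\|_{L^p(\Omega)}\le C\|\nabla_h u_h\|_{L^p(\Omega)}$, with constants depending only on $n,p$. The gradient bound holds because on each cell the weak partial derivative $\partial_i\bar u$ is a convex combination of the $2^{n-1}$ forward discrete derivatives $D^h_i u_h$ at the cell corners lying on an edge parallel to $e_i$, so Jensen's inequality gives a cell-by-cell bound that sums to the stated estimate. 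Moreover, if $u_h$ vanishes at every lattice point of one face of $Q^h_1$, then $\bar u$ vanishes identically on the corresponding $(n-1)$-dimensional face $F\subset\partial Q_1$.

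Now apply the continuous Poincaré inequality on $A_\sigma$: for $n\ge 2$ this is a bounded connected Lipschitz domain, and $F\cap\partial A_\sigma$ has positive $(n-1)$-dimensional measure, so a standard contradiction/compactness argument based on the Rellich-Kondrachov embedding yields
\[\|v\|_{L^p(A_\sigma)}\le C(\sigma,n,p)\,\|\nabla v\|_{L^p(A_\sigma)}\qquad \text{for all } v\in W^{1,p}(A_\sigma)\text{ with } v|_F=0.\]
Applying this to $v=\bar u$ and translating both sides back to discrete norms via the two comparisons above proves the claim at $r=1$; undoing the scaling restores the factor $r$, giving the correct dependence $C=C(\sigma,n,p)$. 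The main thing to verify with care is that the discrete derivatives appearing on the right are all evaluated at lattice points inside $Q_r\setminus Q_s$ rather than in a slightly enlarged set that could exit the domain; the grid alignment guaranteed by $r,s\in h\N+\frac{h}{2}$ makes this bookkeeping automatic, so no genuine obstacle remains.
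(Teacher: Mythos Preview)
Your proposal is correct and follows exactly the approach the paper indicates: the paper does not give a proof of this lemma beyond the single sentence ``All of them can be proved easily by applying their continuous counterpart to the piecewise multilinear interpolation of the function,'' and your argument carries this out in the expected way (reduce by scaling, compare $u_h$ and $\nabla_h u_h$ with the multilinear interpolant $\bar u$ and its gradient, invoke the continuous Poincar\'e inequality on the annulus with vanishing trace on one face). One small caution: your final claim that the half-integer alignment makes the bookkeeping ``automatic'' is slightly optimistic, since the multilinear cells $z+[0,h]^n$ are offset by $h/2$ from the piecewise-constant cells $z+[-h/2,h/2)^n$ that tile $Q_r\setminus Q_s$, so the gradient comparison $\|\nabla\bar u\|_{L^p}\le C\|\nabla_h u_h\|_{L^p}$ over the annulus picks up discrete derivatives at a single layer of lattice points adjacent to $\partial Q_s$ and $\partial Q_r$; this is harmless (each such term is already counted, or differs from a counted term by a bounded factor via the overlap of cells), but it does require one more line rather than none.
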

\begin{lemma}[Sobolev-Poincaré inequality on cubes with 0 boundary values]
Let $p\in[1,\infty]$, $u_h\colon(h\Z)^n\rightarrow\R$, let $x\in (h\Z)^n$, $r\in h\N+\frac h2$, and suppose that $u_h=0$ on at least one of the faces of $Q^h_r(x)$.

If $q\in[1,\infty]$ is such that $\frac nq+1\ge\frac np$ and $(p,q)\neq(n,\infty)$, then
\[\|u_h\|_{L^q(Q_r(x))}\le Cr^{1+\frac nq-\frac np}\|\nabla_hu_h\|_{L^p(Q_r(x))}\]
and if $\alpha\in(0,1]$ is such that $\alpha+\frac np\le1$, then
\[[u_h]_{C_h^{0,\alpha}(Q_r(x))}\le Cr^{1-\frac np-\alpha}\|\nabla_hu_h\|_{L^p(Q_r(x))}\,.\]
\end{lemma}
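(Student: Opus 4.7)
The plan is to transfer the continuous Sobolev-Poincaré and Morrey inequalities to the discrete setting via the piecewise multilinear (or $Q_1$) interpolation $\tilde u_h\colon \R^n\to\R$ of $u_h$, i.e.\ the continuous function that agrees with $u_h$ on $(h\Z)^n$ and restricts to a tensor product of affine functions on each lattice cell $y+[0,h]^n$. Then $\tilde u_h\in W^{1,p}_{\mathrm{loc}}(\R^n)$, and it inherits the zero boundary condition: if $u_h$ vanishes at every lattice point on a face of $Q^h_r(x)$, the value of $\tilde u_h$ at any point of the corresponding continuous face is a convex combination of those zeros and therefore vanishes.

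Two routine cell-by-cell estimates feed the argument. First, a norm equivalence: both $\|I_h^{pc} u_h\|_{L^p(Q_r(x))}$ and $\|\tilde u_h\|_{L^p(Q_{r-h/2}(x))}$ are comparable, up to a constant depending only on $n$ and $p$, to the discrete $\ell^p$-norm $\bigl(\sum_{y\in Q^h_r(x)} h^n|u_h(y)|^p\bigr)^{1/p}$, since on each cell the multilinear polynomial is determined by its $2^n$ nodal values and all norms on this finite-dimensional space are equivalent. Second, a gradient bound: on each cell $y+(0,h)^n$ the classical partial derivative $\partial_i \tilde u_h$ is a multilinear average (in the coordinates other than $i$) of the difference quotients $D^h_i u_h$ at the corners, so
\[
\|\nabla \tilde u_h\|_{L^p(Q_{r-h/2}(x))}\le C \|\nabla_h u_h\|_{L^p(Q_r(x))}.
\]

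I would then apply the continuous Sobolev-Poincaré, respectively Morrey, inequality to $\tilde u_h$ on $Q_{r-h/2}(x)$, a cube whose side length is comparable to $r$ and on one of whose faces $\tilde u_h$ vanishes. By the usual dilation to $Q_1$ this gives
\[
\|\tilde u_h\|_{L^q(Q_{r-h/2}(x))}\le C r^{1+\frac{n}{q}-\frac{n}{p}}\|\nabla \tilde u_h\|_{L^p(Q_{r-h/2}(x))},
\]
and analogously $[\tilde u_h]_{C^{0,\alpha}(Q_{r-h/2}(x))}\le C r^{1-\frac{n}{p}-\alpha}\|\nabla\tilde u_h\|_{L^p(Q_{r-h/2}(x))}$ in the Hölder regime. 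Combined with the two cell-by-cell estimates, this immediately delivers the $L^q$ bound. For the discrete Hölder seminorm, the restriction $|x-y|\ge h$ in the definition of $[\cdot]_{C^{0,\alpha}_h}$ is what makes the transfer work: if $x,y$ lie in the piecewise constant cells centred at lattice points $\xi,\eta$, then $|I_h^{pc}u_h(x)-I_h^{pc}u_h(y)|=|\tilde u_h(\xi)-\tilde u_h(\eta)|$ while $|\xi-\eta|\le |x-y|+h\sqrt{n}\le C|x-y|$, so $[u_h]_{C^{0,\alpha}_h(Q_r(x))}\le C[\tilde u_h]_{C^{0,\alpha}(Q_{r-h/2}(x))}$. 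The only mildly delicate bookkeeping is the slight mismatch between $Q_r(x)$ and $Q_{r-h/2}(x)$, namely a thin boundary layer of width $h/2$; this is absorbed into the cell-level norm equivalences at no cost in the scaling.
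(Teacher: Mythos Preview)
Your proposal is correct and follows precisely the route the paper indicates: the paper does not give a detailed proof but states that ``all of them can be proved easily by applying their continuous counterpart to the piecewise multilinear interpolation of the function,'' which is exactly your argument. Your write-up simply fills in the cell-by-cell norm equivalences and the handling of the discrete H\"older seminorm that the paper leaves implicit.
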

\section{The discrete bilaplacian equation}\label{s:discrete_bilaplacian}
\subsection{Definitions and basic properties}
We consider the space of functions 
\[\Phi_h=\{u_h\colon(h\Z)^n\rightarrow\R\colon u_h(x)=0\ \forall x\in(h\Z)^n\setminus\inte\Lambda_h^n\}\,.\]

The discrete bilaplacian equation on $\Lambda_h^n$ with 0 boundary data is the equation
\begin{equation}
	\Delta_h^2u_h=f_h \ \text{ in }\inte\Lambda_h^n\label{e:bilaplace}
\end{equation}
where $f_h\colon(h\Z)^n\rightarrow\R$ is given and we are looking for a solution $u_h\in\Phi_h$. 

This equation is the discrete analogue of the bilaplace equation with clamped boundary conditions,
\[\begin{array}{rl}
		\Delta^2u=f & \text{in }[0,1]^n\\
		u=0 & \text{on }\partial[0,1]^n\\
		D_\nu u=0& \text{on }\partial[0,1]^n
	\end{array}\]

If we multiply \eqref{e:bilaplace} with a test function $\varphi_h\in\Phi_h$ and use summation by parts, we obtain the weak form of the bilaplace equation
\begin{equation}
	(\nabla_h^2u_h,\nabla_h^2\varphi_h)_{L^2(\R^n)}=(f_h,\varphi_h)_{L^2(\R^n)}\quad\forall\varphi_h\in\Phi_h\,.\label{e:bilaplaceweak}
\end{equation}
It is easy to check that \eqref{e:bilaplace} and \eqref{e:bilaplaceweak} are equivalent.

Written as a sum over lattice points, \eqref{e:bilaplaceweak} becomes
\[h^n\sum_{x\in\Lambda_h^n}\nabla_h^2u_h(x):\nabla_h^2\varphi_h(x)=h^n\sum_{x\in\inte\Lambda_h^n}f_h(x)\varphi_h(x)\,.\]
Observe that the sum on the left-hand side has nonzero terms for $x\in\Lambda_h^n$, whereas the right-hand side has nonzero terms only for $x\in\inte\Lambda_h^n$.

If we choose $\varphi_h=u_h$ in \eqref{e:bilaplaceweak}, we obtain
\[(\Delta_h^2u_h,u_h)_{L^2(\R^n)}=(\nabla_h^2u_h,\nabla_h^2u_h)_{L^2(\R^n)}=\|\nabla_h^2u_h\|^2_{L^2(\R^n)}\,.\]
Hence $\Delta_h^2$, seen as a linear operator on $\Phi_h$, is positive definite and hence invertible, and so \eqref{e:bilaplace} has a unique solution for any right-hand side $f_h$.

The discrete Green's function $G_h$ is now defined as the inverse of $\Delta_h^2$ (considered as a matrix operating on $\R^{\inte\Lambda_h^n}$ with the scalar product $\langle u_h,v_h\rangle=(u_h,v_h)_{L^2(\R^n)}$).

Let us also give an alternative description of $G_h$: The discrete delta function is given as 
\[\delta_{h,x}(y)=\begin{cases}\frac1{h^n}&\text{if }x=y\\0 &\text{otherwise}\end{cases}\,.\]
The discrete Green's function $G_h$ of $\Lambda^n_h$ is then the function $(h\Z)^n\times (h\Z)^n\rightarrow\R$ such $G_h(x,y)=0$ when $y\notin\inte\Lambda_h^n$ and such that $G_h(\cdot,y)$ is the unique solution in $\Phi_h$ of
\[\Delta_h^2u_h=\delta_{h,y} \text{ in }\inte\Lambda_h^n\]
when $y\in\inte\Lambda_h^n$.

As in the continuous case one can easily show that $G_h$ is symmetric in $x$ and $y$. We will frequently denote $G_h(x,\cdot)$ and $G_h(\cdot,y)$ by $G_{h,x}$ and $G_{h,y}$, respectively.

Let us return our attention to \eqref{e:bilaplaceweak} for a moment. If $f_h$ is given in divergence form as $\diverg_h\diverg_{-h}g_h$, this equation takes the form
\[(\nabla_h^2u_h,\nabla_h^2\varphi_h)_{L^2(\R^n)}=(g_h,\nabla_h^2\varphi_h)_{L^2(\R^n)}\]
and if we choose $\varphi_h=u_h$, we obtain the energy estimate 
\[\|\nabla_h^2u_h\|_{L^2(\R^n)}\le\|g_h\|_{L^2(\R^n)}\,.\]

\subsection{Caccioppoli inequalities}
We will need a discrete counterpart of the Cacciopoli (or reverse Poincaré) estimate for biharmonic functions (see e.g. \cite[Cap. II, Lemma 1.II]{Campanato1980}). It can be derived by testing $\Delta_h^2u_h=0$ with $\eta_hu_h$ for a suitable cut-off function $\eta_h$ and some manipulations of the error terms.
\begin{lemma}\label{l:Cacc}
Let $u_h\in\Phi_h$, $x\in(h\Z)^n$, $r>0$ and assume that $\Delta_h^2u_h(y)=0$ for all $y\in Q_{r-h}^h(x)\cap\inte\Lambda_h^n$. Then for any $0<s\le r-4h$ we have
\[\|\nabla_h^2u_h\|_{L^2(Q_{s}(x))}^2\le\frac{C}{(r-s)^4}\|u_h\|_{L^2(Q_{r}(x))}^2+\frac{C}{(r-s)^2}\|\nabla_hu_h\|_{L^2(Q_{r}(x))}^2\,.\]
\end{lemma}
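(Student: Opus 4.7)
The strategy is the discrete analogue of the standard continuous Caccioppoli argument for biharmonic functions: pick a smooth cutoff $\eta_h$ concentrated between $Q_s(x)$ and $Q_r(x)$, test the discrete weak formulation \eqref{e:bilaplaceweak} against $\varphi_h = \eta_h^4 u_h$, expand $\nabla_h^2(\eta_h^4 u_h)$ via the discrete product rule, and use Young's inequality to absorb the leading term back on the left-hand side.

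First I would construct $\eta_h$ on $(h\Z)^n$ by restricting a smooth function $\eta \in C_c^\infty(\R^n)$ satisfying $0 \le \eta \le 1$, $\eta \equiv 1$ on $Q_s(x)$, $\supp \eta \subset Q_{(r+s)/2}(x)$, and $\|\nabla^k \eta\|_\infty \le C(r-s)^{-k}$ for $k \le 2$. The bound $s \le r - 4h$ ensures both that the discrete derivatives satisfy $|D_{\pm h}^\alpha \eta_h| \le C(r-s)^{-|\alpha|}$ for $|\alpha| \le 2$, and that every lattice shift involved in $\nabla_h^2(\eta_h^4 u_h)$ keeps the support of $\varphi_h$ strictly inside $Q^h_{r-h}(x)$. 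Since $u_h \in \Phi_h$, so is $\varphi_h$, and \eqref{e:bilaplaceweak} becomes
\begin{equation*}
  (\nabla_h^2 u_h, \nabla_h^2(\eta_h^4 u_h))_{L^2(\R^n)} = h^n \sum_{y \in \inte \Lambda_h^n} \Delta_h^2 u_h(y)\, \varphi_h(y) = 0,
\end{equation*}
because on the support of $\varphi_h$ (which lies in $Q^h_{r-h}(x) \cap \inte \Lambda_h^n$) the function $\Delta_h^2 u_h$ vanishes by hypothesis.

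Next I would iterate the discrete product rule $D_i^h(fg) = (D_i^h f)g + (\tau_i^h f)(D_i^h g)$ to expand $\nabla_h^2(\eta_h^4 u_h)$ into a principal term of the form (shifted) $\eta_h^4 \nabla_h^2 u_h$ plus error terms in which one or two discrete derivatives land on $\eta_h^4$. The $L^2$ inner product with $\nabla_h^2 u_h$ then yields $\|\eta_h^2 \nabla_h^2 u_h\|_{L^2}^2$ (up to innocuous shifts) balanced against mixed terms that can be bounded pointwise by
\[C \eta_h^3 |\nabla_h \eta_h| |\nabla_h u_h| |\nabla_h^2 u_h| + C\bigl(\eta_h^2 |\nabla_h \eta_h|^2 + \eta_h^3 |\nabla_h^2 \eta_h|\bigr) |u_h| |\nabla_h^2 u_h|.\]
Cauchy--Schwarz and Young's inequality (pairing $\eta_h^2 |\nabla_h^2 u_h|$ with the remainder), followed by absorption of a small multiple of $\|\eta_h^2 \nabla_h^2 u_h\|_{L^2}^2$ back to the left-hand side and insertion of the derivative bounds on $\eta_h$, give the claimed estimate since $\eta_h \equiv 1$ on $Q^h_s(x)$.

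The main obstacle is the bookkeeping of the shift operators $\tau_{\pm i}^h$ that the discrete product rule injects into every term: quantities like $\tau_i^h \eta_h^4$ and $\tau_i^h \tau_j^h u_h$ appear throughout, so the ``principal'' term is not literally $\eta_h^4 |\nabla_h^2 u_h|^2$ but a sum of products of $\nabla_h^2 u_h$ with shifted copies of itself weighted by shifts of $\eta_h^4$. One must verify that all such shifted quantities remain supported in $Q^h_{r-h}(x) \cap \inte \Lambda_h^n$ (this is precisely what $s \le r - 4h$ buys) and that shifts of $\eta_h^k$ and $\nabla_h \eta_h$ remain bounded by $C(r-s)^{-k}$, which also requires $h \le c(r-s)$ and is again guaranteed by $s \le r - 4h$. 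Once this is in place, every shifted $L^2$ norm is equivalent to the unshifted one over $Q_r(x)$ up to a constant depending only on $n$, and the absorption argument goes through as in the continuous case.
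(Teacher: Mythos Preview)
Your overall strategy --- test the weak form with $\eta_h^4 u_h$, expand via the discrete product rule, and absorb --- is exactly the paper's. The gap is in the step where you assert the mixed terms are bounded pointwise by $C\eta_h^3|\nabla_h\eta_h|\,|\nabla_h u_h|\,|\nabla_h^2 u_h| + \ldots$. This is false in the discrete setting: the discrete chain rule gives
\[D_j^h(\eta_h^4) = \bigl(\eta_h^3 + \eta_h^2(\tau_j^h\eta_h) + \eta_h(\tau_j^h\eta_h)^2 + (\tau_j^h\eta_h)^3\bigr)\,D_j^h\eta_h,\]
and the last two summands carry a factor $(\tau_j^h\eta_h)^2$, not $\eta_h^2$. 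After Young's inequality these contribute $\epsilon\|(\tau_j^h\eta_h)^2\nabla_h^2 u_h\|^2$, which cannot be absorbed into the principal term $\|\eta_h^2\nabla_h^2 u_h\|^2$: the \emph{weight} is shifted, and the equivalence of shifted and unshifted \emph{unweighted} $L^2$ norms over $Q_r(x)$ that you invoke does not address this. (Incidentally, your worry that the principal term itself carries shifts is misplaced --- with the form $D_i^h(fg)=f\,D_i^h g + (D_i^h f)(\tau_i^h g)$ applied twice one gets exactly $\eta_h^4\,D_{-i}^h D_j^h u_h$ as the leading piece.)

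The paper's resolution, which is the one genuinely new ingredient over the continuous argument, is to split $D_j^h(\eta_h^4) = 4\eta_h^3 D_j^h\eta_h + R$ with $|R| \le Ch(r-s)^{-2}$, using that each difference $\tau_j^h\eta_h^k - \eta_h^k$ contributes an extra factor $h$. The main part $4\eta_h^3 D_j^h\eta_h$ now has the $\eta_h^2$ factor and absorbs as you describe. For the remainder one writes $h\,D_{-i}^h D_j^h u_h = \tau_j^h D_{-i}^h u_h - D_{-i}^h u_h$, trading the factor $h$ in $R$ against one order of the difference quotient on $u_h$; the resulting contribution is then bounded directly by $C(r-s)^{-2}\|\nabla_h u_h\|_{L^2(Q_r(x))}^2$ with no absorption needed. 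This ``move a factor $1/h$ across'' device is precisely what your sketch is missing.
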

The proof is similar to the continuous case. However, the fact that the discrete chain rule only holds up to translations generates additional error terms. Therefore we will give the somewhat lenghty proof in full detail. The proof is adapted from that of Lemma 2.9 in \cite{Dolzmann1993}.
\begin{proof}
By replacing $r$ by $\lfloor r-\frac h2\rfloor_h+\frac h2$ and $s$ by $\lfloor s-\frac h2\rfloor_h+\frac{3h}2$, we can assume that $r,s\in h\Z+\frac h2$ and $s\le r-3h$.

Choose a discrete cut-off function $\eta_h$ with support in $Q_{r-2h}(x)$ that is 1 on $Q_{s+h}(x)$ und such that $|\nabla_h^\kappa\eta|\le\frac{C}{(r-s)^\kappa}$ for $\kappa\leq2$. Note that $\eta_h^4u_h\in\Phi_h$, and $\eta_h^4u_h=0$ whenever $\Delta_h^2u_h\neq0$. Thus the weak form of \eqref{e:bilaplaceweak} with $\varphi_h=\eta_h^4u_h$ is
\[0=\left(\Delta_h^2u_h,\eta_h^4u_h\right)_{L^2(\R^n)}=\left(\nabla_h^2u_h,\nabla_h^2(\eta_h^4u_h)\right)_{L^2(\R^n)}\,.\]
We can expand the right-hand side and obtain
\begin{align*}
	0&=\left(\nabla_h^2u_h,\nabla_h^2(\eta_h^4u_h)\right)_{L^2(\R^n)}\\
	&=\sum_{i,j}^n\Big(D^h_{-i}D^h_ju_h,\eta_h^4D^h_{-i}D^h_ju_h\Big)_{L^2(\R^n)}\\
	&\quad+\sum_{i,j}^n\Big(D^h_{-i}D^h_ju_h,D^h_j(\eta_h^4)\tau^h_jD^h_{-i}u_h+D^h_{-i}(\eta_h^4)\tau^h_{-i}D^h_ju_h\Big)_{L^2(\R^n)}\\
	&\quad+\sum_{i,j}^n\Big(D^h_{-i}D^h_ju_h,D^h_{-i}D^h_j(\eta_h^4)\tau^h_{-i}\tau^h_ju_h\Big)_{L^2(\R^n)}\,.
\end{align*}
We can rewrite this as
\begin{align}
	&\|\eta_h^2\nabla_h^2u_h\|^2_{L^2(\R^n)}=\sum_{i,j}^n\left\|\eta_h^2D^h_{-i}D^h_ju_h\right\|^2_{L^2(\R^n)}\nonumber\\
	&\le\left|\sum_{i,j}^n\Big(D^h_{-i}D^h_ju_h,D^h_j(\eta_h^4)\tau^h_jD^h_{-i}u_h\Big)_{L^2(\R^n)}\right|\nonumber\\
	&\quad+\left|\sum_{i,j}^n\Big(D^h_{-i}D^h_ju_h,D^h_{-i}(\eta_h^4)\tau^h_{-i}D^h_ju_h\Big)_{L^2(\R^n)}\right|\nonumber\\
	&\quad+\left|\sum_{i,j}^n\Big(D^h_{-i}D^h_ju_h,D^h_{-i}D^h_j(\eta_h^4)\tau^h_{-i}\tau^h_ju_h\Big)_{L^2(\R^n)}\right|\,.\label{e:Cacc}
\end{align}
We will estimate the terms on the right-hand side separately.

Using $\frac{a^4-b^4}{a-b}=a^3+a^2b+ab^2+b^3$ for $a=\eta_h^4\circ\tau^h_j$ and $b=\eta_h^4$ we can rewrite the summands of the first term as
\begin{align*}
	&\Big(D^h_{-i}D^h_ju_h,D^h_j(\eta_h^4)\tau^h_jD^h_{-i}u_h\Big)_{L^2(\R^n)}\\
	&=\Big(D^h_{-i}D^h_ju_h,\left(\eta_h^3+\eta_h^2\tau^h_j\eta_h+\eta_h\tau^h_j\eta_h^2+\tau^h_j\eta_h^3\right)D^h_j\eta_h\tau^h_jD^h_{-i}u_h\Big)_{L^2(\R^n)}\\
	&=\Big(D^h_{-i}D^h_ju_h,4\eta_h^3D^h_j\eta_h\tau^h_jD^h_{-i}u_h\Big)_{L^2(\R^n)}\\
	&\ +\Big(D^h_{-i}D^h_ju_h,\left(\eta_h^2(\tau^h_j\eta_h-\eta_h)+\eta_h(\tau^h_j\eta_h^2-\eta_h^2)+(\tau^h_j\eta_h^3-\eta_h^3)\right)D^h_j\eta_h\tau^h_jD^h_{-i}u_h\Big)_{L^2(\R^n)}\,.
\end{align*}
The second term here is problematic\footnote{Note that in a continuous setting this term would not occur at all.}, because it does not contain a factor $\eta_h^2D^h_{-i}D^h_ju_h$. We will control it by moving a factor $\frac1h$ from the left-hand side to the right-hand side, so that we are no longer taking second derivatives of $u_h$. We obtain
\begin{align*}
	&\Big(D^h_{-i}D^h_ju_h,D^h_j(\eta_h^4)\tau^h_jD^h_{-i}u_h\Big)_{L^2(\R^n)}\\
	&=\Big(\eta_h^2D^h_{-i}D^h_ju_h,4\eta_h^3D^h_j\eta_h\tau^h_jD^h_{-i}u_h\Big)_{L^2(\R^n)}\\
	&\ +\Big(D^h_{-i}\tau^h_ju_h-D^h_{-i}u_h,\left(\eta_h^2D^h_j\eta_h+\eta_hD^h_j(\eta_h^2)+D^h_j(\eta_h^3)\right)D^h_j\eta_h\tau^h_jD^h_{-i}u_h\Big)_{L^2(\R^n)}\,.
\end{align*}
Therefore, using the Cauchy-Schwarz inequality, $ab\le\delta a^2+\frac{1}{4\delta}b^2$ and the pointwise bounds on $\eta_h$ and its derivatives we get
\begin{align*}
	&\left|\sum_{i,j}^n\Big(D^h_{-i}D^h_ju_h,D^h_j(\eta_h^4)\tau^h_jD^h_{-i}u_h\Big)_{L^2(\R^n)}\right|\\
	&=\left|\sum_{i,j}^n\Big(\eta_h^2D^h_{-i}D^h_ju_h,4\eta_h^3D^h_j\eta_h\tau^h_jD^h_{-i}u_h\Big)_{L^2(\R^n)}\right|\\
	& +\left|\sum_{i,j}^n\Big(D^h_{-i}\tau^h_ju_h-D^h_{-i}u_h,\left(\eta_h^2D^h_j\eta_h+\eta_hD^h_j(\eta_h^2)+D^h_j(\eta_h^3)\right)D^h_j\eta_h\tau^h_jD^h_{-i}u_h\Big)_{L^2(\R^n)}\right|\\
	&\le\frac14\|\eta_h^2\nabla_h^2u_h\|^2_{L^2(\R^n)}+\sum_{i,j}^n\left\|4\eta_h^3D^h_j\eta_h\tau^h_jD^h_{-i}u_h\right\|^2_{L^2(Q_{r-h}(x))}\\
	& +\frac{1}{2(r-s)^2}\sum_{i,j}^n\left\|D^h_{-i}\tau^h_ju_h-D^h_{-i}u_h\right\|^2_{L^2(Q_{r-h}(x))}\\
	& +\frac{(r-s)^2}2\sum_{i,j}^n\left\|\left(\eta_h^2D^h_j\eta_h+\eta_hD^h_j(\eta_h^2)+D^h_j(\eta_h^3)\right)D^h_j\eta_h\tau^h_jD^h_{-i}u_h\right\|^2_{L^2(Q_{r-h}(x))}\\
	&\le\frac14\|\eta_h^2\nabla_h^2u_h\|^2_{L^2(\R^n)}+\frac{C}{(r-s)^4}\|u_h\|^2_{L^2(Q_r(x))}+\frac{C}{(r-s)^2}\|\nabla_hu_h\|^2_{L^2(Q_r(x))}\,.
\end{align*}
Analogously we can find the same upper bound for the other two terms on the right-hand side of \eqref{e:Cacc}. Then we obtain
\begin{align*}
	&\|\eta_h^2\nabla_h^2u_h\|^2_{L^2(\R^n)}\\
	&\quad\le\frac34\|\eta_h^2\nabla_h^2u_h\|^2_{L^2(\R^n)}+\frac{C}{(r-s)^4}\|u_h\|^2_{L^2(Q_r(x))}+\frac{C}{(r-s)^2}\|\nabla_hu_h\|^2_{L^2(Q_r(x))}
\end{align*}
and hence
\[\|\eta_h^2\nabla_h^2u_h\|^2_{L^2(\R^n)}\le\frac{C}{(r-s)^4}\|u_h\|^2_{L^2(Q_r(x))}+\frac{C}{(r-s)^2}\|\nabla_hu_h\|^2_{L^2(Q_r(x))}\,.\]

This implies the claim, once one notes that \[\|\nabla_h^2u_h\|_{L^2(Q_s(x))}\le\|\eta_h^2\nabla_h^2u_h\|_{L^2(\R^n)}\,.\] 
\end{proof}

\section{Interpolation}\label{s:interpolation}
We want to deduce discrete estimates from their continuous counterparts using compactness arguments. To do so, we need an interpolation operator that turns discrete functions into continuous functions having similar features. The most important property of this interpolation operator that we require is that the continuous derivatives of the output are comparable to the discrete derivatives of the input.

To construct such an operator we use B-splines (cf., e.g., \cite[§4.4]{Schumaker1981}): For $m\ge1$, $x\in\R$ the $m$-th normalized B-spline is given by 
\[N^m(x)=m\sum_{i=0}^m\frac{(-1)^i\binom{m}{i}\max(x-i,0)^{m-1}}{m!}\,.\]
The function $N^m$ is piecewise a polynomial of degree $m-1$, has support in $[0,m]$ and satisfies $\sum_{z\in\Z}N^m(x-z)=1$ for all $x\in\R$. Furthermore its discrete and continuous derivatives are closely related. Indeed we have
\begin{equation}
	\partial_xN^m(x)=N^{m-1}(x)-N^{m-1}(x-1)=D^1_{-1}N^{m-1}(x)\label{e:bspline1}
\end{equation}
for all $x\in\R$ (see \cite{Schumaker1981} for proofs).

We need a multidimensional version of these splines which is also adapted to the lattice $(h\Z)^n$. So for $h>0$, $\mu=(\mu_1,\ldots, \mu_n)\in\N^n$ with $\mu_i\ge1$
let 
\[N^\mu_h(x_1,\ldots, x_n)=N^{\mu_1}\left(\frac{x_1}{h}\right)\cdots N^{\mu_n}\left(\frac{x_n}{h}\right)\,.\]
It follows easily from \eqref{e:bspline1} that for any $\alpha\in\N^n$ with $\alpha_i<\mu_i$ for all $i$ we have 
\begin{equation}
	D^\alpha N^\mu_h=D^\alpha_{-h}N^{\mu-\alpha}_h\,.\label{e:bspline2}
\end{equation}

Using this, we can define our interpolation operator:
\begin{definition}
Let $h>0$, $\mu=(\mu_1,\ldots ,\mu_n)\in\N^n$ with $\mu_i\ge1$ for all $i$. Define $J^\mu_h\colon\R^{(h\Z)^n}\rightarrow L^1_{loc}(\R^n)$ by
\[(J^\mu_hu_h)(x)=\sum_{z\in (h\Z)^n}u_h(z)N^\mu_h(x-z)\]
and extend $J^\mu_h$ to vector-valued functions component-wise.
\end{definition}
Note that $N^\mu_h$ has compact support so that the above sum has only finitely many nonzero terms.

$J^\mu_h$ does not interpolate the values of $u_h$ (i.e. in general we will not have $J^\mu_hu_h(x)=u_h(x)$ for all $x\in (h\Z)^n$). The maps $J^\mu_hu_h$ and $u_h$, however, share so many properties that we still call $J^\mu_h$ an interpolation operator.

Let us collect some properties of $J_h^\mu$.
\begin{proposition}\label{p:interpolationproperties}
Let $J^\mu_h$ be the family of interpolation operators that we have just defined, and let $u_h\colon (h\Z)^n\rightarrow\R$.
\begin{itemize}
	\item[i)] $J^\mu_h$ is linear.
	\item[ii)] $J^\mu_hu_h$ is piecewise a polynomial and is in the Sobolev space $W^{(\min_i\mu_i)-1,2}_{loc}$
	\item[iii)] $J^\mu_h$ is local in the sense that $(J^\mu_hu_h)(x)$ only depends on the values of $u_h$ in $Q_{(\max_i\mu_i)h}(x)$.
	\item[iv)] $J^\mu_h$ preserves constant functions, i.e. $(J^\mu_hc)(x)=c$ for any $c\in\R$ and any $x\in\R^n$.
	\item[v)] For every $\alpha$ with $\alpha_i<\mu_i$ we have $(D^\alpha J^\mu_hu_h)(x)=(J^{\mu-\alpha}_h(D^\alpha_hu_h))(x)$.
	\item[vi)] For every $\alpha$ with $\alpha_i<\mu_i$ and any $p\in[1,\infty]$ there is a constant $C=C(\mu,\alpha,n,p)$ such that for any $x\in\R^n$ and any $r\ge s+(1+\max_i\mu_i)h$ we have 
	\begin{equation}
		\|D^\alpha J^\mu_hu_h\|_{L^p(Q_s(x))}\le C\|D_h^\alpha u_h\|_{L^p(Q_r(x))}\label{e:interpolation1}
	\end{equation}
	 and 
	\begin{equation}
		\|D_h^\alpha u_h\|_{L^p(Q_s(x))}\le C\|D^\alpha J^\mu_hu_h\|_{L^p(Q_r(x))}\,.\label{e:interpolation2}
	\end{equation}
\end{itemize}
\end{proposition}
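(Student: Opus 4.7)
The plan is to dispatch items (i)--(iv) from the definition together with two elementary facts about the univariate B-splines, namely $\supp N^m = [0,m]$ and the partition-of-unity identity $\sum_{k\in\Z} N^m(y-k) = 1$. Linearity (i) is built into the defining sum; locality (iii) is immediate from the tensor-product support contained in $[0,(\max_i\mu_i)h]^n$; preservation of constants (iv) follows by factoring a constant out of the sum and using the partition of unity; and the piecewise-polynomial and $W^{(\min_i\mu_i)-1,2}_{loc}$ regularity in (ii) is inherited from the $C^{\mu_i-2}\cap W^{\mu_i-1,\infty}$-regularity of $N^{\mu_i}$ via the tensor product, using that at each $x$ only finitely many summands are nonzero.

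For (v) I would apply \eqref{e:bspline2} componentwise to write $D^\alpha N^\mu_h(x-z) = D^\alpha_{-h,x} N^{\mu-\alpha}_h(x-z)$ (discrete derivative acting in $x$), move the discrete derivative from $x$ to $z$ via the elementary identity $D^h_{-i,x} f(x-z) = -D^h_{i,z} f(x-z)$, and finally apply discrete summation by parts $\sum_z u_h(z)\, D^h_{i,z} g(z) = -\sum_z (D^h_{-i}u_h)(z)\, g(z)$ iteratively along the components of $\alpha$. The sign changes cancel in pairs and leave $D^\alpha J^\mu_h u_h(x) = \sum_z (D^\alpha_h u_h)(z)\, N^{\mu-\alpha}_h(x-z) = J^{\mu-\alpha}_h(D^\alpha_h u_h)(x)$.

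Item (vi) splits into two directions. For \eqref{e:interpolation1} I would use (v) to reduce to $\|J^\nu_h v_h\|_{L^p(Q_s(x))} \le C\|v_h\|_{L^p(Q_r(x))}$ with $v_h = D^\alpha_h u_h$ and $\nu = \mu-\alpha$. Since the weights $N^\nu_h(x-\cdot)$ are nonnegative and sum to $1$, Jensen's inequality gives $|J^\nu_h v_h(x)|^p \le \sum_z |v_h(z)|^p N^\nu_h(x-z)$; integrating over $Q_s(x)$ and using $\int_{\R^n} N^\nu_h = h^n$ together with the compact support of $N^\nu_h$ (which confines $z$ to $Q_r(x)$ when $r \ge s + (\max_i\nu_i)h$) yields the claim. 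The reverse estimate \eqref{e:interpolation2} is the main obstacle. Reducing again by (v) to $\|v_h\|_{L^p(Q_s)} \le C\|J^\nu_h v_h\|_{L^p(Q_r)}$, I would note that on each lattice cell $Q$ of side $h$ the function $J^\nu_h v_h|_Q$ is a polynomial of degree at most $\nu_i-1$ in the $i$-th variable, and that the finitely many values $\{v_h(z) : z \in S_Q\}$ (with $S_Q$ the $O(1)$-size set of lattice points whose shifted B-spline is nonzero on $Q$) determine this polynomial through a linear bijection, by linear independence of the tensor-product B-splines restricted to one cell. Norm equivalence on the finite-dimensional polynomial space, combined with the $L^p$-scaling of the cube $Q$ (tracked by substituting $h\mapsto 1$), then gives $|v_h(z)| \le C h^{-n/p} \|J^\nu_h v_h\|_{L^p(Q)}$ for every $z \in S_Q$. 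Summing $p$-th powers over $z \in Q_s \cap (h\Z)^n$ and using the bounded overlap of the $S_Q$'s finishes the argument, provided $r$ exceeds $s$ by at least $(1+\max_i\mu_i)h$.

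The only genuine difficulty is the local invertibility step underpinning \eqref{e:interpolation2}; everything else is bookkeeping to verify that constants depend only on $\mu,\alpha,n,p$ and that the $O(h)$ enlargements accumulated along the way fit inside the stated tolerance.
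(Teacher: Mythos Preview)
Your proposal is correct and follows essentially the same approach as the paper: items (i)--(iv) are dispatched from the definition and the partition-of-unity identity, (v) via \eqref{e:bspline2} and summation by parts, and (vi) by reducing to $\alpha=0$ and then using the finite-dimensional norm equivalence coming from linear independence of the B-splines restricted to one cell. The only notable difference is that for \eqref{e:interpolation1} you invoke Jensen's inequality with the probability weights $N^\nu_h(x-\cdot)$ and then integrate using $\int N^\nu_h = h^n$, whereas the paper instead bounds $\|J^\mu_h u_h\|_{L^\infty(Q_{h/2}(y))}$ by the supremum of nearby lattice values, passes to $L^p$ on each small cube, and sums; your route is slightly cleaner but equivalent in content.
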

\begin{proof}
Properties i), ii) and iii) are obvious. Property iv) easily follows from $\sum_{z\in\Z}N^m(x-z)=1$ for all $x\in\R$, so it remains to prove v) and vi).

For v), note that we can assume that $u_h$ is zero far away from $x$ by iii). This means that all sums in the following calculations have only finitely many nonzero terms. Now, using \eqref{e:bspline2}, we can calculate that
\begin{align*}
	(D^\alpha J^\mu_hu_h)(x)&=D^\alpha \left(\sum_{z\in (h\Z)^n}u_h(z)N^\mu_h(x-z)\right)\nonumber\\
	&=\sum_{z\in (h\Z)^n}u_h(z)D^\alpha N^\mu_h(x-z)\nonumber\\
	&=\sum_{z\in (h\Z)^n}u_h(z)D_{-h}^\alpha N^{\mu-\alpha}_h(x-z)\nonumber\\
	&=\sum_{z\in (h\Z)^n}D_h^\alpha u_h(z)N^{\mu-\alpha}_h(x-z)=(J^{\mu-\alpha}_h(D^\alpha_hu_h))(x)\,.
\end{align*}

Finally we prove vi). In view of v) it is sufficient to consider the case $\alpha=0$ here. We can also assume that $x\in (h\Z)^n$ and $r,s\in h\N+\frac h2$, $r\ge s+(\max_i\mu_i)h$ (otherwise move $x$ to the nearest lattice point, and replace $r$ and $s$ by $\lfloor r-\frac h2\rfloor_h+\frac h2$ and $\lfloor s-\frac h2\rfloor_h+\frac {3h}2$ respectively).

Let $y\in Q^h_s(x)$. The definition of $J^\mu_h$ immediately implies
\[\|J^\mu_hu_h\|_{L^\infty(Q_{h/2}(y))}\le C\sup_{\substack{z\in(h\Z)^n\\|z-y|\le(\max_i\mu_i)h}}|u_h(z)|\]
and thus
\[\|J^\mu_hu_h\|_{L^p(Q_{h/2}(y))}^p\le C\sum_{\substack{z\in(h\Z)^n\\|z-y|\le(\max_i\mu_i)h}}|u_h(z)|^p\le C\|u_h\|_{L^p(Q_{(\max_i\mu_i+1/2)h}(y))}^p\,.\]
If we sum this over all $y\in Q^h_s(x)$, we easily obtain \eqref{e:interpolation1}.

For \eqref{e:interpolation2}, by a similar argument it suffices to show
\begin{equation}
	|u_h(y)|\le C\|J^\mu_hu_h\|_{L^p(Q_{h/2}(y))}\label{e:interpolation3}
\end{equation}
for all $y\in Q^h_s(x)$.

One can see this as follows: $N_h^\mu$ has support $[0,\mu_1]\times\cdots\times[0,\mu_n]$. This means that the values of $J^\mu_hu_h$ in $Q_{h/2}(y)$ depend on the finitely many values $\{u_h(z)\}_{z\in I_y}$, where $I_y:= [y_1-\mu_1]\times\cdots\times[y_n-\mu_n]\cap(h\Z)^n$ and no others. Furthermore by linear independence of the B-splines (see \cite[Theorem 4.18]{Schumaker1981} for the one-dimensional case; the $n$-dimensional case is analogous) $J^\mu_hu_h$ is identically 0 in $Q_{h/2}(y)$ only if all $\{u_h(z)\}_{z\in I_y}$ are 0. This means that $\|J^\mu_hu_h\|_{L^p(Q_{h/2}(y))}$ is not only a seminorm on $\R^{I_y}$ but actually a norm. Now all norms on a finite-dimensional vector space are equivalent, so in particular
\[\|u_h\|_{l^2(I_y)}=\left(\sum_{z\in I_y}|u_h(z)|^2\right)^\frac12\leq C\|J^\mu_hu_h\|_{L^p(Q_{h/2}(y))}\]
for a constant $C$ that is independent of $y$. This immediately implies \eqref{e:interpolation3}.
\end{proof}
Using these interpolation operators $J^\mu_h$ we define the two operators that we will actually use most often: One is $J_h:=J^{(3,3,\ldots, 3)}_h$ and the other is the matrix interpolation operator $\tilde{J}_h$ given by $(\tilde{J}_h)_{ij}=J^{(3,3,\ldots, 3)-e_i-e_j}_h\circ\tau^h_i$ (for example $(\tilde{J}_h)_{11}=J^{(1,3,\ldots, 3)}_h\circ\tau^h_1$).

One easily checks using parts ii) and v) of Proposition~\ref{p:interpolationproperties} that for any $f_h\colon(h\Z)^n\rightarrow\R$ we have $J_hf_h\in W^{2,2}_{loc}(\R^n)$ and
\begin{equation}
	\nabla^2J_hf_h=\tilde{J}_h\nabla_h^2f_h\,.\label{e:propertyJ}
\end{equation}

\section{\texorpdfstring{Inner decay estimates for discrete biharmonic\\ functions: special cases}{Inner decay estimates for discrete biharmonic functions: special cases}}  \label{s:inner_special}
Our goal is to prove an $L^\infty$-$L^2$ estimate for discrete biharmonic functions (see Theorem~\ref{t:decaycube}): If $u_h\in\Phi_h$, $x\in\Lambda_h^n$, $r>0$ and $\Delta_h^2u_h(y)=0$ for all $y\in Q_{r-h}(x)\cap\inte\Lambda^n_h$, then, for all $z\in Q_{\frac r2}(x)\cap\Lambda_h^n$,
\[|\nabla_h^2u_h(z)|\le\frac{C}{r^\frac n2}\|\nabla_h^2u_h\|_{L^2(Q_r(x))}\,.\]
To prove this estimate it will be necessary to distinguish where $x$ lies in relation to $\partial\Lambda_h^n$: $x$ can be far inside $\Lambda_h^n$, near a face, near an edge or near a vertex. In the following subsections we will study these cases separately and prove some decay estimates that we will then assemble to prove the aforementioned estimate.

\subsection{Full space}

\begin{lemma}\label{l:decayfull}
Let $u_h\colon(h\Z)^n\rightarrow\R$, let $x\in(h\Z)^n$, $r>0$. Suppose $\Delta_h^2u_h(y)=0$ for all $y\in Q_{r-h}^h(x)$. Then
\[|\nabla_h^2u_h(x)|\le\frac{C}{r^\frac n2}\|\nabla_h^2u_h\|_{L^2(Q_r(x))}\,.\]
\end{lemma}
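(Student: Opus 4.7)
The plan is to reduce, by translation and the joint scaling $(u_h(y),h)\mapsto(u_h(\lambda y),h/\lambda)$ under which both sides of the inequality transform by the same power of $\lambda$, to the normalised case $x=0$, $r=1$. In this setting, if $h$ stays above some fixed threshold $h_0>0$, the lemma is trivial: since $0$ is one lattice point of $Q_1^h$, the definition of the $L^2$-norm immediately gives $\|\nabla_h^2 u_h\|_{L^2(Q_1)}^2\ge h^n|\nabla_h^2 u_h(0)|^2\ge h_0^n|\nabla_h^2 u_h(0)|^2$. So the only case requiring work is the fine-lattice regime $h\le h_0$, where the discrete biharmonic condition has nontrivial content.

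For small $h$, the key structural observation is that the discrete partial derivatives $D^h_{\pm i}$ all commute with $\Delta_h^2$ on the full lattice, so for any multi-index $\alpha$ the derivative $D_h^\alpha u_h$ is again discretely biharmonic on the shrunken cube $Q^h_{1-(|\alpha|+1)h}$. Applying Lemma~\ref{l:Cacc} to $D_h^\alpha u_h$ for each $|\alpha|\ge 2$ yields, for nested cubes $Q_s\subset Q_t\subset Q_1$,
\[\|\nabla_h^{|\alpha|+2} u_h\|_{L^2(Q_s)}^2\le \frac{C}{(t-s)^4}\|\nabla_h^{|\alpha|} u_h\|_{L^2(Q_t)}^2+\frac{C}{(t-s)^2}\|\nabla_h^{|\alpha|+1} u_h\|_{L^2(Q_t)}^2.\]
A discrete Gagliardo--Nirenberg interpolation of the form $\|\nabla_h^{k+1}u_h\|_{L^2(Q_t)}\le\varepsilon\|\nabla_h^{k+2}u_h\|_{L^2(Q_t)}+C_\varepsilon\|\nabla_h^k u_h\|_{L^2(Q_t)}$ (obtained from the continuous version by passing through the interpolation operator $J_h$ and invoking \eqref{e:interpolation1}--\eqref{e:interpolation2}) then lets me absorb the intermediate-derivative term. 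Iterating along a chain of nested cubes $Q_{1/2}\subset Q_{1/2+2^{-2}}\subset Q_{1/2+2^{-2}+2^{-3}}\subset\cdots\subset Q_1$ delivers
\[\|\nabla_h^m u_h\|_{L^2(Q_{1/2})}\le C_m\|\nabla_h^2 u_h\|_{L^2(Q_1)}\qquad\text{for every }m\ge 2.\]

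The final step is a discrete Sobolev embedding: I choose $m$ with $m-2>n/2$ and apply the Sobolev--Poincar\'e-type embedding of Section~\ref{s:functionspaces} componentwise to $\nabla_h^2 u_h$ on $Q_{1/2}$, which gives the pointwise bound $\|\nabla_h^2 u_h\|_{L^\infty(Q_{1/2})}\le C\|\nabla_h^2 u_h\|_{L^2(Q_1)}$; evaluating at the origin and undoing the scaling yields the claim.

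The step I expect to be the main obstacle is not the bootstrap itself but verifying that the auxiliary inequalities used along the way—the discrete Gagliardo--Nirenberg interpolation and the discrete $W^{m,2}_h\hookrightarrow L^\infty$ embedding—hold with constants independent of $h$. Both are naturally established by applying the corresponding continuous inequalities to $J_h u_h$ on a slightly enlarged cube and then using the norm-comparison estimates \eqref{e:interpolation1}--\eqref{e:interpolation2}, which is precisely the purpose for which the interpolation machinery of Section~\ref{s:interpolation} was set up.
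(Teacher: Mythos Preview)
Your argument is essentially correct and gives a genuinely different proof from the paper's. The paper proceeds via Lemma~\ref{l:decayfull2}: it uses a compactness argument (interpolate via $J_h$, extract a weakly convergent subsequence, show the limit is continuously biharmonic, and invoke the continuous Campanato decay of Lemma~\ref{l:decayfullcontinuous}) to obtain a one-scale decay estimate for $\|\nabla_h^2 u_h - [\nabla_h^2 u_h]\|_{L^2}$, then iterates across scales and converts to a pointwise bound. Your route is a direct higher-regularity bootstrap: differentiate the equation, apply Caccioppoli repeatedly, and finish with a discrete Sobolev embedding. This is more elementary and in principle constructive (one could track constants), and it avoids any appeal to continuous regularity theory.

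Two remarks on the execution. First, Lemma~\ref{l:Cacc} is stated for $u_h\in\Phi_h$; in the full-space setting here you should note that the same proof goes through verbatim once $\eta_h$ is compactly supported inside the region where $\Delta_h^2 u_h=0$. Second, the Gagliardo--Nirenberg absorption is a bit delicate because the $\|\nabla_h^{k+2}u_h\|_{L^2(Q_t)}$ term appearing on the right lives on the \emph{larger} cube; you will need the standard iteration lemma (as in Giaquinta) rather than a direct absorption. In fact you can sidestep this entirely: after subtracting an affine function so that $[u_h]_{Q_1}=[\nabla_h u_h]_{Q_1}=0$, a straightforward induction on $k$ using only Caccioppoli on $D_h^\alpha u_h$ and the discrete Poincar\'e inequality already yields $\|\nabla_h^m u_h\|_{L^2(Q_{1/2})}\le C_m\|\nabla_h^2 u_h\|_{L^2(Q_1)}$.

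The main trade-off is scope. Your bootstrap hinges on the fact that \emph{every} discrete partial derivative of a discretely biharmonic function is again discretely biharmonic, which is special to the interior case. Near a flat boundary piece only tangential derivatives preserve the zero boundary data, and near edges or corners even fewer do; extending your method to Lemmas~\ref{l:decayhalf}, \ref{l:decayedge}, \ref{l:decayvertex} would require additional work to recover the normal directions. The paper's compactness argument, by contrast, is agnostic to this and carries over uniformly to the boundary settings with essentially the same proof structure---which is why the authors chose it.
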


The main tool to prove this statement will be the following estimate:

\begin{lemma}\label{l:decayfull2}
There exist constants $M\in\N$, $0<\rho<\frac12$ with the following property: Let $u_h\colon(h\Z)^n\rightarrow\R$, $r>0$, such that $\Delta_h^2u_h(y)=0$ for all $y\in Q_{r-h}^h$. Assume that $\rho r\ge Mh$. Then we have that
\[\left\|\nabla_h^2u_h-\left[\nabla_h^2u_h\right]_{Q_{\rho r}}\right\|^2_{L^2(Q_{\rho r})}\le \rho^{n+1}\left\|\nabla_h^2u_h-\left[\nabla_h^2u_h\right]_{Q_r}\right\|^2_{L^2(Q_r)}\,.\]
\end{lemma}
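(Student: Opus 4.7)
The plan is a Campanato-type contradiction/compactness argument. I will first choose $\rho \in (0, 1/2)$ small enough that the continuous interior decay estimate for biharmonic functions (which yields a factor $\rho^{n+2}$) strictly beats the required $\rho^{n+1}$; concretely, $\rho$ must satisfy $C'\rho < 1$ for the constant $C'$ appearing below. Having fixed such a $\rho$, I will then obtain the value of $M$ from compactness as $h \to 0$.

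Assume the conclusion fails for every $M$. Then there exist sequences $h_k, r_k > 0$ and discretely biharmonic $u_{h_k}$ on $Q^{h_k}_{r_k-h_k}$ with $\rho r_k/h_k \to \infty$ and the reversed inequality. Dilating so that $r_k = 1$, we have $h_k \to 0$. Scale $u_k := u_{h_k}$ so that $\|\nabla_{h_k}^2 u_k - [\nabla_{h_k}^2 u_k]_{Q_1}\|_{L^2(Q_1)} = 1$ (a violation of the inequality forces this quantity to be positive), and subtract from $u_k$ an affine function and a symmetric quadratic to arrange $[u_k]_{Q_1} = 0$, $[\nabla_{h_k} u_k]_{Q_1} = 0$, and that the symmetric part of $A_k := [\nabla_{h_k}^2 u_k]_{Q_1}$ vanishes. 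Neither operation affects the reversed inequality, and the biharmonic equation is preserved since polynomials of degree $\le 2$ are discretely biharmonic.

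The main technical obstacle is that the discrete Hessian is not symmetric, and the antisymmetric part of $A_k$ cannot be cancelled by any polynomial subtraction. I will control $|A_k|$ a priori via the identity
\[
(\nabla_h^2 u)_{ij} - (\nabla_h^2 u)_{ji} = h\, D^h_{-j} D^h_{-i} (D^h_j - D^h_i) u,
\]
which follows from $D^h_j = D^h_{-j} + h D^h_j D^h_{-j}$ and commutativity of the discrete partials, and exhibits the antisymmetric component of $\nabla_{h_k}^2 u_k$ as $h_k$ times a third discrete derivative. Iterated Caccioppoli estimates (Lemma~\ref{l:Cacc} applied also to $\nabla_{h_k} u_k$, which is again discretely biharmonic since $\nabla_h$ commutes with $\Delta_h^2$) bound this interior contribution by $C_\delta h_k(1+|A_k|)$ on $Q_{1-\delta}$. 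Combining this with the orthogonal decomposition $\|w^a\|_{L^2(Q_1)}^2 = \|w^a - A_k\|_{L^2(Q_1)}^2 + |Q_1||A_k|^2$ (where $w^a$ denotes the antisymmetric part of $\nabla_{h_k}^2 u_k$), the contraction estimate $\|w^a - A_k\|_{L^2(Q_1)} \le 1$, and a boundary-layer triangle bound for $\|w^a\|_{L^2(Q_1 \setminus Q_{1-\delta})}$ yields $|A_k| \le C$ once $\delta$ and $h_k$ are small. The discrete Poincar\'e inequality then gives $\|u_k\|_{W^{2,2}(Q_1)} \le C$.

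For the compactness step, let $U_k := J_{h_k} u_k$ be the B-spline interpolant of Section~\ref{s:interpolation}. By Proposition~\ref{p:interpolationproperties}, $\|U_k\|_{W^{2,2}(Q_{1-\epsilon})} \le C$ for any fixed small $\epsilon > 0$ and $k$ large. Rellich-Kondrachov extracts a subsequence converging weakly in $W^{2,2}(Q_{1-\epsilon})$ and strongly in $W^{1,2}(Q_{1-\epsilon})$ to a limit $U$. Testing the discrete equation against smooth compactly supported test functions and using $\nabla^2 J_{h_k} u_k = \tilde J_{h_k} \nabla_{h_k}^2 u_k$ shows $\Delta^2 U = 0$ in $Q_{1-\epsilon}$; applying Lemma~\ref{l:Cacc} to suitable differences upgrades the Hessian convergence to strong $L^2$ convergence on $Q_{1-2\epsilon}$. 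Choose $\epsilon$ so small that $\rho < 1 - 2\epsilon$. Passing to the limit in the reversed inequality yields
\[
\|\nabla^2 U - [\nabla^2 U]_{Q_\rho}\|_{L^2(Q_\rho)}^2 \ge \rho^{n+1},
\]
while the standard continuous interior estimate $\sup_{Q_\rho}|\nabla^3 U| \le C\|\nabla^2 U\|_{L^2(Q_{1-2\epsilon})}$ for biharmonic $U$, combined with Poincar\'e on $Q_\rho$, gives $\|\nabla^2 U - [\nabla^2 U]_{Q_\rho}\|_{L^2(Q_\rho)}^2 \le C'\rho^{n+2}$. Comparison yields $1 \le C'\rho$, contradicting the initial choice of $\rho$.
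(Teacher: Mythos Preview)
Your proof is correct and follows essentially the same compactness strategy as the paper: rescale a sequence of counterexamples to $r_k=1$, interpolate via $J_{h_k}$, extract a weak $W^{2,2}$ limit, show it is biharmonic, upgrade to strong convergence of the discrete Hessians via Caccioppoli on translates (the paper's Kolmogorov--Riesz step), and contradict the continuous interior decay for biharmonic functions. Your explicit treatment of the antisymmetric part $A_k$ of $[\nabla_{h_k}^2 u_k]_{Q_1}$ is in fact more careful than the paper, which simply asserts that subtracting $\tfrac12[\nabla_{h_k}^2 u_{h_k}]_{Q_1}:x\otimes x$ makes the average vanish and then normalises $\|\nabla_{h_k}^2 u_{h_k}\|_{L^2(Q_1)}=1$ directly; your identity $(\nabla_h^2 u)_{ij}-(\nabla_h^2 u)_{ji}=h\,D^h_{-j}D^h_{-i}(D^h_j-D^h_i)u$ and the bootstrap via iterated Caccioppoli cleanly justify what the paper leaves implicit.
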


We will prove this lemma by contradiction using a compactness argument and the following decay estimate for continuous biharmonic functions:
\begin{lemma}\label{l:decayfullcontinuous}
Let $0<s\le \frac r2$, $u\in W^{2,2}(Q_r)$ such that $\Delta^2u=0$ weakly in $Q_r$. Then we have
\begin{equation}
	\left\|\nabla^2u-\left[\nabla^2u\right]_{Q_s}\right\|^2_{L^2(Q_s)}\le C\left(\frac{s}{r}\right)^{n+\frac32}\left\|\nabla^2u-\left[\nabla^2u\right]_{Q_r}\right\|^2_{L^2(Q_r)}\,.\label{e:decayfullcontinuous1}
\end{equation}
\end{lemma}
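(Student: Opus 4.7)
The plan is to reduce to the case where $\nabla^2 u$ has vanishing mean on $Q_r$ and then exploit smoothness of biharmonic functions in the interior to control $\nabla^3 u$ pointwise.

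First I would subtract a polynomial. Let $P$ be the unique polynomial of degree at most two such that $\nabla^2 P \equiv [\nabla^2 u]_{Q_r}$, $[\nabla P]_{Q_r} = 0$ and $[P]_{Q_r} = 0$, and set $v = u - P$. Since $\Delta^2 P = 0$, the function $v$ is still weakly biharmonic on $Q_r$. By construction $[v]_{Q_r} = 0$, $[\nabla v]_{Q_r} = 0$ and $[\nabla^2 v]_{Q_r} = 0$, and the quantities we need to estimate are unchanged: $\nabla^2 u - [\nabla^2 u]_{Q_s} = \nabla^2 v - [\nabla^2 v]_{Q_s}$ and similarly on $Q_r$. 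Applying Poincaré's inequality three times (using the vanishing averages) gives the chain
\[
\|v\|_{L^2(Q_r)} \le Cr \|\nabla v\|_{L^2(Q_r)} \le Cr^2 \|\nabla^2 v\|_{L^2(Q_r)}.
\]

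The second step is interior regularity for weakly biharmonic functions. Since $\Delta^2 v = 0$ in $Q_r$, standard elliptic regularity (e.g.\ iterated Caccioppoli estimates combined with the Sobolev embedding, or the classical mean-value characterisation of biharmonic functions) yields
\[
\|\nabla^3 v\|_{L^\infty(Q_{r/2})} \le C r^{-\frac n2 - 3} \|v\|_{L^2(Q_r)}.
\]
Combining this with the Poincaré chain above gives
\[
\|\nabla^3 v\|_{L^\infty(Q_{r/2})} \le C r^{-\frac n2 - 1} \|\nabla^2 v\|_{L^2(Q_r)}.
\]

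Finally, for $s \le r/2$, Poincaré on $Q_s$ applied to each component of $\nabla^2 v$ gives
\[
\|\nabla^2 v - [\nabla^2 v]_{Q_s}\|_{L^2(Q_s)}^2 \le C s^2 \|\nabla^3 v\|_{L^2(Q_s)}^2 \le C s^{n+2} \|\nabla^3 v\|_{L^\infty(Q_{r/2})}^2,
\]
and plugging in the interior estimate produces a bound by $C (s/r)^{n+2} \|\nabla^2 v\|_{L^2(Q_r)}^2$. Translating back to $u$ proves \eqref{e:decayfullcontinuous1} with exponent $n+2$, which is stronger than the $n+\tfrac32$ that is claimed; the weaker exponent is all that will be needed for the subsequent compactness/iteration argument.

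The only non-trivial ingredient is the interior bound $\|\nabla^3 v\|_{L^\infty(Q_{r/2})} \lesssim r^{-n/2-3}\|v\|_{L^2(Q_r)}$; this is the main obstacle but is essentially classical, since solutions of $\Delta^2 v = 0$ are real-analytic in the interior and all derivatives are themselves biharmonic, so the estimate follows from a finite iteration of Caccioppoli's inequality on nested cubes $Q_{r/2} \subset Q_{r_1} \subset \dots \subset Q_r$ together with Sobolev embedding $W^{k,2} \hookrightarrow L^\infty$ for $k > n/2$.
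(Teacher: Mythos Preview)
Your proof is correct and follows essentially the same route as the paper: subtract a quadratic to normalise the Hessian average, invoke interior regularity for biharmonic functions, and convert that regularity into Campanato decay on small cubes. The only cosmetic difference is that the paper quotes Schauder theory to bound $[\nabla^2 u]_{C^{0,3/4}(Q_{r/2})}$ by $\|\nabla^2 u\|_{L^2(Q_r)}$ and then uses Jensen's inequality, arriving at the exponent $n+\tfrac32$, whereas you go one derivative higher and control $\|\nabla^3 v\|_{L^\infty(Q_{r/2})}$, which yields the sharper exponent $n+2$; as you note, either exponent suffices for the downstream compactness argument.
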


\begin{proof}
The estimate \eqref{e:decayfullcontinuous1} expresses the fact that the second derivatives of biharmonic functions are in the Campanato space $\mathcal{L}^{2,n+\frac32}\simeq C^{0,3/4}$. The easiest way to show it is to use Schauder estimates for higher order elliptic equations as follows.

By scaling we can assume $r=1$. By replacing $u$ with $u-\frac12\left[\nabla^2u\right]_{Q_1}:x\otimes x$ we can assume that $\left[\nabla^2u\right]_{Q_1}=0$. Now by Schauder estimates (see e.g. \cite[Theorem 6.4.8]{Morrey1966} or \cite[Cap. II, Teorema 6.I]{Campanato1980}) we have that any $C^{0,\alpha}$-Hölder seminorm of $\nabla^2u$ in $Q_{1/2}$ is bounded by the $L^2$-norm of $\nabla^2u$ in $Q_1$. In particular, we have
\[\left[\nabla^2u\right]_{C^{0,\frac34}(Q_{1/2})}\le C\left\|\nabla^2u\right\|_{L^2(Q_1)}\,.\]
On the other hand, Jensen's inequality easily yields that
\begin{align*}
\left\|\nabla^2u-\left[\nabla^2u\right]_{Q_s}\right\|^2_{L^2\left(Q_s\right)}&\le\frac1{|Q_s|}\int_{Q_s}\int_{Q_s}|\nabla^2u(y)-\nabla^2u(y')|^2\ud y\ud y'\\
&\le Cs^{n+\frac32}\left[\nabla^2u\right]_{C^{0,\frac34}(Q_{1/2})}^2\,.
\end{align*}
Together with the previous estimate this yields the result. 
\end{proof}

We will also need a local version of the well-known Kolmogorov-Riesz-Fréchet compactness theorem.
\begin{lemma}\label{l:kolmogorov}
Let $p\in[1,\infty)$, let $U,V,W\subset\R^n$ be open with $U$ compactly contained in $V$, and $V$ compactly contained in $W$. Let $A$ be a subset of $L^p(W)$.
\begin{itemize}
	\item[i)] If $A$ is bounded in $L^p(W)$ and
\[\lim_{\delta\rightarrow0}\sup_{f\in A}\|\tau_\delta f-f\|_{L^p(V)}=0\]
then $A$ (or rather the restriction of the elements of $A$ to $U$) is precompact in $L^p(U)$.
	\item[ii)] If $A$ is precompact in $L^p(W)$ then
\[\lim_{\delta\rightarrow0}\sup_{f\in A}\|\tau_\delta f-f\|_{L^p(V)}=0\,.\]
\end{itemize}
\end{lemma}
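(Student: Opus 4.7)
The plan is to reduce both assertions to the classical (global) Kolmogorov-Riesz-Fréchet theorem on $L^p(\R^n)$ by means of cutoff functions and trivial extensions, taking care that all translates needed stay inside the set on which we have information.

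For assertion i), I would first fix a smooth cutoff $\eta\in C_c^\infty(V)$ with $0\le\eta\le1$ and $\eta\equiv1$ on an open neighbourhood of $\overline U$. For each $f\in A$ extend $\eta f$ by zero to a function $\tilde f\in L^p(\R^n)$; the resulting family $\tilde A=\{\tilde f\}$ is bounded in $L^p(\R^n)$ and all its members are supported in the fixed compact set $\overline V$. To verify the translation-equicontinuity hypothesis of the global Kolmogorov-Riesz-Fréchet theorem, I would write
\[
	\tau_\delta\tilde f-\tilde f=(\tau_\delta\eta)\,(\tau_\delta f-f)+(\tau_\delta\eta-\eta)\,f
\]
(extending by zero where necessary). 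Choosing $|\delta|$ so small that $V+\delta\subset W$, the first summand is controlled in $L^p(\R^n)$ by $\|\eta\|_\infty\,\|\tau_\delta f-f\|_{L^p(V)}$, which is uniformly small by hypothesis. The second summand is bounded by $\|\tau_\delta\eta-\eta\|_\infty\,\|f\|_{L^p(W)}$, uniformly small because $\eta$ is uniformly continuous and $A$ is bounded in $L^p(W)$. Hence $\tilde A$ is precompact in $L^p(\R^n)$ by the classical theorem, and since the restriction of $\tilde f$ to $U$ equals $f|_U$, the family $A$ (restricted to $U$) is precompact in $L^p(U)$.

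For assertion ii), I would argue by a standard $\varepsilon/3$ finite-cover argument. Given $\varepsilon>0$, precompactness of $A$ in $L^p(W)$ yields finitely many $f_1,\dots,f_N\in L^p(W)$ such that every $f\in A$ satisfies $\|f-f_i\|_{L^p(W)}<\varepsilon/3$ for some $i$. For each fixed $i$, let $\tilde f_i\in L^p(\R^n)$ denote the extension by zero of $f_i$ outside $W$. By continuity of translation in $L^p(\R^n)$, $\|\tau_\delta\tilde f_i-\tilde f_i\|_{L^p(\R^n)}\to0$ as $\delta\to0$; for $|\delta|$ so small that $V\cup(V-\delta)\subset W$, this quantity dominates $\|\tau_\delta f_i-f_i\|_{L^p(V)}$, which is therefore $<\varepsilon/3$ uniformly in $i$ (finitely many $i$) for $|\delta|$ small. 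Writing
\[
	\|\tau_\delta f-f\|_{L^p(V)}\le\|\tau_\delta(f-f_i)\|_{L^p(V)}+\|\tau_\delta f_i-f_i\|_{L^p(V)}+\|f_i-f\|_{L^p(V)}
\]
and noting that the first and third summands are each at most $\|f-f_i\|_{L^p(W)}<\varepsilon/3$ (again using $V-\delta\subset W$), one obtains $\sup_{f\in A}\|\tau_\delta f-f\|_{L^p(V)}<\varepsilon$ for all sufficiently small $\delta$.

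The only genuinely delicate point is bookkeeping the nesting $U\Subset V\Subset W$ so that, whenever we translate by $\delta$, the translated region remains inside the set on which the relevant hypothesis (boundedness, equicontinuity, or precompactness) is available; this is exactly what the two levels of nesting provide. Modulo this, the proof is essentially the classical Kolmogorov-Riesz-Fréchet argument packaged with a cutoff on one side and a zero-extension on the other.
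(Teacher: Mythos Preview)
Your proof is correct. For part i) your approach coincides with the paper's: multiply by a cutoff supported in $V$ and equal to $1$ near $\overline U$, extend by zero, and apply the global Kolmogorov--Riesz--Fr\'echet theorem; you simply spell out the verification of translation equicontinuity that the paper leaves implicit. For part ii) you take a slightly different route: the paper introduces an intermediate open set $\tilde V$ with $V\Subset\tilde V\Subset W$, multiplies by a cutoff $\zeta$ equal to $1$ on $\tilde V$ and supported in $W$, observes that $\{\zeta f:f\in A\}$ is precompact in $L^p(\R^n)$, and then invokes the \emph{converse} direction of the global theorem. Your direct $\varepsilon/3$ finite-cover argument is equally valid and arguably more elementary, while the paper's version has the aesthetic advantage of making both parts formal corollaries of the global statement. (A harmless sign slip: with the paper's convention $\tau_\delta f=f(\cdot+\delta)$ one has $\|\tau_\delta g\|_{L^p(V)}=\|g\|_{L^p(V+\delta)}$, so the relevant inclusion is $V+\delta\subset W$ rather than $V-\delta\subset W$; for small $|\delta|$ both hold anyway.)
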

\begin{proof}
Part i) follows by applying the usual Kolmogorov-Riesz-Fréchet compactness theorem (see e.g. \cite[Corollary 4.27 and Exercise 4.34]{Brezis2011}) to the family $\{\eta f\colon f\in A\}$, where $\eta$ is a smooth cut-off function that is 1 on $U$ and 0 outside of $V$.

For part ii) let $\tilde{V}$ be open such that $V$ is compactly contained in $\tilde{V}$ and $\tilde{V}$ is compactly contained in $W$, and let $\zeta$ be a cut-off function that is 1 on $\tilde{V}$ and 0 outside of $W$. Then the family $\{\zeta f\colon f\in A\}$ is precompact in $L^p(\R^n)$ and the statement is obtained by applying the converse of the Kolmogorov-Riesz-Fréchet compactness theorem to that family. 
\end{proof}

After these preparations we can return to the proofs of Lemma~\ref{l:decayfull} and Lemma~\ref{l:decayfull2}.

\begin{proof}[Proof of Lemma~\ref{l:decayfull2}]$ $\\
\emph{Step 1: Set-up of the compactness argument}\\
Let the constant $\rho\le\frac12$ be fixed later, and suppose that the statement for that fixed $\rho$ is wrong. Then for any $k\in\N$ there exist $M_k\ge k$, $h_k>0$, $u_{h_k}\colon(h_k\Z)^n\rightarrow\R$, $r_k>0$ such that
\begin{equation}\label{e:decayfull2_1}
	\left\|\nabla_{h_k}^2u_{h_k}-\left[\nabla_{h_k}^2u_{h_k}\right]_{Q_{\rho r_k}}\right\|^2_{L^2(Q_{\rho r_k})}>\rho^{n+1}\left\|\nabla_{h_k}^2u_{h_k}-\left[\nabla_{h_k}^2u_{h_k}\right]_{Q_{r_k}}\right\|^2_{L^2(Q_{r_k})}\,.
\end{equation}
By rescaling the lattice by a factor of $r_k$, we can assume that all the $r_k$ are equal to 1. Because $h_k\le\frac{\rho}{M_k}\le\frac{\rho}{k}$, we have that $h_k\rightarrow0$. Omitting finitely many $k$, we can assume that all $h_k$ are small (less than $\frac{1}{1000}$, say).

By replacing $u_{h_k}$ with $u_{h_k}-\frac12\left[\nabla^2_{h_k}u_{h_k}\right]_{Q_1}:x\otimes x$ we can assume that $\left[\nabla^2_{h_k}u_{h_k}\right]_{Q_1}=0$, and by scaling we can assume that $\left\|\nabla_{h_k}^2u_{h_k}\right\|_{L^2(Q_1)}=1$ (note that $\nabla_{h_k}^2u_{h_k}$ cannot be identically 0, as then $u_{h_k}$ would be affine, and so both sides of \eqref{e:decayfull2_1} would be 0). Then \eqref{e:decayfull2_1} implies that
\begin{equation}\label{e:decayfull2_4}
	\left\|\nabla_{h_k}^2u_{h_k}-\left[\nabla_{h_k}^2u_{h_k}\right]_{Q_{\rho}}\right\|^2_{L^2(Q_{\rho})}>\rho^{n+1}\,.
\end{equation}
Finally, we replace $u_{h_k}$ by $u_{h_k}-a_k-b_k\cdot x$, where $a_k\in\R$, $b_k\in\R^n$ are constants that will be chosen below (such that equation \eqref{e:averagev} is satisfied). This leaves $\nabla_{h_k}^2u_{h_k}$ unaffected, so all the above statements about $\nabla_{h_k}^2u_{h_k}$ remain true.

We let $v_k=J_{h_k}u_{h_k}$, where $J_{h_k}=J_{h_k}^{(3,\ldots,3)}$ is the interpolation operator introduced in Section~\ref{s:interpolation}. From $\left\|\nabla_{h_k}^2u_{h_k}\right\|_{L^2(Q_1)}=1$ and Proposition~\ref{p:interpolationproperties} vi) we immediately conclude that $\|\nabla^2v_k\|_{L^2(Q_{13/14})}\le C$.

Now we choose $a_k$ and $b_k$ in such a way that 
\begin{equation}
	[v_k]_{Q_{13/14}}=0\,,\qquad
	[\nabla v_k]_{Q_{13/14}}=0\,.\label{e:averagev}
\end{equation}

The Poincaré inequality on $Q_{13/14}$ implies that \[\|v_k\|_{W^{2,2}(Q_{13/14})}\le C\|\nabla^2v_k\|_{L^2(Q_{13/14})}\le C\,.\]Therefore the $v_k$ are bounded in $W^{2,2}(Q_{13/14})$ and hence have a subsequence (not relabeled) that converges weakly to some $v\in W^{2,2}(Q_{13/14})$.

\emph{Step 2: $\Delta^2v=0$}\\
We claim that $\Delta^2v=0$ weakly in $Q_{13/14}$. To prove this, let $\varphi\in C_c^\infty(Q_{13/14})$ be arbitrary and let $\varphi_{h_k}$ be its restriction to $(h_k\Z)^n$. We need to prove that $\int_{Q_{13/14}}\nabla^2v:\nabla^2\varphi\ud x=0$.

We have by \eqref{e:propertyJ} that
\begin{align*}
	&\int_{Q_{13/14}}\nabla^2v_k:\nabla^2\varphi\ud x=\int_{Q_{13/14}} \nabla^2 J_{h_k}u_{h_k}:\nabla^2 \varphi\ud x\\
	&=\int_{Q_{13/14}} \tilde{J}_{h_k}\nabla^2_{h_k} v_k:\nabla^2 \varphi\ud x\\
	&=\sum_{i,j=1}^n\int_{Q_{13/14}} J^{(3,3,\ldots ,3)-e_i-e_j}_{h_k}\circ\tau^{h_k}_iD^{h_k}_{-i}D^{h_k}_j v_kD_iD_j \varphi\ud x\\
	&=\sum_{i,j=1}^n\int_{Q_{13/14}} \sum_{z\in(h_k\Z)^n}N^{(3,3,\ldots ,3)-e_i-e_j}_{h_k}(x-z)D^{h_k}_iD^{h_k}_j u_{h_k}(z)D_iD_j \varphi(x)\ud x\\
	&=\sum_{i,j=1}^n\sum_{z\in(h_k\Z)^n}D^{h_k}_iD^{h_k}_j u_{h_k}(z)\int_{Q_{13/14}} N^{(3,3,\ldots ,3)-e_i-e_j}_{h_k}(x-z)D_iD_j\varphi(x)\ud x\,.
\end{align*}
Now Taylor expansion and the fact that $\int_{Q_{13/14}} N^{(3,3,\ldots ,3)-\delta_i-\delta_j}_{h_k}=1$ imply that 
\begin{align*}
	\int_{Q_{13/14}} N^{(3,3,\ldots ,3)-e_i-e_j}_{h_k}(x-z)D_iD_j \varphi(x)\ud x&=D_iD_j\varphi(z)+O(h_k)\\
	&=D^{h_k}_iD^{h_k}_j\varphi_{h_k}(z)+O(h_k)
\end{align*}
In addition, from $\Delta_{h_k}^2u_{h_k}=0$ in $Q_{13/14}$ we conclude that
\begin{align*}
	&\sum_{i,j=1}^n\sum_{z\in (h_k\Z)^n}D^{h_k}_iD^{h_k}_ju_{h_k}(z)D^{h_k}_iD^{h_k}_j\varphi_{h_k}(z)\\
	&\quad=\sum_{i,j=1}^n\sum_{z\in (h_k\Z)^n}D^{h_k}_{-i}D^{h_k}_ju_{h_k}(z)D^{h_k}_{-i}D^{h_k}_j\varphi_{h_k}(z)\\
	&\quad=(\nabla_{h_k}^2u_{h_k},\nabla_{h_k}^2\varphi_{h_k})_{L^2(\R^n)}=0
\end{align*}
and so we obtain
\[\left|\int_{Q_{13/14}}\nabla^2v_k:\nabla^2\varphi\ud x\right|\le C\left\|\nabla_{h_k}^2u_{h_k}\right\|_{L^2(Q_1)}h_k=Ch_k\,.\]
Using weak convergence of $\nabla^2v_k$ we can pass to the limit here and get
\[\int_{Q_{13/14}}\nabla^2v:\nabla^2\varphi\ud x=0\,.\]

\emph{Step 3: Strong convergence of $v_k$}\\
Let $w_k=I_{h_k}^{pc}\nabla_{h_k}^2u_{h_k}$. We claim that both $\nabla^2v_k$ and $w_k$ converge strongly in $L^2(Q_{1/2})$ to $\nabla^2v$.

\emph{Step 3.1: Precompactness of $w_k$}\\
We first prove that $(w_k)_{k\in\N}$ is precompact in $L^2(Q_{4/7})$.

Because $(\nabla_{h_k}^2u_{h_k})$ is bounded in $L^2(Q_1)$, $w_k$ is bounded in $L^2(Q_1)$. So, according to Lemma~\ref{l:kolmogorov} i), it suffices to verify that 
\begin{equation}\label{e:decayfull2_2}
	\lim_{\substack{a\in\R^n\\|a|\rightarrow0}}\sup_{k\in\N}\|\tau_a w_k-w_k\|_{L^2(Q_{5/7})}=0\,.
\end{equation}
Let $a\in(h\Z)^n$ such that $|a|\le\frac17$. Then $\Delta_{h_k}^2(\tau_au_{h_k}-u_{h_k})=0$ in $Q_{11/14}$, so by the Cacciopoli inequality we obtain
\begin{align*}&\|\nabla_{h_k}^2(\tau_au_{h_k}-u_{h_k})\|_{L^2(Q_{5/7}(x))}^2\le C\|\tau_au_{h_k}-u_{h_k}\|_{L^2(Q_{11/14}(x))}^2\\
&\quad+C\|\nabla_{h_k}(\tau_au_{h_k}-u_{h_k})\|_{L^2(Q_{11/14}(x))}^2\,.
\end{align*}
Here the left-hand side is equal to $\|\tau_aw_k-w_k\|^2_{L^2(Q_{5/7})}$, while we can use Proposition~\ref{p:interpolationproperties} vi) to bound the right-hand side. We obtain
\[\|\tau_aw_k-w_k\|^2_{L^2(Q_{5/7})}\le C\|\tau_av_k-v_k\|_{L^2(Q_{6/7}(x))}^2+C\|\tau_a\nabla v_k-\nabla v_k\|_{L^2(Q_{6/7}(x))}^2\,.\]
Recall that $(v_k)$ is bounded in $W^{2,2}(Q_{13/14})$. Hence by the compact Sobolev embedding, $(v_k)$ and $(\nabla v_k)$ are precompact in $L^2(Q_{13/14})$. Thus by Lemma~\ref{l:kolmogorov} ii),
\[\lim_{a\rightarrow0}\sup_{k\in\N}\left(\|\tau_av_k-v_k\|_{L^2(Q_{6/7}(x))}^2+\|\tau_a\nabla v_k-\nabla v_k\|_{L^2(Q_{6/7}(x))}^2\right)=0\]
(note that this expression is defined for all $a>0$, not just those in $(h\Z)^n$).

In particular,
\[\lim_{\delta\rightarrow0}\sup_{k\in\N}\sup_{\substack{a\in(h_k\Z)^n\\|a|\le\delta}}\left(\|\tau_av_k-v_k\|_{L^2(Q_{6/7}(x))}^2+\|\tau_a\nabla v_k-\nabla v_k\|_{L^2(Q_{6/7}(x))}^2\right)=0\]
and therefore
\[\lim_{\delta\rightarrow0}\sup_{k\in\N}\sup_{\substack{a\in(h_k\Z)^n\\|a|\le\delta}}\|\tau_aw_k-w_k\|_{L^2(Q_{5/7}(x))}=0\,.\]

It remains to consider shifts $\tau_a$ where $a\notin (h_k\Z)^n$. This is possible because $w_k$ is piecewise constant on cubes of sidelength $h_k$. This easily implies that for any $a\in\R^n$ we have
\[\|\tau_aw_k-w_k\|_{L^2(Q_{9/14}(x))}\le C\sup_{\substack{b\in(h_k\Z)^n\\|b-a|\le h_k}}\|\tau_bw_k-w_k\|_{L^2(Q_{5/7}(x))}\,.\]
Combining this with the previous estimate we find that
\[\lim_{\delta\rightarrow0}\sup_{k\in\N}\sup_{\substack{a\in\R^n\\|a|\le\delta+h_k}}\|\tau_aw_k-w_k\|_{L^2(Q_{9/14}(x))}=0\,.\]
Because $h_k\rightarrow0$, this implies
\begin{equation}\label{e:decayfull2_3}
	\lim_{\substack{a\in\R^n\\|a|\rightarrow0}}\limsup_{k\rightarrow\infty}\|\tau_aw_k-w_k\|_{L^2(Q_{9/14}(x))}=0\,.
\end{equation}
We finally show that \eqref{e:decayfull2_3} already implies \eqref{e:decayfull2_2}. It follows from \eqref{e:decayfull2_3} that for every fixed $\varepsilon>0$ there are $\delta>0$, $K\in\N$ such that $\sup_{k\ge K}\|\tau_aw_k-w_k\|_{L^2(Q_{9/14}(x))}\le\varepsilon$ for all $a$ with $|a|\le\delta$. For the finitely many $k<K$, we use that $\lim_{\substack{a\in\R^n\\|a|\rightarrow0}}\|\tau_aw_k-w_k\|_{L^2(Q_{9/14}(x))}=0$ to see that for a potentially smaller $\delta'$ we have $\sup_{k\in\N}\|\tau_aw_k-w_k\|_{L^2(Q_{9/14}(x))}\le\varepsilon$ for all $a$ with $|a|\le\delta'$.

Therefore the sequence $(w_k)$ is precompact in $L^2(Q_{4/7}(x))$. Choose a subsequence (not relabeled) converging strongly to some $w\in L^2(Q_{4/7}(x))$.

\emph{Step 3.2: Strong convergence of $(\nabla^2v_k)$ and $w=\nabla^2v$}\\
We split $w$ into a smooth part and a part with small $L^2$-norm. Let $\varepsilon>0$ be arbitrary, and choose a $w^{(\varepsilon)}$ in $C_c^\infty(Q_{4/7})$ such that $\|w-w^{(\varepsilon)}\|_{L^2(Q_{4/7})}\le\varepsilon$. We denote the restriction of $w^{(\varepsilon)}$ to $(h_k\Z)^n$ by $w^{(\varepsilon)}_{h_k}$. Using Taylor expansion, one immediately verifies that then $I_{h_k}^{pc}w^{(\varepsilon)}_{h_k}$ and $\tilde{J}_{h_k}w^{(\varepsilon)}_{h_k}$ converge to $w^{(\varepsilon)}$ in $L^2(Q_{4/7})$ and $L^2(Q_{1/2})$, respectively.

This means in particular that
\[\lim_{k\rightarrow\infty}\|w^{(\varepsilon)}_{h_k}-\nabla_{h_k}^2u_{h_k}\|_{L^2(Q_{4/7})}=\|w^{(\epsilon)}-w\|_{L^2(Q_{4/7})}\le\varepsilon\,.\]
Using Proposition~\ref{p:interpolationproperties} vi), we conclude that
\[\limsup_{k\rightarrow\infty}\left\|\tilde{J}_{h_k}\left(w^{(\varepsilon)}_{h_k}-\nabla_{h_k}^2u_{h_k}\right)\right\|_{L^2(Q_{1/2})}\le C\varepsilon\,.\]
The left-hand side here equals $\limsup_{k\rightarrow\infty}\|w-\nabla^2v_k\|_{L^2(Q_{1/2})}$, and so we obtain
\[\limsup_{k\rightarrow\infty}\|w-\nabla^2v_k\|_{L^2(Q_{1/2})}\le C\epsilon\,.\]
Since $\varepsilon$ was arbitrary, we conclude that $(\nabla^2v_k)$ converges strongly in $L^2(Q_{1/2})$ to $w$. But we already know that $(\nabla^2v_k)$ converges weakly in $L^2(Q_{13/14})$ to $\nabla^2v$, so we obtain that $\nabla^2v=w$ in $Q_{1/2}$.

\emph{Step 4: Conclusion of the argument}\\
We proved that $w_k=I^{pc}_{h_k}\nabla_{h_k}^2u_{h_k}$ converges strongly in $L^2(Q_{1/2})$ to $\nabla^2v$. Because $\rho\le\frac12$ then also $\nabla_{h_k}^2u_{h_k}-\left[\nabla_{h_k}^2u_{h_k}\right]_{Q_{\rho}}$ converges strongly in $L^2(Q_{1/2})$ to $\nabla^2v-\left[\nabla^2v\right]_{Q_{\rho}}$, and so from \eqref{e:decayfull2_4} we conclude that
\[\left\|\nabla^2v-\left[\nabla^2v\right]_{Q_{\rho}}\right\|^2_{L^2(Q_\rho)}\ge \rho^{n+1}\,.\]
In addition we know that $\left\|\nabla^2v_k\right\|_{L^2(Q_{13/14})}\le C$ and that $\nabla^2v_k$ converges weakly in $L^2(Q_{13/14})$ to $\nabla^2v$. This implies
\begin{align*}
	\left\|\nabla^2v-\left[\nabla^2v\right]_{Q_{13/14}}\right\|^2_{L^2(Q_{13/14})}&\le\left\|\nabla^2v\right\|^2_{L^2(Q_{13/14})}\\
	&\le\liminf_{k\rightarrow\infty}\left\|\nabla^2v_k\right\|^2_{L^2(Q_{13/14})}\le C\,.
\end{align*}
In summary, we have proved that there is a constant $C_1$ independent of $\rho$ such that
\begin{equation}\label{e:decayfull2_5}
\left\|\nabla^2v-\left[\nabla^2v\right]_{Q_{\rho}}\right\|^2_{L^2(Q_\rho)}\ge \frac{\rho^{n+1}}{C_1}\left\|\nabla^2v-\left[\nabla^2v\right]_{Q_{13/14}}\right\|^2_{L^2(Q_{13/14})}\,.
\end{equation}
On the other hand, $\Delta^2v=0$ in $Q_{13/14}$, and thus Lemma~\ref{l:decayfullcontinuous} implies that
\[\left\|\nabla^2v-\left[\nabla^2v\right]_{Q_\rho}\right\|^2_{L^2(Q_\rho)}\le C_2\left(\frac{\rho}{\frac{13}{14}}\right)^{n+\frac32}\left\|\nabla^2v-\left[\nabla^2v\right]_{Q_{13/14}}\right\|^2_{L^2(Q_{13/14})}\]
for a constant $C_2$ independent of $\rho$.

This is a contradiction to \eqref{e:decayfull2_5} provided that we choose $\rho$ small enough, namely $\rho<\frac{1}{C_1^2C_2^2}\left(\frac{13}{14}\right)^{2n+3}$. So we finally fix a $\rho$ satisfying this condition, and proved that falsity of the claim leads to a contradiction. 
\end{proof}

Now we can return to Lemma~\ref{l:decayfull}.

\begin{proof}[Proof of Lemma~\ref{l:decayfull}]
We can assume w.l.o.g. that $x=0$.

We claim that for any $0<s'\le s\le r$ we have
\begin{equation}\label{e:decayfull_1}
	\left\|\nabla_h^2u_h-\left[\nabla_h^2u_h\right]_{Q_{s'}}\right\|^2_{L^2(Q_{s'})}\le C\left(\frac{s'}{s}\right)^{n+1}\left\|\nabla_h^2u_h-\left[\nabla_h^2u_h\right]_{Q_s}\right\|^2_{L^2(Q_s)}\,.
\end{equation}
To prove this estimate, observe first that we can assume $s'\ge\frac h2$, as otherwise the left-hand side is 0. We can also assume $\frac{s}{s'}\ge\frac{2M}{h}$ (where $M$ is the constant from Lemma~\ref{l:decayfull2}), as otherwise we can trivially estimate
\begin{align*}
\left\|\nabla_h^2u_h-\left[\nabla_h^2u_h\right]_{Q_{s'}}\right\|^2_{L^2(Q_{s'})}&\le \left\|\nabla_h^2u_h-\left[\nabla_h^2u_h\right]_{Q_{s}}\right\|^2_{L^2(Q_{s})}\\
&\le C\left(\frac{s'}{s}\right)^{n+1}\left\|\nabla_h^2u_h-\left[\nabla_h^2u_h\right]_{Q_s}\right\|^2_{L^2(Q_s)}\,,
\end{align*}
which holds for $C\ge\left(\frac{2M}{h}\right)^{n+1}$.

So we assume $s'\ge\frac h2$ and $\frac{s}{s'}\ge\frac{2M}{h}$. Then in particular $s\ge Mh$. Consider the $\rho$ from Lemma~\ref{l:decayfull2} and let $\kappa$ be the largest integer such that $\rho^\kappa s\ge\max(s',Mh)$. We can then apply Lemma~\ref{l:decayfull2} repeatedly with radii $s,\rho s,\ldots ,\rho^\kappa s$ to find
\[\left\|\nabla_h^2u_h-\left[\nabla_h^2u_h\right]_{Q_{\rho^\kappa s}}\right\|^2_{L^2(Q_{\rho^\kappa s})}\le\rho^{\kappa(n+1)}\left\|\nabla_h^2u_h-\left[\nabla_h^2u_h\right]_{Q_s}\right\|^2_{L^2(Q_s)}\,.\]
Because $s'\le \rho^\kappa s$, we also have
\[\left\|\nabla_h^2u_h-\left[\nabla_h^2u_h\right]_{Q_{s'}}\right\|^2_{L^2(Q_{s'})}\le \left\|\nabla_h^2u_h-\left[\nabla_h^2u_h\right]_{Q_{\rho^\kappa s}}\right\|^2_{L^2(Q_{\rho^\kappa s})}\,.\]
Here we have used the fact that $\|f-[f]_\Omega\|_{L^2(\Omega)}$ is monotone in $\Omega$. If we combine the last two estimates and observe that $\rho^{\kappa+1}s<\max(s',Mh)\le2Ms'$, i.e. $\rho^\kappa\le\frac{2M}{\rho}\frac{s'}{s}$, we indeed obtain \eqref{e:decayfull_1} with $C=\left( \frac{2M}{\rho}\right)^{n+1}$.

Now using \eqref{e:decayfull_1} to prove the lemma is a standard iteration argument as e.g. in \cite[Theorem 3.1]{Giaquinta1993}. For the sake of completeness we sketch the proof.

If we apply \eqref{e:decayfull_1} with $s=r$ and $s'=\frac{r}{2^\lambda}$ or $s'=\frac{r}{2^{\lambda+1}}$, we can estimate
\begin{align*}
	&\left\|\left[\nabla_h^2u_h\right]_{Q_{r/2^{\lambda+1}}}-\left[\nabla_h^2u_h\right]_{Q_{r/2^{\lambda}}}\right\|^2_{L^2(Q_{r/2^{\lambda+1}})}\\
	&\quad\le2\left\|\nabla_h^2u_h-\left[\nabla_h^2u_h\right]_{Q_{r/2^{\lambda}}}\right\|^2_{L^2(Q_{r/2^{\lambda}})}\\
	&\quad\quad+2\left\|\nabla_h^2u_h-\left[\nabla_h^2u_h\right]_{Q_{r/2^{\lambda+1}}}\right\|^2_{L^2(Q_{r/2^{\lambda+1}})}\\
	&\quad\le\frac{C}{2^{\lambda(n+1)}}\left\|\nabla_h^2u_h-\left[\nabla_h^2u_h\right]_{Q_r}\right\|^2_{L^2(Q_r)}
\end{align*}
and hence
\[\left|\left[\nabla_h^2u_h\right]_{Q_{r/2^{\lambda+1}}}-\left[\nabla_h^2u_h\right]_{Q_{r/2^{\lambda}}}\right|\le \frac{C}{r^\frac n2 2^\frac\lambda2}\left\|\nabla_h^2u_h-\left[\nabla_h^2u_h\right]_{Q_r}\right\|_{L^2(Q_r)}\,.\]
If we sum this for $\lambda=0,1,\ldots$ and observe that for $\lambda$ small enough \linebreak$\left[\nabla_h^2u_h\right]_{Q_{r/2^{\lambda}}}=\nabla_h^2u_h(0)$ we obtain
\[\left|\nabla_h^2u_h(0)-\left[\nabla_h^2u_h\right]_{Q_r}\right|\le \frac{C}{r^\frac n2}\left\|\nabla_h^2u_h-\left[\nabla_h^2u_h\right]_{Q_r}\right\|_{L^2(Q_r)}\,.\]
Now we can estimate
\begin{align*}
	\left|\nabla_h^2u_h(0)\right|^2&\le2\left|\nabla_h^2u_h(0)-\left[\nabla_h^2u_h\right]_{Q_r}\right|^2+2\left|\left[\nabla_h^2u_h\right]_{Q_r}\right|^2\\
	&\le\frac{C}{r^n}\left(\left\|\nabla_h^2u_h-\left[\nabla_h^2u_h\right]_{Q_r}\right\|^2_{L^2(Q_r)}+\left\|\left[\nabla_h^2u_h\right]_{Q_r}\right\|^2_{L^2(Q_r)}\right)\\
	&=\frac{C}{r^n}\|\nabla_h^2u_h\|^2_{L^2(Q_r)}\,,
\end{align*}
which proves the claim. 
\end{proof}

\subsection{Half-space}
In the half-space we want to prove the following statement, which is a slightly weaker analogue of Lemma~\ref{l:decayfull}:
\begin{lemma}\label{l:decayhalf}
Let $u_h\colon(h\Z)^n\rightarrow\R$, let $x\in(h\Z)^n$, $r>0$, $\nu\in\{e_1,-e_1,\ldots ,\allowbreak e_n,-e_n\}$. Suppose that $u_h(y)=0$ for all $y\in Q^h_r(x)$ such that $(y-x)\cdot\nu\le0$, and $\Delta_h^2u_h(y)=0$ for all $y\in Q_{r-h}^h(x)$ such that $(y-x)\cdot\nu>0$. Then, for any $s\le r$,
\[\|\nabla_h^2u_h\|_{L^2(Q_s(x))}\le C\left(\frac{s}{r}\right)^\frac n2\|\nabla_h^2u_h\|_{L^2(Q_r(x))}\,.\]
\end{lemma}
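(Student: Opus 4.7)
The plan is to adapt the three-stage template of Lemmas~\ref{l:decayfullcontinuous}--\ref{l:decayfull} to the half-space setting; the new feature is that the hypothesis $u_h\equiv 0$ on half of $Q^h_r(x)$ implicitly encodes a clamped boundary condition on the hyperplane $(y-x)\cdot\nu=0$. First I would prove the continuous analogue: for $v\in W^{2,2}(Q_r(x))$ with $v=0$ on $Q_r(x)\cap\{(y-x)\cdot\nu\le 0\}$ and $\Delta^2 v=0$ weakly on the complementary open half, there exist $\beta>0$ and $C>0$ so that
\[\|\nabla^2v\|_{L^2(Q_s(x))}^2\le C\left(\frac{s}{r}\right)^{n+\beta}\|\nabla^2v\|_{L^2(Q_r(x))}^2\qquad\forall\,s\le r/2.\]
Since $v$ vanishes on an open neighbourhood of the hyperplane on one side, the traces of both $v$ and $\nabla v$ there vanish, so $v$ is a weak solution of the clamped biharmonic problem on a half-cube. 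Classical Agmon--Douglis--Nirenberg / Schauder theory for the biharmonic operator with Dirichlet data on a smooth portion of the boundary gives $\nabla^2v\in C^{0,\beta/2}$ up to that flat portion inside $Q_{r/2}(x)$, whence the displayed inequality by combining with a local $L^\infty$--$L^2$ bound and scaling.

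Second, I would mirror Lemma~\ref{l:decayfull2} and prove the one-step discrete inequality $\|\nabla_h^2u_h\|_{L^2(Q_{\rho r}(x))}^2\le\rho^n\|\nabla_h^2u_h\|_{L^2(Q_r(x))}^2$ for some fixed $\rho\in(0,1/2)$ and all $h$ with $\rho r\ge Mh$. Assume for contradiction a sequence violating this for ever-larger $M$. After reducing to $x=0$ and $\nu=e_1$ by permutation and reflection of coordinates, rescaling to $r=1$, and normalizing $\|\nabla_{h_k}^2 u_{h_k}\|_{L^2(Q_1)}=1$, one extracts a weak limit $v\in W^{2,2}(Q_{13/14})$ of the interpolants $v_k=J_{h_k}u_{h_k}$. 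No affine subtraction is required here, because $u_{h_k}\equiv 0$ on the face $\{y_1=-1\}$ permits two successive applications of the Poincar\'e inequality on cubes with zero boundary values to yield $\|u_{h_k}\|_{W^{2,2}(Q_1)}\le C$. Exactly as in Step~2 of the proof of Lemma~\ref{l:decayfull2}, $v$ is weakly biharmonic on the upper half of $Q_{13/14}$; and by $L^2$ convergence of the piecewise-constant interpolants, $v$ vanishes on the lower half.

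To promote this to strong convergence $\nabla_{h_k}^2 u_{h_k}\to\nabla^2 v$ in $L^2(Q_{1/2})$, I would apply the Cacciopoli inequality (Lemma~\ref{l:Cacc}) to the discrete difference $\tau_a u_{h_k}-u_{h_k}$, which is discretely biharmonic on any subcube lying in the intersection of the two upper-half biharmonic domains; outside that intersection, $\tau_a u_{h_k}-u_{h_k}$ either vanishes identically (deep in the lower half, where both functions are zero) or lives on a thin strip of width $O(|a|)$ straddling the hyperplane, whose contribution is absorbed using the uniform $W^{2,2}$ bound and the compact Sobolev embedding of $\{v_k\}$. Combined with the local Kolmogorov--Riesz--Fr\'echet criterion (Lemma~\ref{l:kolmogorov}), this yields strong $L^2$ convergence of $\nabla_h^2u_{h_k}$ on $Q_{1/2}$, just as in Step~3 of Lemma~\ref{l:decayfull2}. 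Passing to the limit in the assumed violation and invoking the continuous decay above contradicts the one-step claim once $\rho$ is chosen so small that $C\rho^\beta<1$: the H\"older slack $\rho^\beta$ provided by Step~1 is precisely what is needed. The lemma itself now follows by the iteration argument of Lemma~\ref{l:decayfull}: for $\max(\rho r,Mh)\le s\le r$ the estimate is trivial by monotonicity of the $L^2$ norm (absorb $\rho^{-n/2}$ into $C$); for smaller $s$ one iterates the one-step inequality until the radius drops below $\max(s,Mh)$; and for $s$ below the grid scale the left-hand side is zero.

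The main obstacle is the boundary Hölder regularity in Step~1. Naive reductions, such as extending $v$ by even or odd reflection across the hyperplane, fail for the bilaplacian: neither extension is biharmonic across the boundary unless the third normal derivative of $v$ also vanishes, which is \emph{not} implied by the clamped condition $v=\partial_\nu v=0$, since the distributional formula shows $\partial_1^4(\tilde v)=\partial_1^4 v+2\delta_{y_1=0}\,\partial_1^3 v(0,\cdot)$ for the even extension. The proof therefore requires invoking Agmon--Douglis--Nirenberg boundary regularity (or an equivalent Campanato-type argument, cf.\ \cite{Campanato1980}) directly for the biharmonic operator on the half-cube with clamped data.
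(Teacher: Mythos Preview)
Your continuous estimate in Step~1 is false as stated, and this is the fatal gap. Take $v(y)=y_1^2$ for $y_1>0$ and $v=0$ for $y_1\le 0$. Then $v\in W^{2,2}(Q_r)$, $v$ vanishes on the lower half, and $\Delta^2 v=0$ in the upper half; yet $\nabla^2 v=2e_1\otimes e_1$ on $\{y_1>0\}$, so $\|\nabla^2 v\|_{L^2(Q_s)}^2=c\,s^n$ and the ratio is exactly $(s/r)^n$ with no extra $\rho^\beta$ slack. Boundary Schauder theory does give $\nabla^2 v\in C^{0,\alpha}$ up to the flat portion, but H\"older continuity controls only the \emph{oscillation} of $\nabla^2 v$, not its size: the tangential and mixed second derivatives vanish on $\{y_1=0\}$ (because $v=D_1v=0$ there), but the second normal derivative $D_1^2 v$ need not. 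Consequently your one-step discrete inequality $\|\nabla_h^2u_h\|_{L^2(Q_{\rho r})}^2\le\rho^n\|\nabla_h^2u_h\|_{L^2(Q_r)}^2$ cannot be obtained by compactness, since the limiting continuous statement with slack is simply not true.

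The paper repairs this exactly as in the interior case, by subtracting the obstruction: the continuous lemma (Lemma~\ref{l:decayhalfcontinuous}) bounds $\|\nabla^2 u-[D_1^2u]_{Q_{s,+}}e_1\otimes e_1\|_{L^2(Q_{s,+})}$ rather than $\|\nabla^2 u\|_{L^2(Q_s)}$, and the one-step discrete inequality (Lemma~\ref{l:decayhalf2}) is stated with the analogous subtraction of $[D^h_{-1}D^h_1 u_h]_{Q_{\rho r,+}}e_1\otimes e_1$. In the compactness argument one normalises by subtracting a multiple of the discrete ``bump'' $\omega_h(y)=\tfrac12 y_1(y_1+h)\mathbf 1_{y_1\ge 0}$, which vanishes on the lower half and has $D^h_{-1}D^h_1\omega_h=\mathbf 1_{y_1\ge 0}$, playing the role that subtracting $\tfrac12[\nabla^2 u]_{Q_1}:x\otimes x$ played in the interior. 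The final iteration then has to telescope the averages $[D^h_{-1}D^h_1 u_h]_{Q_{r/2^\lambda,+}}$ as in the proof of Lemma~\ref{l:decayfull}, before recombining to recover the plain $\|\nabla_h^2 u_h\|_{L^2(Q_s)}$ bound of the lemma. Your handling of strong convergence near the hyperplane via thin-strip estimates is plausible but rougher than the paper's: there one first gets $L^2_{\mathrm{loc}}$ convergence away from $\{y_1=0\}$ by the interior argument, and then rules out concentration at the hyperplane directly by Caccioppoli plus Poincar\'e (both available because $u_h$ vanishes on the lower half), which yields $\limsup_k\|\nabla_{h_k}^2 u_{h_k}\|_{L^2(Q_{\tilde r}(y))}^2\le C\|\nabla^2 v\|_{L^2(Q_{4\tilde r}(y))}^2\to 0$ as $\tilde r\to 0$.
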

The proof is mostly similar to that, Lemma~\ref{l:decayfull}, so we only give details where a new idea is required.

For $r>0$ let $Q_{r,+}=Q_r\cap\{x_1>0\}$. The main step in the proof of Lemma~\ref{l:decayhalf} will be to prove the following estimate.
\begin{lemma}\label{l:decayhalf2}
There exist constants $M\in\N$, $0<\rho<\frac12$ with the following property: Let $u_h\colon(h\Z)^n\rightarrow\R$, $r>0$ be such that $u_h(y)=0$ whenever $y\in Q^h_r$ and $y_1\le 0$, and $\Delta_h^2u_h(y)=0$ for all $y\in Q_{r-h}^h$ such that $y_1>0$. Assume that $\rho r\geq Mh$. Then we have
\begin{align*}
&\left\|\nabla_h^2u_h-\left[D^h_{-1}D^h_1u_h\right]_{Q_{\rho r,+}}e_1\otimes e_1\right\|^2_{L^2(Q_{\rho r,+})}\\
&\quad\le \rho^{n+1}\left\|\nabla_h^2u_h-\left[D^h_{-1}D^h_1u_h\right]_{Q_{r,+}}e_1\otimes e_1\right\|^2_{L^2(Q_{r,+})}\,.
\end{align*}
\end{lemma}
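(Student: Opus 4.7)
My plan is to follow the compactness scheme of Lemma~\ref{l:decayfull2}, adapted to the half-space. Two new features appear: the boundary condition $u_h = 0$ on $\{y_1 \le 0\}$ in the limit becomes the clamped condition $v = \partial_1 v = 0$ on $\{x_1 = 0\}$, and we subtract only the $(1,1)$-mode of the Hessian because on the clamped flat boundary $\nabla^2 v$ reduces to $\partial_1^2 v \, e_1 \otimes e_1$ (all other second derivatives are tangential derivatives of zero-data or normal derivatives of the zero normal-derivative). I argue by contradiction: extract sequences $h_k \to 0$ and $u_{h_k}$ (after rescaling $r_k = 1$) that satisfy the hypotheses but violate the conclusion for some $\rho$ to be chosen small.

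\textbf{Correction of the $(1,1)$-mode and normalization.} Since $c_k := [D^h_{-1}D^h_1 u_{h_k}]_{Q_{1,+}}$ may be unbounded, I first subtract an explicit correction. Define the lattice function $p_h(y) := y_1(y_1 + h)$ for $y_1 \ge 0$ and $p_h(y) := 0$ otherwise. A short direct calculation gives $D^h_{-i}D^h_j p_h = 2\delta_{i1}\delta_{j1}$ on $\{y_1 \ge 0\}$ and shows $\Delta_h^2 p_h$ is supported in the two layers $\{y_1 \in \{-h,0\}\}$. Hence $\tilde u_{h_k} := u_{h_k} - (c_k/2) p_{h_k}$ still vanishes on $\{y_1 \le 0\}$, is still biharmonic on $\{y_1 \ge h\}$, and satisfies $\nabla_h^2 \tilde u_{h_k} = \nabla_h^2 u_{h_k} - c_k e_1 \otimes e_1$ on $\{y_1 \ge 0\}$; the inequality being violated transforms identically under this substitution. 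After relabeling and rescaling I may therefore assume $[D^h_{-1}D^h_1 u_{h_k}]_{Q_{1,+}} = 0$ together with $\|\nabla_h^2 u_{h_k}\|_{L^2(Q_{1,+})} = 1$. Iterated discrete Poincaré along $x_1$-lines, using $u_{h_k} = 0$ on $\{y_1 \le 0\}$ and the resulting $D^h_1 u_{h_k} = 0$ on $\{y_1 = -h\}$ (together with the tangential Poincaré bound $\|D^h_j u_{h_k}\|_{L^2} \le C \|D^h_1 D^h_j u_{h_k}\|_{L^2}$ for $j \ne 1$), then yields $\|u_{h_k}\|_{W^{2,2}(Q_{1,+})} \le C$.

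\textbf{Compactness and identification of the limit.} Let $v_k = J_{h_k} u_{h_k}$; by Proposition~\ref{p:interpolationproperties} vi) these are bounded in $W^{2,2}$ and admit a weak $W^{2,2}(Q_{1,+})$-limit $v$ along a subsequence. The locality of $J_{h_k}$ together with $u_{h_k} \equiv 0$ on $\{y_1 \le 0\}$ imply $v \equiv 0$ on $\{x_1 \le 0\}$, and $W^{2,2}$-trace theory then gives $v = \partial_1 v = 0$ on $\{x_1 = 0\}$. The Taylor expansion / summation-by-parts computation of Step 2 of Lemma~\ref{l:decayfull2} carries over verbatim and gives $\Delta^2 v = 0$ weakly in $Q_{1,+}$. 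For the strong $L^2$-convergence of $\nabla_h^2 u_{h_k}$ to $\nabla^2 v$ on $Q_{\rho,+}$, I apply the Caccioppoli inequality (Lemma~\ref{l:Cacc}) to translates $\tau_a u_{h_k} - u_{h_k}$: its proof remains valid on cubes crossing $\{y_1 = 0\}$, because the only layer where $\Delta_h^2 u_h$ could fail to vanish (namely $\{y_1 = 0\}$) is contained in the set where the test function $\eta_h^4 u_h$ vanishes. Combined with Lemma~\ref{l:kolmogorov} this upgrades weak to strong $L^2$-convergence exactly as in Step 3 of Lemma~\ref{l:decayfull2}.

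\textbf{Continuous decay and contradiction.} Passing to the limit in the violated inequality yields
\[
\|\nabla^2 v - [\partial_1^2 v]_{Q_{\rho,+}} e_1 \otimes e_1\|^2_{L^2(Q_{\rho,+})} \ge \rho^{n+1}.
\]
On the other hand, boundary Schauder regularity for the clamped biharmonic problem (Agmon--Douglis--Nirenberg) applied locally near the flat part of the boundary shows that $v \in C^{2,3/4}$ up to $\{x_1 = 0\} \cap \overline{Q_{7/8,+}}$ with seminorm controlled by $\|\nabla^2 v\|_{L^2(Q_{1,+})} \le 1$. The clamped conditions force $\nabla^2 v(x',0) = \partial_1^2 v(x',0)\, e_1 \otimes e_1$, so a standard Campanato-type computation (estimating $|\nabla^2 v(x) - \partial_1^2 v(0)\, e_1 \otimes e_1|$ via Hölder continuity up to the boundary) yields the opposite inequality
\[
\|\nabla^2 v - [\partial_1^2 v]_{Q_{\rho,+}} e_1 \otimes e_1\|^2_{L^2(Q_{\rho,+})} \le C \rho^{n + 3/2}.
\]
Choosing $\rho$ small enough gives the required contradiction and fixes the constants $\rho, M$.

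The main technical obstacle is the interaction between the discrete correction $p_h$ and the continuous decay: $p_h$ must be chosen so delicately that $\tilde u_{h_k}$ inherits \emph{all three} hypotheses (vanishing on the lower half, biharmonicity on the upper half, the $e_1 \otimes e_1$-shift of the Hessian), and the continuous Campanato estimate must be formulated with precisely the partial averaging that appears in the discrete statement. Both points hinge on the fact that $e_1 \otimes e_1$ is the unique non-vanishing Hessian mode at the clamped boundary.
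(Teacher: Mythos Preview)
Your overall scheme matches the paper's, and the corrector $p_h$, the normalizations, the identification of the limit $v$ with clamped boundary data, and the boundary-Schauder contradiction are all essentially the same as in the paper's proof. The genuine gap is in your strong $L^2$-convergence step.

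You claim that the Caccioppoli inequality applies to $\tau_a u_{h_k}-u_{h_k}$ on cubes that cross $\{y_1=0\}$ because ``the only layer where $\Delta_h^2 u_h$ could fail to vanish \ldots\ is contained in the set where the test function $\eta_h^4 u_h$ vanishes.'' This is correct for $u_h$ itself, and for purely tangential shifts $a$ (since then $\tau_a u_h$ still vanishes on $\{y_1\le 0\}$). It fails, however, for shifts with a normal component. If $a_1>0$, then $\tau_a u_{h_k}(x)=u_{h_k}(x+a)$ is generically nonzero on the strip $\{-a_1<x_1\le 0\}$, so the test function $\eta_h^4(\tau_a u_{h_k}-u_{h_k})$ does \emph{not} vanish there; at the same time $\Delta_h^2(\tau_a u_{h_k}-u_{h_k})=\tau_a(\Delta_h^2 u_{h_k})-\Delta_h^2 u_{h_k}$ is supported precisely on a shifted boundary layer inside that strip. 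Hence the key identity $(\Delta_h^2 w,\eta_h^4 w)_{L^2}=0$ used in the proof of Lemma~\ref{l:Cacc} is violated, and you do not obtain the translate estimate needed for Lemma~\ref{l:kolmogorov} in the normal direction. Tangential translate control alone does not give $L^2$-precompactness on a set whose closure meets $\{x_1=0\}$.

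The paper resolves this by a separate ``nonconcentration at the boundary'' step (its Step~2): it first obtains strong convergence of $w_k=I^{pc}_{h_k}\nabla_{h_k}^2 u_{h_k}$ only in $L^2_{\rm loc}$ away from $\{x_1=0\}$ via the interior argument, and then shows, for each boundary point $y$, that $\lim_{\tilde r\to 0}\limsup_k\|\nabla_{h_k}^2 u_{h_k}\|_{L^2(Q_{\tilde r}(y))}=0$. This last estimate uses Caccioppoli applied to $u_{h_k}$ (not to differences) on small cubes centered at $y$---which \emph{is} legitimate since $u_{h_k}$ vanishes on the lower half---to bound $\|\nabla_{h_k}^2 u_{h_k}\|$ by lower-order norms; those converge to $\|v\|$ and $\|\nabla v\|$ by compact Sobolev embedding, and a continuous Poincar\'e inequality (using $v=0$ on $\{x_1<0\}$) then bounds them by $\|\nabla^2 v\|_{L^2(Q_{4\tilde r}(y))}\to 0$. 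Adding this nonconcentration argument to your proof would close the gap.
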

Using a compactness argument, we will deduce this estimate from the following continuous estimate.
\begin{lemma}\label{l:decayhalfcontinuous}
Let $0<s\le \frac r2$, $u\in W^{2,2}(Q_{r,+})$. Assume that $\Delta^2u=0$ weakly in $Q_{r,+}$ and that $u=0$, $D_1u=0$ on $\partial Q_{r,+}\cap\{x_1=0\}$ in the sense of traces. Then we have
\begin{align*}
	&\left\|\nabla^2u-\left[D_1^2u\right]_{Q_{s,+}}e_1\otimes e_1\right\|^2_{L^2(Q_{s,+})}\\
	&\quad\le C\left(\frac{s}{r}\right)^{n+\frac32}\left\|\nabla^2u-\left[D_1^2u\right]_{Q_{r,+}}e_1\otimes e_1\right\|^2_{L^2(Q_{r,+})}\,.
\end{align*}
\end{lemma}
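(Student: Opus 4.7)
The plan is to adapt the proof of Lemma~\ref{l:decayfullcontinuous} to the half-cube, with a reduction that respects the clamped boundary conditions on $\{x_1=0\}$. By scaling we may assume $r=1$. Set $c=[D_1^2u]_{Q_{1,+}}$ and define $\tilde u(x)=u(x)-\frac{c}{2}x_1^2$. Subtracting a pure quadratic in $x_1$ preserves the biharmonicity and both boundary conditions, so $\Delta^2\tilde u=0$ in $Q_{1,+}$ and $\tilde u=D_1\tilde u=0$ on $\partial Q_{1,+}\cap\{x_1=0\}$. Moreover $[D_1^2\tilde u]_{Q_{1,+}}=0$ and $D_iD_j\tilde u=D_iD_ju$ for $(i,j)\neq(1,1)$, so the two quantities in the claim are unchanged if $u$ is replaced by $\tilde u$. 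Hence it suffices to bound $\|\nabla^2\tilde u-[D_1^2\tilde u]_{Q_{s,+}}e_1\otimes e_1\|^2_{L^2(Q_{s,+})}$ by $Cs^{n+3/2}\|\nabla^2\tilde u\|^2_{L^2(Q_{1,+})}$.

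The key additional observation in the half-space setting is that $\tilde u=0$ and $D_1\tilde u=0$ on $\{x_1=0\}$ imply, by differentiating in the tangential directions $i,j\neq1$, that $D_iD_j\tilde u=0$ and $D_1D_j\tilde u=0$ on that face; only $D_1^2\tilde u$ need not vanish there. I would then invoke boundary regularity for the biharmonic Dirichlet problem on a flat face, via Agmon--Douglis--Nirenberg or the boundary Campanato theory in \cite[Cap.~II]{Campanato1980}, in the form
\[[\nabla^2\tilde u]_{C^{0,3/4}(Q_{1/2,+})}\le C\|\nabla^2\tilde u\|_{L^2(Q_{1,+})},\]
where $Q_{1/2,+}$ touches the flat face $\{x_1=0\}$ but is away from the rest of $\partial Q_{1,+}$.

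Having the two ingredients, I would split
\[\|\nabla^2\tilde u-[D_1^2\tilde u]_{Q_{s,+}}e_1\otimes e_1\|^2_{L^2(Q_{s,+})}=\|D_1^2\tilde u-[D_1^2\tilde u]_{Q_{s,+}}\|^2_{L^2(Q_{s,+})}+\sum_{(i,j)\neq(1,1)}\|D_iD_j\tilde u\|^2_{L^2(Q_{s,+})}.\]
The first term is controlled by $Cs^{n+3/2}[\nabla^2\tilde u]^2_{C^{0,3/4}(Q_{1/2,+})}$ by the same Jensen-style oscillation argument as in the full-space lemma. For each summand in the second term the Hölder bound combined with the vanishing on the flat face gives the pointwise estimate $|D_iD_j\tilde u(x)|\le[\nabla^2\tilde u]_{C^{0,3/4}(Q_{1/2,+})}\,x_1^{3/4}$, and integrating over $Q_{s,+}$ also yields a bound of order $s^{n+3/2}[\nabla^2\tilde u]^2_{C^{0,3/4}}$. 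Inserting the boundary Schauder estimate completes the proof. The main obstacle is precisely the up-to-the-flat-boundary Schauder estimate; once it is available, the fact that tangential and mixed components of $\nabla^2\tilde u$ vanish on $\{x_1=0\}$ is exactly what compensates for the absence of a recentering of those components (only the $e_1\otimes e_1$ component is recentered), so the exponent $n+\tfrac32$ is preserved across both terms.
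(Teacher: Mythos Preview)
Your proof is correct and follows the same approach as the paper, which simply notes that the result follows from Schauder estimates up to the boundary (citing \cite[Theorem 6.4.8]{Morrey1966}) as in Lemma~\ref{l:decayfullcontinuous}. You have spelled out the details that the paper leaves implicit, in particular the key observation that the tangential and mixed second derivatives vanish on $\{x_1=0\}$, which is precisely why only the $e_1\otimes e_1$ component needs recentering and why the exponent $n+\tfrac32$ survives for the remaining components via the pointwise bound $|D_iD_j\tilde u(x)|\le[\nabla^2\tilde u]_{C^{0,3/4}}x_1^{3/4}$.
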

\begin{proof}
This follows like Lemma~\ref{l:decayfullcontinuous} from Schauder estimates up to the boundary (cf. \cite[Theorem 6.4.8]{Morrey1966}). 
\end{proof}
\begin{proof}[Proof of Lemma~\ref{l:decayhalf2}]$ $\\
\emph{Step 1: Preparations}\\
We follow the same strategy as in the proof of Lemma~\ref{l:decayfull2}. That is, we assume that the claim is wrong for some fixed $\rho$, and consider a sequence of counterexamples $u_{h_k}$ and their interpolations $v_k=I_{h_k}u_{h_k}$. We can assume that $r_k=1$.

Next observe that for $\omega_h(x):=\begin{cases}\frac{x_1(x_1+h)}{2}&x_1\ge0\\0&x_1<0\end{cases}$ we have $\omega_h(x)=0$ if $x_1\leq0$ and $D^h_{-1}D^h_1\omega_h(x)=\begin{cases}1&x_1\ge0\\0&x_1<0\end{cases}$. So by replacing $u_h$ with $u_h-\left[D^h_{-1}D^h_1u_h\right]_{Q_{1,+}}\omega_h$ we can also assume $\left[D^h_{-1}D^h_1u_h\right]_{Q_{1,+}}=0$. 
Having normalized $u_h$ on $Q_{1,+}$ in this way, we now consider $Q_1$ again. We can assume 
\begin{equation}
	\left\|\nabla_{h_k}^2u_{h_k}\right\|_{L^2(Q_1)}=1\,.\label{e:decayhalf2_3}
\end{equation}
Note that
\[\left\|\nabla_{h_k}^2u_{h_k}\right\|^2_{L^2(Q_1)}=\left\|\nabla_{h_k}^2u_{h_k}\right\|^2_{L^2(Q_{1,+})}+\left\|\nabla_{h_k}^2u_{h_k}\right\|^2_{L^2((-h/2,0)\times(-1,1)^{n-1})}\]
and
\[\left\|\nabla_{h_k}^2u_{h_k}\right\|^2_{L^2\left((-h/2,0)\times(-1,1)^{n-1}\right)}=\left\|\nabla_{h_k}^2u_{h_k}\right\|^2_{L^2((0,h/2)\times(-1,1)^{n-1})}\,.\]
Now \eqref{e:decayhalf2_3} implies that $\left\|\nabla_{h_k}^2u_{h_k}\right\|_{L^2(Q_{1,+})}\ge\frac12$, so that
\begin{equation}\label{e:decayhalf2_2}
	\left\|\nabla_{h_k}^2u_{h_k}-\left[D^h_{-1}D^h_1u_{h_k}\right]_{Q_{\rho}}e_1\otimes e_1\right\|^2_{L^2(Q_{\rho})}>\frac{\rho^{n+1}}{2}\,.
\end{equation}

By \eqref{e:decayhalf2_3}, Proposition~\ref{p:interpolationproperties} and the Poincaré inequality with 0 boundary values $(v_k)$ is bounded in $W^{2,2}(Q_{3/4})$, and so a non-relabeled subsequence converges weakly to some $v$ in $W^{2,2}(Q_{3/4})$.

As in step 2 of the proof of Lemma~\ref{l:decayfull2} we can show that $\Delta^2v=0$ weakly in $Q_{3/4,+}$.
We have $u_{h_k}=0$ in $Q_1\cap\{x_1<0\}$ and hence $v_k=0$ in $Q_{3/4}\cap\{x_1<-3h_k\}$. Since $v_k$ converges to $v$ strongly in $L^2(Q_{3/4})$, $v=0$ in $\{x_1<0\}$, and because $v\in W^{2,2}(Q_{3/4})$, we obtain that $v=0$ and $D_1 v=0$ on  $Q_{3/4}\cap\{x_1=0\}$ in the sense of traces.

We define $w_k=I^{pc}_{h_k}\nabla_{h_k}^2u_{h_k}$ and want to show next that $\nabla^2v_k$ and $w_k$ converge to $\nabla^2v$ strongly in $L^2(Q_{1/2})$. We cannot directly reuse the argument in Step 3 of the proof of Lemma~\ref{l:decayfull2}, as we now have to deal with boundary values. However, we can use that argument on any cube $Q_{\tilde{r}}(\tilde{x})\subset Q_{5/8}\cap\{x_1>0\}$ to conclude that $\nabla^2v_k$ and $w_k$ converge to $\nabla^2v$ strongly in $L^2(Q_{\tilde{r}/2})$. Since we can do this for any such cube, we conclude that $\nabla^2v_k$ and $w_k$ converge to $\nabla^2v$ strongly in $L^2_{loc}(Q_{5/8,+})$.

Because $u_{h_k}=0$ in $Q_{5/8}\cap\{x_1<-3h_k\}$, we also have that $\nabla^2v_k$ and $w_k$ converge to 0 strongly in $L^2_{loc}(Q_{5/8}\cap\{x_1<0\})$. In summary, we have proved that  $\nabla^2v_k$ and $w_k$ converge to $\nabla^2v$ strongly in $L^2_{loc}(Q_{5/8}\setminus\{x_1=0\})$.

We still have to deal with $\{x_1=0\}$, and for this we need a new idea.

\emph{Step 2: Nonconcentration at the boundary}\\
We claim that for any $y\in Q_{1/2}\cap\{x_1=0\}$ we have
\begin{equation}\label{e:decayhalf2_1}
	\lim_{\tilde{r}\rightarrow0}\limsup_{k\rightarrow\infty}\left\|\nabla_{h_k}^2u_{h_k}\right\|_{L^2(Q_{\tilde{r}}(y))}=0\,.
\end{equation}
To see this, let $\tilde{r}>0$. For $h_k$ small enough Lemma~\ref{l:Cacc} and Proposition~\ref{p:interpolationproperties} imply that
\begin{align*}
	\left\|\nabla_{h_k}^2u_{h_k}\right\|^2_{L^2(Q_{\tilde{r}}(y))}&\le\frac{C}{\tilde{r}^2}\left\|\nabla_{h_k}u_{h_k}\right\|^2_{L^2(Q_{2\tilde{r}}(y))}+\frac{C}{\tilde{r}^4}\left\|u_{h_k}\right\|^2_{L^2(Q_{2\tilde{r}}(y))}\\
	&\le\frac{C}{\tilde{r}^2}\left\|\nabla v_k\right\|^2_{L^2(Q_{4\tilde{r}}(y))}+\frac{C}{\tilde{r}^4}\left\|v_k\right\|^2_{L^2(Q_{4\tilde{r}}(y))}\,.
\end{align*}
Now $v_k$ converges to $v$ weakly in $W^{2,2}(Q_{3/4})$, so $v_k$ and $\nabla v_k$ converge strongly in $L^2(Q_{3/4})$. Hence we can pass to the limit in the above inequality and find
\[\limsup_{k\rightarrow\infty}\left\|\nabla_{h_k}^2u_{h_k}\right\|^2_{L^2(Q_{\tilde{r}}(y))}\le\frac{C}{\tilde{r}^2}\left\|\nabla v\right\|^2_{L^2(Q_{4\tilde{r}}(y))}+\frac{C}{\tilde{r}^4}\left\|v\right\|^2_{L^2(Q_{4\tilde{r}}(y))}\,.\]
Furthermore $v$ is 0 in $Q_{4\tilde{r}}(y)\cap\{x_1<0\}$, so we can apply the Poincaré inequality to conclude
\[\limsup_{k\rightarrow\infty}\left\|\nabla_{h_k}^2u_{h_k}\right\|^2_{L^2(Q_{\tilde{r}}(y))}\le C\left\|\nabla^2v\right\|^2_{L^2(Q_{4\tilde{r}}(y))}\,.\]
Now $\nabla^2v$ is a fixed $L^2$-function, so if we pass to the limit $\tilde{r}\rightarrow0$ here, we indeed obtain \eqref{e:decayhalf2_1}.

It is easy to see that \eqref{e:decayhalf2_1} together with the fact that $w_k=I_{h_k}^{pc}\nabla_h^2u_{h_k}$ converges to $\nabla^2v$ strongly in $L^2_{loc}(Q_{5/8}\setminus\{x_1=0\})$ imply that $w_k$ actually converges to $\nabla^2v$ strongly in $L^2(Q_{1/2})$.

We have 
\[\limsup_{k\rightarrow\infty}\left\|\nabla^2v_k\right\|_{L^2(Q_{\tilde{r}}(y))}\le C\limsup_{k\rightarrow\infty}\left\|\nabla_{h_k}^2u_{h_k}\right\|_{L^2(Q_{2\tilde{r}}(y))}\]
and so from \eqref{e:decayhalf2_1} we also conclude
\[\lim_{\tilde{r}\rightarrow0}\limsup_{k\rightarrow\infty}\left\|\nabla^2v_k\right\|_{L^2(Q_{\tilde{r}}(y))}=0\,.\]
This in turn implies that also $\nabla^2v_k$ converges to $\nabla^2v$ strongly in $L^2(Q_{1/2})$.

\emph{Step 3: Conclusion of the argument}\\
We can now continue as in Step 4 of the proof of Lemma~\ref{l:decayfull2}: The strong convergence of $w_k$ to $\nabla^2v$ allows us to conclude from \eqref{e:decayhalf2_2} that
\[\left\|\nabla^2v-\left[D_1^2v\right]_{Q_{\rho,+}}e_1\otimes e_1\right\|^2_{L^2(Q_{\rho,+})}\ge\frac{\rho^{n+1}}{2}\,.\]
On the other hand, we have
\[\left\|\nabla^2v-\left[D_1^2v\right]_{Q_{3/4,+}}e_1\otimes e_1\right\|^2_{L^2(Q_{3/4,+})}\le C\]
and it is easy to check that we arrive at a contradiction to Lemma~\ref{l:decayhalfcontinuous} once we choose $\rho$ small enough. 
\end{proof}

\begin{proof}[Proof of Lemma~\ref{l:decayhalf}]
The proof is similar to the first half of the proof of Lemma~\ref{l:decayfull}: One can assume that $x=0$, $\nu=e_1$. Then one first proves that, for any $0<s'\le s\le r$,
\begin{align*}
	&\left\|\nabla_h^2u_h-\left[D^h_{-1}D^h_1u_h\right]_{Q_{s',+}}e_1\otimes e_1\right\|^2_{L^2(Q_{s',+})}\nonumber\\
	&\quad\le C\left(\frac{s'}{s}\right)^{n+1}\left\|\nabla_h^2u_h-\left[D^h_{-1}D^h_1u_h\right]_{Q_{s,+}}e_1\otimes e_1\right\|^2_{L^2(Q_{s,+})}\,,
\end{align*}
which already looks similar to the claimed estimate. We can again use this with $s=r$ and $s'=\frac{r}{2^\lambda}$ or $s'=\frac{r}{2^{\lambda+1}}$ to conclude
\begin{align*}&\left|\left[D^h_{-1}D^h_1u_h\right]_{Q_{r/2^{\lambda+1},+}}-\left[D^h_{-1}D^h_1u_h\right]_{Q_{r/2^{\lambda},+}}\right|\\
&\quad\le \frac{C}{r^\frac n2 2^\frac\lambda2}\left\|\nabla_h^2u_h-\left[D^h_{-1}D^h_1u_h\right]_{Q_{r,+}}e_1\otimes e_1\right\|_{L^2(Q_{r,+})}\,.
\end{align*}
Let $\lambda_0$ be the largest integer such that $\frac{r}{2^{\lambda_0}}\ge s$. We can apply this estimate with radii $r,\frac r2,\ldots,\frac{r}{2^{\lambda_0-1}}$ and sum to conclude
\begin{align*}
	&\left|\left[D^h_{-1}D^h_1u_h\right]_{Q_{r/2^{\lambda_0},+}}-\left[D^h_{-1}D^h_1u_h\right]_{Q_{r,+}}\right|\\
	&\quad\le \frac{C}{r^\frac n2}\left\|\nabla_h^2u_h-\left[D^h_{-1}D^h_1u_h\right]_{Q_{r,+}}e_1\otimes e_1\right\|_{L^2(Q_{r,+})}\,.
\end{align*}
Using all this, we can estimate
\begin{align*}
	&\left\|\nabla_h^2u_h\right\|^2_{L^2(Q_{s,+})}\le\left\|\nabla_h^2u_h\right\|^2_{L^2(Q_{r/2^{\lambda_0},+})}\\
	&\quad\le	2\left\|\nabla_h^2u_h-\left[D^h_{-1}D^h_1u_h\right]_{Q_{r/2^{\lambda_0},+}}e_1\otimes e_1\right\|^2_{L^2(Q_{r/2^{\lambda_0},+})}\\
	&\quad\quad+2\left\|\left[D^h_{-1}D^h_1u_h\right]_{Q_{r/2^{\lambda_0},+}}\right\|^2_{L^2(Q_{r/2^{\lambda_0},+})}\\
	&\quad\le\left(\frac{C}{2^{\lambda_0(n+1)}}+\frac{C}{2^{\lambda_0n}}\right)\left\|\nabla_h^2u_h-\left[D^h_{-1}D^h_1u_h\right]_{Q_{r,+}}e_1\otimes e_1\right\|^2_{L^2(Q_{r,+})}\\
	&\quad\quad+\frac{C}{2^{\lambda_0n}}\left\|\left[D^h_{-1}D^h_1u_h\right]_{Q_{r,+}}\right\|^2_{L^2(Q_{r,+})}\\
	&\quad\leq \frac{C}{2^{\lambda_0n}}\left\|\nabla_h^2u_h\right\|^2_{L^2(Q_{r,+})}\,,
\end{align*}
which implies
\begin{align}
	\left\|\nabla_h^2u_h\right\|^2_{L^2(Q_{s,+})}&\le C\left(\frac{r}{s}\right)^n\left\|\nabla_h^2u_h\right\|^2_{L^2(Q_{r,+})}\nonumber\\
	&\le C\left(\frac{r}{s}\right)^n\left\|\nabla_h^2u_h\right\|^2_{L^2(Q_r)}\,.\label{e:decayhalf_1}
\end{align}
Now by the same argument as in Step 1 of the proof of Lemma~\ref{l:decayhalf2} we have
\[\left\|\nabla_h^2u_h\right\|^2_{L^2(Q_s)}\le 2\left\|\nabla_h^2u_h\right\|^2_{L^2(Q_{s+})}\,.\]
Combining this with \eqref{e:decayhalf_1} yields the result. 
\end{proof}

\subsection{Edges and vertices}\label{s:edgesvertices}
It remains to prove the analogue of Lemma~\ref{l:decayhalf} near edges (in 3D) and vertices (in 2D and 3D). The actual compactness argument requires no new idea, so we will only give a very brief sketch of the proofs. However, this time the continuous estimate require a bit more work, so we will go into detail there.
Let us first state the two results:
\begin{lemma}\label{l:decayedge}
Let $u_h\colon(h\Z)^3\rightarrow\R$, let $x\in(h\Z)^3$, $r>0$, $\nu_1,\nu_2\in\{e_1,-e_1,\ldots ,\allowbreak e_n,-e_n\}$ such that $\nu_1\neq\pm\nu_2$. Suppose that $u_h(y)=0$ for all $y\in Q^h_r(x)$ such that $(y-x)\cdot\nu_1\le0$ or $(y-x)\cdot\nu_2\le0$, and $\Delta_h^2u_h(y)=0$ for all $y\in Q_{r-h}^h(x)$ such that $(y-x)\cdot\nu_1>0$ and $(y-x)\cdot\nu_2>0$. Then, for any $s\le r$,
\[\|\nabla_h^2u_h\|_{L^2(Q_s(x))}\le  C\left(\frac{s}{r}\right)^\frac 32\|\nabla_h^2u_h\|_{L^2(Q_r(x))}\,.\]
\end{lemma}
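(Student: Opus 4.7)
The proof will follow the same compactness-plus-iteration scheme as the proofs of Lemma~\ref{l:decayfull} and Lemma~\ref{l:decayhalf}, modified to accommodate the two-face clamped boundary condition at a right-angle dihedral edge. After translating so that $x=0$ and, if necessary, reflecting to arrange $\nu_1=e_1$, $\nu_2=e_2$, set $Q_{r,++}:=Q_r\cap\{y_1>0,\ y_2>0\}$. The reduction to a one-step decay statement of the form
\[
\|\nabla_h^2u_h\|_{L^2(Q_{\rho r})}^2\le\rho^{3}\|\nabla_h^2u_h\|_{L^2(Q_r)}^2,\qquad \rho r\ge Mh,
\]
for appropriate $\rho\in(0,\tfrac12)$ and $M\in\N$, followed by geometric iteration as in the last part of the proof of Lemma~\ref{l:decayfull}, would yield the full result (note there is no average to subtract because both traces of $u_h$ vanish, so Poincaré pins down everything).

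The continuous input needed is a wedge analogue of Lemma~\ref{l:decayfullcontinuous}: there exists $\delta>0$ such that for any $u\in W^{2,2}(Q_{r,++})$ with $\Delta^2u=0$ weakly in $Q_{r,++}$ and with $u=0$, $\partial_{\nu_i}u=0$ (in the trace sense) on the two flat faces $Q_r\cap\{x_i=0\}$, $i=1,2$, one has
\[
\|\nabla^2 u\|_{L^2(Q_{s,++})}^2\le C\left(\tfrac{s}{r}\right)^{3+2\delta}\|\nabla^2 u\|_{L^2(Q_{r,++})}^2 \qquad (0<s\le r/2).
\]
This is the standard consequence of the Kondratiev--Maz'ya--Nazarov--Plamenevskii theory of elliptic boundary value problems at dihedral edges: the singular exponents for the clamped biharmonic equation on the quarter-plane cross-section all have real part strictly greater than $3/2$, so that $\nabla^2u$ lies in a Campanato space yielding the desired $L^2$-Morrey decay. (This is the edge counterpart of the corner estimate invoked for the $2$D vertex analysis in Lemma~\ref{l:decayvertexcontinuous}.)

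With this in hand, the one-step decay is proved by contradiction exactly as Lemma~\ref{l:decayhalf2}. Take a sequence of counterexamples with $h_k\to0$, rescale so that $r_k=1$ and $\|\nabla_{h_k}^2u_{h_k}\|_{L^2(Q_1)}=1$, and let $v_k=J_{h_k}u_{h_k}$. The vanishing of $u_{h_k}$ on both $\{y_1\le0\}$ and $\{y_2\le0\}$ together with the Poincaré inequality makes $v_k$ bounded in $W^{2,2}(Q_{3/4})$, and a (non-relabeled) subsequence converges weakly to $v$. The passage of $\Delta_{h_k}^2u_{h_k}=0$ to $\Delta^2v=0$ on $Q_{3/4,++}$ is verbatim Step~2 of the proof of Lemma~\ref{l:decayhalf2}, while the trace identities $v=0$, $\nabla v=0$ on both $\{x_1=0\}$ and $\{x_2=0\}$ are obtained by the same $L^2$ extension argument applied separately to the two faces. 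Strong $L^2(Q_{1/2})$-convergence of $\nabla^2v_k$ and of $w_k=I^{pc}_{h_k}\nabla_{h_k}^2u_{h_k}$ to $\nabla^2v$ is obtained as in Step~3 of Lemma~\ref{l:decayfull2} on every cube strictly inside $Q_{5/8,++}$, as in Step~1 of Lemma~\ref{l:decayhalf2} at regular points of the two flat faces, and by the Caccioppoli-based nonconcentration argument of Step~2 of Lemma~\ref{l:decayhalf2} near the edge $\{x_1=x_2=0\}$ (the Poincaré step goes through because $v$ vanishes on a fixed positive fraction of each small cube $Q_{\tilde r}(y)$ centred at an edge point). The failed one-step estimate then forces $\|\nabla^2v\|_{L^2(Q_{\rho,++})}^2\ge\rho^3$, contradicting the continuous estimate $\le C\rho^{3+2\delta}$ as soon as $\rho$ is sufficiently small.

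The hard part of this program is the continuous decay estimate: one must check that no singular exponent of the clamped biharmonic at a right-angle dihedral edge has real part $\le 3/2$. Everything else -- the compactness argument, the treatment of the edge via Caccioppoli, and the iteration to the scaled estimate -- is a mechanical adaptation of the half-space proof, with only cosmetic changes to accommodate the extra flat face.
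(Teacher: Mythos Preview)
Your proposal is correct and follows essentially the same route as the paper: reduction to a one-step decay (the paper's Lemma~\ref{l:decayedge2}), proof of that one-step decay by the compactness argument with Caccioppoli-based nonconcentration at the edge, and iteration. The only substantive difference is in how the continuous input is obtained: you invoke the Kondratiev--Maz'ya--Plamenevskii spectral theory abstractly, whereas the paper (Lemma~\ref{l:decayedgecontinuous}) writes down an explicit Green's function representation and cites the pointwise bound from \cite[Theorem~2.5.4]{Mazya2010} together with the numerical value $\delta_\pm\approx2.73959>1$ from \cite[Section~4.3]{Mazya2010} to get the decay exponent; this makes the ``hard part'' you flag entirely explicit.
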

\begin{lemma}\label{l:decayvertex}
Let $n=2$ or $n=3$, $u_h\colon(h\Z)^n\rightarrow\R$, let $x\in(h\Z)^n$, $r>0$, $\nu_i\in\{e_i,-e_i\}$ for $i\in\{1,\ldots ,n\}$. Suppose that $u_h(y)=0$ for all $y\in Q^h_r(x)$ such that $(y-x)\cdot\nu_i\le0$ for at least one $i$, and $\Delta_h^2u_h(y)=0$ for all $y\in Q_{r-h}^h(x)$ such that $(y-x)\cdot\nu_i>0$ for all $i$. Then, for any $s\le r$,
\[\|\nabla_h^2u_h\|_{L^2(Q_s(x))}\le  C\left(\frac{s}{r}\right)^\frac n2\|\nabla_h^2u_h\|_{L^2(Q_r(x))}\,.\]
\end{lemma}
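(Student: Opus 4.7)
My plan is to follow the two-step scheme established in the proofs of Lemma~\ref{l:decayfull} and Lemma~\ref{l:decayhalf}: first prove a one-step discrete decay estimate via a compactness argument based on a continuous counterpart, and then iterate across dyadic scales. By translation and reflection we may assume $x = 0$ and $\nu_i = e_i$ for all $i$, so $u_h$ vanishes outside the positive cone $K = \{y \in \R^n : y_i > 0 \ \forall i\}$ and $\Delta_h^2 u_h = 0$ in $Q_{r-h}^h \cap K$. Since $u_h = 0$ off $K$, $\nabla_h^2 u_h$ can be nonzero only within $h$ of $K$, so it suffices to bound $\|\nabla_h^2 u_h\|_{L^2(Q_s \cap K)}$ by the corresponding quantity on $Q_r \cap K$.

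The key input is a continuous decay lemma analogous to Lemma~\ref{l:decayhalfcontinuous}: for any $u \in W^{2,2}(Q_r \cap K)$ with $\Delta^2 u = 0$ weakly in $Q_r \cap K$ and with $u = 0$, $\nabla u = 0$ on $\partial K \cap Q_r$ in the trace sense, one has
\begin{equation*}
\|\nabla^2 u\|_{L^2(Q_s \cap K)}^2 \le C (s/r)^{n+\delta}\,\|\nabla^2 u\|_{L^2(Q_r \cap K)}^2
\end{equation*}
for all $0 < s \le r/2$ and some $\delta > 0$. This is a Kondratev--Dauge regularity statement at a polyhedral vertex of the biharmonic operator with clamped boundary conditions: the lowest singular exponent $\gamma_0$ for the model problem on $K$ is strictly greater than $2$ (for the 2D $90^{\circ}$ corner, $\gamma_0 \approx 3.47918$; the 3D edge case reduces to this by separation in the edge direction, and the 3D vertex admits an analogous estimate via the spherical eigenvalue problem on $S^2 \cap K$). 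Combined with interior Schauder estimates and flat-boundary regularity on dyadic annuli, Kondratev's asymptotic expansion near the vertex yields the displayed decay with any $\delta < 2(\gamma_0 - 2)$.

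Given this, the one-step discrete estimate
\begin{equation*}
\|\nabla_h^2 u_h\|_{L^2(Q_{\rho r} \cap K)}^2 \le \rho^{n+\delta'}\,\|\nabla_h^2 u_h\|_{L^2(Q_r \cap K)}^2
\end{equation*}
for some fixed $\delta' \in (0, \delta)$, small $\rho$, and $\rho r \ge M h$, follows by contradiction in essentially the same way as Lemma~\ref{l:decayhalf2}. Assuming a failing sequence $u_{h_k}$ with $r_k = 1$ and $\|\nabla_{h_k}^2 u_{h_k}\|_{L^2(Q_1 \cap K)} = 1$, one passes to a weak $W^{2,2}$-limit $v$ of the interpolations $v_k = J_{h_k} u_{h_k}$, verifies that $v$ is weakly biharmonic on $Q_{3/4} \cap K$ and satisfies $v = \nabla v = 0$ on $\partial K \cap Q_{3/4}$ in trace sense, and then upgrades to strong $L^2$ convergence of $\nabla^2 v_k$ and $w_k = I_{h_k}^{pc}\nabla_{h_k}^2 u_{h_k}$ on $Q_{1/2} \cap K$. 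Strong convergence in the interior of $K$ comes from the Caccioppoli inequality and the Kolmogorov--Riesz--Fréchet criterion as in Step 3 of the proof of Lemma~\ref{l:decayfull2}; nonconcentration on the regular faces of $\partial K \cap Q_{1/2}$ is handled exactly as in Step 2 of the proof of Lemma~\ref{l:decayhalf2}; and the same nonconcentration argument, now using the Caccioppoli and Poincaré inequalities across the two (or three) adjacent vanishing faces, takes care of the lower-dimensional strata (edges and the vertex itself). Applying the continuous decay lemma to $v$ then contradicts the assumed lower bound $\|\nabla^2 v\|_{L^2(Q_{\rho}\cap K)}^2 > \rho^{n+\delta'}$ once $\rho$ is chosen small enough.

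The iteration is then the standard Campanato-style argument from the proof of Lemma~\ref{l:decayfull}: the one-step estimate at scales $r, \rho r, \rho^2 r, \ldots$ combines to give $\|\nabla_h^2 u_h\|_{L^2(Q_s \cap K)}^2 \le C(s/r)^{n+\delta'}\|\nabla_h^2 u_h\|_{L^2(Q_r \cap K)}^2$ for all $s \le r$, and the case $s \le Mh$ is handled trivially. Since $n + \delta' > n$ this implies the claimed estimate. I expect the main obstacle to be the rigorous verification of the continuous vertex decay lemma: for the 2D corner and the 3D edge it reduces to the well-studied two-dimensional biharmonic wedge problem, but the 3D vertex requires a separate analysis of the spherical eigenvalue problem on the spherical triangle $S^2 \cap K$ to verify $\gamma_0 > 2$, and a careful matching of interior regularity with the weighted Sobolev spaces in which Kondratev-type expansions live.
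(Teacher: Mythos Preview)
Your proposal is correct and follows essentially the same route as the paper: a one-step discrete decay estimate obtained by a compactness argument against a continuous vertex decay lemma, followed by a dyadic iteration. The structure of the compactness argument (interior strong convergence via Caccioppoli and Kolmogorov--Riesz--Fr\'echet, nonconcentration at flat faces, then at lower strata) matches the paper exactly.

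The one substantive difference concerns the continuous input. You invoke Kondratev--Dauge theory abstractly and flag the 3D vertex exponent $\gamma_0 > 2$ as the main obstacle. The paper instead proves the continuous decay (Lemma~\ref{l:decayvertexcontinuous}) by a direct representation formula $\nabla^2 u(x) = \int \nabla_x^2 G(x,\xi) f(\xi)\,d\xi$ with $f = \Delta^2(\eta u)$ supported in an annulus, and then inserts explicit pointwise Green's function bounds taken from \cite{Kozlov1997} (for $n=2$) and \cite[Theorem~3.4.5 and Section~4.3]{Mazya2010} (for $n=3$); the latter reference already contains the needed spectral bound $\Lambda_+ \ge 3$, so the concern you raise at the end is resolved by citation rather than by a new spherical eigenvalue computation. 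A minor cosmetic point: the paper states the one-step discrete decay with exponent $\rho^n$ rather than $\rho^{n+\delta'}$, which is all that is needed for the final $(s/r)^{n/2}$ conclusion; your stronger version is of course fine.
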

\begin{proof}[Proof of Lemma~\ref{l:decayedge} and Lemma~\ref{l:decayvertex}]
This follows easily from the following two lemmata. 
\end{proof}
\begin{lemma}\label{l:decayedge2}
There are constants $M\in\N$, $0<\rho<\frac12$ with the following property: let\\ $u_h\colon(h\Z)^3\rightarrow\R$, $r>0$, such that $u_h(y)=0$ for all $y\in Q^h_r$ such that $y_1\le 0$ or $y_2\le0$, and $\Delta_h^2u_h(y)=0$ for all $y\in Q_{r-h}^h(x)$ such that $y_1>0$ and $y_2>0$. Then we have that
\[\left\|\nabla_h^2u_h\right\|^2_{L^2(Q_{\rho r})}\le \rho^n\left\|\nabla_h^2u_h\right\|^2_{L^2(Q_r)}\,.\]
\end{lemma}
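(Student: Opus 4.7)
The proof will proceed by a compactness argument modeled closely on the proofs of Lemma~\ref{l:decayfull2} and Lemma~\ref{l:decayhalf2}; only the geometry of the singular set changes (it is now the union of two faces $\{y_1=0\}$ and $\{y_2=0\}$). The plan is to fix small $\rho,M$ to be determined, assume the claim fails, and extract a sequence of counterexamples $u_{h_k}$ on $(h_k\Z)^3$ with $h_k\to 0$. After rescaling we may assume $r_k=1$, and after scalar normalization we may take $\|\nabla_{h_k}^2 u_{h_k}\|_{L^2(Q_1)}=1$. Unlike in the half-space case, the full vanishing of $u_{h_k}$ on the two faces $\{y_1\le 0\}$ and $\{y_2\le 0\}$ of $Q_1$ lets us apply the Poincaré inequality with zero boundary values directly, with no need to subtract averages. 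The interpolations $v_k=J_{h_k}u_{h_k}$ are thus bounded in $W^{2,2}(Q_{3/4})$ and a subsequence converges weakly to some $v\in W^{2,2}(Q_{3/4})$.

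Next I would verify the continuum limit equation. Exactly as in Step~2 of the proof of Lemma~\ref{l:decayfull2} (testing with smooth $\varphi\in C_c^\infty(Q_{3/4}\cap\{x_1>0,x_2>0\})$, applying \eqref{e:propertyJ}, and Taylor expanding $N^{\mu}_{h_k}$), the limit $v$ satisfies $\Delta^2 v=0$ weakly in the wedge $W:=Q_{3/4}\cap\{x_1>0,x_2>0\}$. The zero boundary values $u_{h_k}=0$ on $\{y_1\le 0\}\cup\{y_2\le 0\}$ pass to the limit in $L^2$, so $v=0$ on $W\cap(\{x_1=0\}\cup\{x_2=0\})$, and since $v\in W^{2,2}$, the traces of $\nabla v$ on those two faces also vanish.

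For strong convergence of $w_k:=I_{h_k}^{pc}\nabla_{h_k}^2 u_{h_k}$ and $\nabla^2 v_k$ in $L^2(Q_{1/2})$ I would repeat Step~3 of Lemma~\ref{l:decayfull2} on every subcube $Q_{\tilde r}(\tilde x)\subset W$ to get strong $L^2_{\mathrm{loc}}$ convergence in $Q_{5/8}\setminus(\{x_1=0\}\cup\{x_2=0\})$, and then adapt the nonconcentration argument of Step~2 of Lemma~\ref{l:decayhalf2}: for any $y$ on either face, Cacciopoli (Lemma~\ref{l:Cacc}) combined with Proposition~\ref{p:interpolationproperties} and the Poincaré inequality (using that $v$ vanishes on a face meeting $Q_{4\tilde r}(y)$) yields
\[
\limsup_{k\to\infty}\|\nabla_{h_k}^2 u_{h_k}\|_{L^2(Q_{\tilde r}(y))}^2 \le C\|\nabla^2 v\|_{L^2(Q_{4\tilde r}(y))}^2\xrightarrow{\tilde r\to 0}0.
\]
This promotes $L^2_{\mathrm{loc}}$ convergence off the singular set to strong $L^2$ convergence on $Q_{1/2}$. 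Hence passing to the limit in the reversed inequality produces
\[
\|\nabla^2 v\|_{L^2(Q_\rho\cap W)}^2 \ge \rho^{n}\|\nabla^2 v\|_{L^2(Q_{3/4}\cap W)}^2.
\]

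The contradiction must then come from a continuous decay estimate for biharmonic functions on $W$ with zero Dirichlet data on $W\cap(\{x_1=0\}\cup\{x_2=0\})$ of the form
\[
\|\nabla^2 v\|_{L^2(Q_s\cap W)}^2 \le C\left(\tfrac{s}{r}\right)^{n+\theta}\|\nabla^2 v\|_{L^2(Q_r\cap W)}^2
\]
for some $\theta>0$ (the analogue of Lemma~\ref{l:decayhalfcontinuous} at an edge). Unlike in the half-space, Schauder estimates are not available up to the edge, and the decay rate is governed by the smallest singular exponent of the biharmonic operator in the wedge; I expect this to be the main obstacle. However, since the target exponent $\rho^n$ is strictly less than what one gets in the interior or at a regular face, one needs only any $\theta>0$, which is furnished by the general theory of biharmonic functions at edges that the paper develops (see Lemma~\ref{l:decayvertexcontinuous} and the surrounding discussion of \eqref{e:decay_corner}). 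Choosing $\rho$ so small that $C\rho^{n+\theta}<\rho^n/(3/4)^{n+\theta}$ contradicts the displayed lower bound and completes the proof.
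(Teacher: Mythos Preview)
Your proposal is correct and follows essentially the same compactness argument as the paper's own proof (which handles Lemmas~\ref{l:decayedge2} and~\ref{l:decayvertex2} simultaneously, and likewise notes that no subtraction of averages is needed here). The one cosmetic point: the continuous edge decay estimate you invoke at the end is exactly Lemma~\ref{l:decayedgecontinuous} rather than Lemma~\ref{l:decayvertexcontinuous}, and the limiting inequality you actually obtain is $\|\nabla^2 v\|_{L^2(Q_\rho)}^2\ge c\rho^n$ together with $\|\nabla^2 v\|_{L^2(Q_{3/4})}^2\le C$, which gives the contradiction for small $\rho$.
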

\begin{lemma}\label{l:decayvertex2}
There are constants $M\in\N$, $0<\rho<\frac12$ with the following property: let $n=2$ or $n=3$, $u_h\colon(h\Z)^n\rightarrow\R$, $r>0$, such that $u_h(y)=0$ for all $y\in Q^h_r$ such that $y_i\le 0$ for at least one $i\in\{1,\ldots ,n\}$, and $\Delta_h^2u_h(y)=0$ for all $y\in Q_{r-h}^h$ such that $y_i>0$ for all $i$. Assume that $\rho r\geq Mh$. Then we have that
\[\left\|\nabla_h^2u_h\right\|^2_{L^2(Q_{\rho r})}\le \rho^n\left\|\nabla_h^2u_h\right\|^2_{L^2(Q_r)}\,.\]
\end{lemma}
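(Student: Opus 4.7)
The plan is to proceed by contradiction in the spirit of the proofs of Lemmas~\ref{l:decayfull2} and~\ref{l:decayhalf2}. Suppose the statement fails for some fixed $\rho\le\frac12$ to be chosen later and every $M\in\N$. Then for each $k\in\N$ we obtain $M_k\ge k$, $h_k>0$, $r_k>0$ with $\rho r_k\ge M_kh_k$, and a function $u_{h_k}$ satisfying the hypotheses (with vertex at $0$) such that
\[
 \|\nabla_{h_k}^2u_{h_k}\|_{L^2(Q_{\rho r_k})}^2>\rho^n\|\nabla_{h_k}^2u_{h_k}\|_{L^2(Q_{r_k})}^2.
\]
Rescaling to $r_k=1$ forces $h_k\le\rho/M_k\to0$. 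Normalize so $\|\nabla_{h_k}^2u_{h_k}\|_{L^2(Q_1)}=1$, hence $\|\nabla_{h_k}^2u_{h_k}\|_{L^2(Q_\rho)}^2>\rho^n$. Set $v_k=J_{h_k}u_{h_k}$. Since $u_{h_k}$ vanishes on every face of $Q_1$ meeting the vertex (i.e.\ on $Q_1\cap\bigcup_i\{y_i\le0\}$), the discrete Poincaré inequality and Proposition~\ref{p:interpolationproperties}\,vi) give $\|v_k\|_{W^{2,2}(Q_{13/14})}\le C$, so after passing to a subsequence $v_k\rightharpoonup v$ weakly in $W^{2,2}(Q_{13/14})$.

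Repeating Step~2 of the proof of Lemma~\ref{l:decayfull2} verbatim on the open octant $K:=Q_{13/14}\cap\{y_i>0\ \forall i\}$ shows $\Delta^2v=0$ weakly in $K$. Since $v_k\to v$ strongly in $L^2$ and $v_k=0$ in $Q_{13/14}\cap\{y_i<-3h_k\}$ for each $i$, we conclude $v=0$ on $Q_{13/14}\setminus K$ and hence, by $W^{2,2}$-regularity, $v=0$ and $\nabla v=0$ in the sense of traces on every coordinate hyperplane $\{y_i=0\}\cap Q_{13/14}$.

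For strong $L^2$-convergence of $\nabla_{h_k}^2u_{h_k}$ and $\nabla^2v_k$ to $\nabla^2v$ on $Q_{1/2}$ we combine the two previous cases. Away from the boundary faces of $K$, any sub-cube $Q_{\tilde r}(\tilde x)\Subset K$ falls under the full-space argument of Step~3 of Lemma~\ref{l:decayfull2}, yielding strong convergence in $L^2_{\mathrm{loc}}(K)$. Likewise we get strong convergence to $0$ in $L^2_{\mathrm{loc}}(Q_{5/8}\setminus\overline K)$. For a point $y$ lying on one of the faces $\{y_i=0\}$, the Caccioppoli inequality (Lemma~\ref{l:Cacc}) combined with Proposition~\ref{p:interpolationproperties}\,vi) gives
\[
 \limsup_{k\to\infty}\|\nabla_{h_k}^2u_{h_k}\|_{L^2(Q_{\tilde r}(y))}^2\le\frac{C}{\tilde r^2}\|\nabla v\|_{L^2(Q_{4\tilde r}(y))}^2+\frac{C}{\tilde r^4}\|v\|_{L^2(Q_{4\tilde r}(y))}^2,
\]
and since $v=0$ on $\{y_i=0\}\cap Q_{4\tilde r}(y)$, the Poincaré inequality bounds this by $C\|\nabla^2v\|_{L^2(Q_{4\tilde r}(y))}^2\to 0$ as $\tilde r\to 0$. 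This nonconcentration at the boundary, exactly as in Step~2 of the proof of Lemma~\ref{l:decayhalf2}, upgrades the local strong convergence to strong convergence in $L^2(Q_{1/2})$.

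Passing to the limit in the lower bound yields $\|\nabla^2v\|_{L^2(Q_\rho\cap K)}^2\ge\rho^n$ while $\|\nabla^2v\|_{L^2(Q_{13/14}\cap K)}^2\le C$. To derive a contradiction, I invoke the continuous decay estimate at the vertex (the content of Lemma~\ref{l:decayvertexcontinuous}, alluded to already in the discussion of \eqref{e:decay_corner}): for any biharmonic function $w$ on a cube intersected with an orthant, vanishing to first order on all faces of the orthant meeting the vertex, there is $\theta>0$ such that
\[
 \|\nabla^2 w\|_{L^2(Q_s\cap\R^n_+)}^2\le C\Bigl(\tfrac{s}{r}\Bigr)^{n+\theta}\|\nabla^2w\|_{L^2(Q_r\cap\R^n_+)}^2\qquad(0<s\le r/2),
\]
which for $\rho$ chosen so that $C_1\rho^\theta<1$ contradicts the lower bound above. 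This fixes $\rho$, and the assumption $\rho r\ge Mh$ for suitably large $M$ is used to pass from the discrete to the continuous picture.

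The main obstacle is the continuous decay estimate near the singular vertex. Unlike the interior and half-space settings, where Schauder theory gives Hölder regularity of $\nabla^2 v$ with exponent $\frac34$ and hence decay $(s/r)^{n+3/2}$, at a corner the regularity is limited by the corner singularities governed by the critical exponent $\theta_0\approx 3.47918$. Fortunately the compactness argument only requires \emph{some} exponent strictly larger than $n$, and this is provided by Kondratev-type analysis for the biharmonic operator on the octant cone: any biharmonic function with clamped boundary conditions on the faces through $0$ is pointwise controlled by $|y|^{2+\theta/2}$ near $0$ for every $\theta<\theta_0$, which by scaled interior estimates yields the required $L^2$-decay for $\nabla^2 v$.
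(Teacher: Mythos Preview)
Your proposal is correct and follows essentially the same approach as the paper's proof: contradiction via compactness, normalization, weak $W^{2,2}$-convergence of the interpolants, strong $L^2$-convergence of the discrete Hessians obtained by combining the interior Kolmogorov--Riesz argument with the Caccioppoli-based nonconcentration at the boundary, and finally a contradiction with the continuous vertex decay estimate (Lemma~\ref{l:decayvertexcontinuous}). The paper's own proof is written as a terse sketch referring back to Lemmas~\ref{l:decayfull2} and~\ref{l:decayhalf2}; your write-up simply fills in those details along the same lines.
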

We will deduce these two lemmata from the following continuous estimates. $D_\nu$ denotes the derivative in normal direction.
\begin{lemma}\label{l:decayedgecontinuous}
There is a constant $\theta>0$ with the following property: let $n=3$, $0<s\le \frac r2$, $u\in W^{2,2}(Q_{r,++})$, where $Q_{r,++}=Q_r\cap\{x_1>0,x_2>0\}$. Assume that $\Delta^2u=0$ weakly in $Q_{r,++}$ and that $u=0$, $D_\nu u=0$ on $\partial Q_{r,++}\cap\{x_1=0\lor x_2=0\}$ in the sense of traces. Assume that $\rho r\geq Mh$. Then we have
\[\left\|\nabla^2u\right\|^2_{L^2(Q_{s,++})}\le C\left(\frac{s}{r}\right)^{3+\theta}\left\|\nabla^2u\right\|^2_{L^2(Q_{r,++})}\,.\]
\end{lemma}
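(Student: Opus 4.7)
The plan is to invoke Kondratiev-type regularity theory for the clamped biharmonic operator on the dihedral wedge. By scaling invariance we may assume $r=1$. Interior Schauder estimates, together with Lemma~\ref{l:decayhalfcontinuous} applied near the smooth faces away from the edge, handle $u$ away from the edge $E=\{x_1=x_2=0\}$; the entire subtlety is in the behaviour of $\nabla^2 u$ as one approaches $E$.

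Introduce cylindrical coordinates $(\rho,\theta,x_3)$ around $E$, with $\rho=\sqrt{x_1^2+x_2^2}$ and $\theta\in(0,\pi/2)$. The clamped biharmonic problem is translation invariant in $x_3$, so one is led to the planar problem on the quarter-plane $\{x_1,x_2>0\}$ with clamped conditions on both half-lines. The associated Kondratiev operator pencil, obtained from the ansatz $u=\rho^\lambda\phi(\theta)$, has eigenvalues $\lambda_k$ where $\phi$ solves $(\partial_\theta^2+\lambda^2)(\partial_\theta^2+(\lambda-2)^2)\phi=0$ on $(0,\pi/2)$ with boundary conditions $\phi(0)=\phi'(0)=\phi(\pi/2)=\phi'(\pi/2)=0$. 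A classical computation (see e.g.\ Chapter~2 of \cite{Gazzola2010}) shows that all eigenvalues with $\mathrm{Re}(\lambda)>1$ satisfy $\mathrm{Re}(\lambda)\ge\lambda^*$ for some $\lambda^*>2$; in fact $\lambda^*=1+\theta_0/2\approx 3.74$, where $\theta_0\approx 3.48$ is the constant appearing in the discussion after Theorem~\ref{t:mainthmh}.

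By the Kondratiev regularity theorem, applied in a neighbourhood of $E$ and combined with the translation invariance in $x_3$ (which can be exploited either through difference quotients or a Mellin transform in $\rho$ followed by standard analytic Fredholm theory), any weak solution $u\in W^{2,2}(Q_{1,++})$ of the clamped biharmonic equation admits a decomposition
\[
u(x)=\sum_{k:\,\mathrm{Re}(\lambda_k)<\lambda^*}c_k(x_3)\,\rho^{\lambda_k}\phi_k(\theta)+u_{\mathrm{reg}}(x)
\]
in a neighbourhood of $E$, where the sum is empty by the choice of $\lambda^*$ and the remainder $u_{\mathrm{reg}}$ enjoys extra regularity implying the pointwise bound $|\nabla^2 u(x)|\le C\rho^{\lambda^*-2-\varepsilon}\,\|\nabla^2 u\|_{L^2(Q_{1,++})}$ on $Q_{3/4,++}$ for any fixed $\varepsilon>0$. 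Integrating this pointwise estimate over $Q_{s,++}$ in cylindrical coordinates (with volume element $\rho\,d\rho\,d\theta\,dx_3$) gives
\[
\|\nabla^2 u\|^2_{L^2(Q_{s,++})}\le C\int_{-s}^{s}\!\int_0^{\pi/2}\!\int_0^{s\sqrt{2}}\rho^{2\lambda^*-3-2\varepsilon}\,d\rho\,d\theta\,dx_3\;\|\nabla^2 u\|^2_{L^2(Q_{1,++})}\le C\,s^{3+\theta}\,\|\nabla^2 u\|^2_{L^2(Q_{1,++})}
\]
for any $\theta<2(\lambda^*-2)-2\varepsilon$. Rescaling then yields the claim.

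The main obstacle is the rigorous application of the Kondratiev decomposition in the 3D dihedral setting with general $W^{2,2}$ data and the careful tracking of constants so that the norm $\|\nabla^2 u\|_{L^2(Q_{1,++})}$ on the right-hand side is indeed controlled by (not just bounds) the natural scale-invariant quantity. The underlying theory is classical (Kondratiev, Maz'ya, Grisvard), but the adaptation to the present dihedral geometry with mixed clamped/free boundary is what makes this lemma more delicate than Lemmas~\ref{l:decayfullcontinuous} and~\ref{l:decayhalfcontinuous}.
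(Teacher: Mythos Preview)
Your approach rests on the same spectral input as the paper's: the leading exponent for the clamped biharmonic operator pencil on the quarter-plane exceeds $2$ (your $\lambda^*$ is the paper's $1+\delta_+$, with $\delta_+\approx 2.74$). The paper, however, does not invoke a Kondratiev decomposition directly. Instead it localises $u$ with a cut-off $\eta$, writes
\[
\nabla^2(\eta u)(x)=\int_{\R^3_{++}}\nabla_x^2 G(x,\xi)\,f(\xi)\,d\xi,
\]
where $G$ is the Green's function of the clamped bilaplacian on the full wedge $\R^3_{++}$ and $f=\Delta^2(\eta u)$ is supported in the annulus $Q_{r,++}\setminus Q_{r/2,++}$, and then reads off a pointwise bound on $\nabla^2 u$ directly from the explicit Green's function asymptotics of \cite[Theorem~2.5.4]{Mazya2010}. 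This packaging sidesteps exactly the two obstacles you flag yourself: the passage from the 2D pencil to the 3D dihedral problem with $x_3$-dependence, and the conversion of weighted-Sobolev regularity into a pointwise bound controlled by $\|\nabla^2 u\|_{L^2(Q_{r,++})}$, are both already encoded in the cited Green's function estimate. Your direct Kondratiev route is in principle sound, but the steps you label the ``main obstacle'' are genuine and would need to be spelled out; the Green's function representation is simply the cleaner way to cite away that analysis.
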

\begin{lemma}\label{l:decayvertexcontinuous}
There is a constant $\theta>0$ with the following property: let $0<s\le \frac r2$, $u\in W^{2,2}(Q_{r,n+})$, where $Q_{r,n+}=Q_{r,++}=Q_r\cap\{x_1>0,x_2>0\}$ if $n=2$, and $Q_{r,n+}=Q_{r,+++}=Q_r\cap\{x_1>0,x_2>0,x_3>0\}$ if $n=3$. Assume that $\Delta^2u=0$ weakly in $Q_{r,n+}$ and that $u=0$, $D_\nu u=0$ on $\partial Q_{r,n+}\cap\{x_i=0\text{ for some }i\}$ in the sense of traces. Then we have
\[\left\|\nabla^2u\right\|^2_{L^2(Q_{s,n+})}\le C\left(\frac{s}{r}\right)^{n+\theta}\left\|\nabla^2u\right\|^2_{L^2(Q_{r,n+})}\,.\]
\end{lemma}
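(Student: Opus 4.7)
The plan is to combine a scaling/iteration argument with a compactness contradiction, analogous to the proof of Lemma~\ref{l:decayfullcontinuous}, where the additional input needed to absorb the conical singularity at the vertex is the Kondratiev-type asymptotic expansion of biharmonic functions with clamped boundary conditions on the model cone $K_n = \{x : x_i > 0 \ \forall i = 1, \ldots, n\}$. By the scaling $u \mapsto u(r\,\cdot)$ I reduce to $r = 1$. It then suffices to find $\rho \in (0, 1/2)$ and $\gamma \in (0, 1)$ depending only on $n$ such that every biharmonic $u \in W^{2,2}(Q_{1, n+})$ with clamped BC satisfies
\[
\|\nabla^2 u\|^2_{L^2(Q_{\rho, n+})} \le \gamma\,\rho^n\,\|\nabla^2 u\|^2_{L^2(Q_{1, n+})};
\]
iterating this at the geometric sequence of scales $\rho^k$ and interpolating with monotonicity in the radius then yields the lemma with $\theta = -\log \gamma / \log(1/\rho) > 0$.

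Suppose this one-step decay fails for every $\gamma < 1$. Then there is a sequence of biharmonic $u_j$ on $Q_{1, n+}$ with clamped BC, normalised by $\|\nabla^2 u_j\|^2_{L^2(Q_{1, n+})} = 1$, with $\|\nabla^2 u_j\|^2_{L^2(Q_{\rho, n+})} \to \rho^n$. Poincar\'e (using vanishing traces on the faces $\{x_i = 0\}$) bounds $u_j$ uniformly in $W^{2,2}$, so a subsequence converges weakly in $W^{2,2}$ and strongly in $W^{1,2}$ to a biharmonic limit $u$ satisfying the same clamped BC. Interior plus flat-boundary Schauder regularity upgrades this to strong $L^2_{loc}(Q_{1, n+} \setminus \{0\})$ convergence of $\nabla^2 u_j$. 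To exclude concentration at the vertex I use the Caccioppoli inequality for clamped biharmonic functions on the cone (integration by parts produces no boundary terms on the clamped faces):
\[
\|\nabla^2 u_j\|^2_{L^2(Q_{\epsilon, n+})} \le \frac{C}{\epsilon^2}\|\nabla u_j\|^2_{L^2(Q_{2\epsilon, n+})} + \frac{C}{\epsilon^4}\|u_j\|^2_{L^2(Q_{2\epsilon, n+})}.
\]
Passing to the limit in $j$ via strong $W^{1,2}$ convergence, then applying Poincar\'e to $u$ (vanishing trace on the faces $\{x_i = 0\}$ of $Q_{2\epsilon, n+}$), reduces the right-hand side to $C\|\nabla^2 u\|^2_{L^2(Q_{2\epsilon, n+})}$, which tends to $0$ as $\epsilon \to 0$ by absolute continuity of the integral. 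Hence $\nabla^2 u_j \to \nabla^2 u$ strongly in $L^2(Q_{\rho, n+})$, and the limit $u$ satisfies $\|\nabla^2 u\|^2_{L^2(Q_{\rho, n+})} \ge \rho^n$ with $\|\nabla^2 u\|^2_{L^2(Q_{1, n+})} \le 1$.

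The main obstacle, and the heart of the argument, is to contradict the existence of such a limit $u$ when $\rho$ is small. For this I invoke the classical Kondratiev/Mellin asymptotic expansion of solutions of elliptic boundary value problems at conical points: every biharmonic $u \in W^{2,2}$ on $Q_{1, n+}$ with clamped BC admits, in a neighbourhood of the vertex, a decomposition
\[
u(x) = \sum_j c_j r^{\lambda_j}\phi_j(\omega) + w(x),
\]
where the pairs $(\lambda_j, \phi_j)$ are the eigenvalues and eigenfunctions of an operator pencil on the spherical domain $S^{n-1} \cap K_n$, $w$ belongs to a strictly more regular weighted Sobolev space, and every singular exponent satisfies $\Rea\,\lambda_j > 2$. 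In 2D this reduces to the classical transcendental equation for the clamped plate on a quadrant, whose smallest root is $\lambda_0 = 2 + \theta_0/2$ with $\theta_0 \approx 3.47918$ (consistent with the introduction); in 3D the analogous eigenvalue problem on the spherical triangle $S^2 \cap K_3$ again produces $\Rea\,\lambda_0 > 2$. Consequently $|\nabla^2 u(x)| \le C|x|^{\Rea\,\lambda_0 - 2}$ near the vertex, whence $\|\nabla^2 u\|^2_{L^2(Q_\rho)} \le C\rho^{n + 2(\Rea\,\lambda_0 - 2)}$, which is strictly smaller than $\rho^n$ for $\rho$ chosen small enough. This contradicts the lower bound from the compactness step and produces the desired positive $\theta$, which can be taken to be any number strictly less than $2(\Rea\,\lambda_0 - 2)$.
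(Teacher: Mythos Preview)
Your approach differs from the paper's. The paper proceeds directly: it localises $u$ by a cutoff $\eta$, represents $\nabla^2 u(x) = \int \nabla_x^2 G(x,\xi)\,\Delta^2(\eta u)(\xi)\,d\xi$ via the Green's function $G$ of the model cone, and then plugs in pointwise Green's function bounds from \cite{Kozlov1997} ($n=2$) and \cite{Mazya2010} ($n=3$) which encode the pencil eigenvalues $\delta_+ \approx 2.74$ and $\Lambda_+ \ge 3$. You instead run a compactness/contradiction argument and invoke the Kondratiev expansion of the limit. Both routes rest on the same spectral fact (no pencil eigenvalues with $\Rea\lambda \le 2$), but your compactness layer is redundant here: the Kondratiev asymptotics, with constants uniform in $\|\nabla^2 u\|_{L^2(Q_{1,n+})}$, already give $\|\nabla^2 u\|^2_{L^2(Q_\rho)} \le C\rho^{n+\theta}$ for \emph{every} such $u$, so there is no need to pass to a limit first. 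The compactness machinery is what the paper uses to transfer the continuous estimate to the \emph{discrete} setting (Lemma~\ref{l:decayvertex2}); for the continuous Lemma~\ref{l:decayvertexcontinuous} it is a detour.

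There is also a genuine gap in your 3D argument. You assert that ``interior plus flat-boundary Schauder regularity'' gives strong $L^2_{loc}(Q_{1,+++}\setminus\{0\})$ convergence of $\nabla^2 u_j$, but $Q_{1,+++}$ has three edges (e.g.\ $\{x_1=x_2=0,\,0<x_3<1\}$) where the boundary is not flat, so Schauder does not apply there. You would need to run your Caccioppoli non-concentration argument at every edge point as well (it works, since two clamped faces meet). Relatedly, your one-line Kondratiev expansion $u=\sum c_j r^{\lambda_j}\phi_j(\omega)+w$ is too simple for a polyhedral vertex in 3D: the full asymptotics couple vertex and edge singular functions, which is exactly why the paper's Green's function bound from \cite[Thm.~3.4.5]{Mazya2010} carries both vertex factors $|x|^{\Lambda_+-|\alpha|}$ and edge factors $(r_j(x)/|x|)^{1+\delta_+-|\alpha|}$. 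The conclusion survives because $\delta_+>1$ and $\Lambda_+\ge3$, but as written the 3D step is incomplete.
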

The proof of Lemma~\ref{l:decayedgecontinuous} and Lemma~\ref{l:decayvertexcontinuous} relies heavily on the theory of elliptic equations in domains with singularities. We use results from \cite{Kozlov1997} and \cite{Mazya2010} and refer the reader to these monographs for more background information.
\begin{proof}[Proof of Lemma~\ref{l:decayedgecontinuous}]
Let $\R^3_{++}=\R^3\cap\{x_1>0,x_2>0\}$. For $x\in\R^3_{++}$ write $x=(x',x_3)$.

The statement is trivial if $s\ge\frac r4$, so assume $s<\frac r4$. Let $\eta\in C_c^\infty(Q_r)$ be a cut-off function that is 1 on $Q_{r/2,++}$ and such that $|\nabla^\kappa\eta|\le\frac{C}{r^\kappa}$ for $\kappa\le 4$. Then $\eta\Delta^2u=0$ in $\partial\R^3_{++}$, and we can calculate (as an identity in the sense of distributions) that
\[\Delta^2(\eta u)=(\Delta^2\eta)u+4\nabla\Delta\eta\cdot\nabla u+2\Delta\eta\Delta u+4\nabla^2\eta:\nabla^2u+4\nabla\eta\cdot\nabla\Delta u\,.\]
In order to avoid terms with too many derivatives of $u$ we rewrite the last term as
\[\nabla\eta\cdot\nabla\Delta u=\diverg(\nabla\eta\Delta u)-\Delta\eta\Delta u\]
to obtain
\[\Delta^2(\eta u)=\Delta^2\eta u+4\nabla\Delta\eta\cdot\nabla u-2\Delta\eta\Delta u+4\nabla^2\eta:\nabla^2u+4\diverg(\nabla\eta\Delta u)=:f\,.\]
Because $u\in W^{2,2}(Q_{r,n+})$ with zero boundary values on $\partial\R^3_{++}$, the right-hand side $f$ is an element of $W^{-2,2}(\R^3_{++})$, while $\eta u$ is in $W^{2,2}_0(\R^3_{++})$. Hence (cf. \cite{Mazya2010}, Theorem 2.5.1) we can represent $\eta u$ via the Green's function of $\R^3_{++}$ as
\[(\eta u)(x)=\int_{\R^3_{++}}G(x,\xi)f(\xi)\ud \xi\,.\]
For $x\in Q_{s,++}\subset Q_{r/4,++}$ this implies
\[\nabla^2u(x)=\int_{\R^3_{++}}\nabla_x^2G(x,\xi)f(\xi)\ud \xi\,.\]

Now $f$ is supported in $Q_{r,++}\setminus Q_{r/2,++}$, whereas $x\in Q_{s,++}\subset Q_{r/4,++}$. So a decay estimate for $G$ will directly lead to a pointwise estimate for $\nabla^2u$.

In fact, Theorem 2.5.4 in \cite{Mazya2010} states that if $|x-\xi|\ge \min(|x'|,|\xi'|)$ we have, for every $\varepsilon>0$,
\begin{equation}\label{e:green_edge2}
	|D^\alpha_{x'}D^j_{x_3}D^\beta_{\xi'}D^k_{\xi_3}G(x,\xi)|\le C_\varepsilon\frac{|x'|^{1+\delta_+-|\alpha|-\varepsilon}|\xi'|^{1+\delta_--|\beta|-\varepsilon}}{|x-\xi|^{1+\delta_++\delta_-+j+k-2\varepsilon}}\,.
\end{equation}
Here $\delta_+$ and $\delta_-$ are certain real parameters defined in terms of eigenvalue problems related to the bilaplacian (see \cite[Section 2.4]{Mazya2010} for the precise definition). According to \cite[Section 4.3]{Mazya2010} we have that $\delta_+=\delta_-\approx2.73959$. In particular, $\delta_\pm>1$, so we can choose $\theta>0$ such that $1+\frac\theta2<\delta_\pm$. Then let $\varepsilon=\delta_\pm-1-\frac\theta2>0$.

We are interested in the case where $x\in Q_{s,++}$, $\xi\in Q_{r,++}\setminus Q_{r/2,++}$. In that case the inequality $|x-\xi|\ge \min(|x'|,|\xi'|)$ certainly holds, and we can estimate $|x'|\le s$, $|\xi'|\le r$, $|x-\xi|\ge\frac r4$, so that \eqref{e:green_edge2} turns into
\begin{align*}
	|D^\alpha_{x'}D^j_{x_3}D^\beta_{\xi'}D^k_{\xi_3}G(x,\xi)|&\le C_\varepsilon s^{1+\delta_+-|\alpha|-\varepsilon}r^{\varepsilon-\delta_+-|\beta|-j-k}\\
	&=C\frac{s^{2+\frac\theta2-|\alpha|}}{r^{1+\frac\theta2+|\beta|+j+k}}\,.
\end{align*}
This estimate is sharp enough to allow us to estimate the terms of $f$. For example we can calculate using the Poincaré and Hölder inequality that
\begin{align*}
	\left|\int_{\R^3_{++}}\nabla_x^2G(x,\xi)\Delta^2\eta(\xi)u(\xi)\ud \xi\right|&\le C\int_{Q_{r,++}}\frac{s^{2+\frac\theta2-2}}{r^{1+\frac\theta2+0+0+0}}\frac{1}{r^4}\left|u(\xi)\right|\ud \xi\\
	&= C\frac{s^\frac\theta2}{r^{5+\frac\theta2}}\int_{Q_{r,++}}\left|u\right|\ud \xi\\
	&\le C\frac{s^\frac\theta2}{r^{5+\frac\theta2}}r^2r^\frac32\left(\int_{Q_{r,++}}\left|\nabla^2u\right|^2\ud \xi\right)^\frac12\\
	&\le C\frac{s^\frac\theta2}{r^{\frac32+\frac\theta2}}\left(\int_{\R^3_{++}}\left|\nabla^2u\right|^2\ud \xi\right)^\frac12
\end{align*}
and that
\begin{align*}
	\left|\int_{\R^3_{++}}\nabla_x^2G(x,\xi)\diverg(\nabla\eta\Delta u)(\xi)\ud \xi\right|&=\left|\int_{\R^3_{++}}\nabla_x^2\nabla_{\xi}G(x,\xi)\cdot\nabla\eta(\xi)\Delta u(\xi)\ud \xi\right|\\
	&\le C\int_{Q_{r,++}}\frac{s^\frac\theta2}{r^{2+\frac\theta2}}\frac{1}{r}\left|\Delta u(\xi)\right|\ud \xi\\
	&\le C\frac{s^\frac\theta2}{r^{\frac32+\frac\theta2}}\left(\int_{\R^3_{++}}\left|\nabla^2u\right|^2\ud \xi\right)^\frac12\,.
\end{align*}
We can estimate the other terms on $f$ analogously. If we integrate the sum of the squares of all these inequalities with respect to $x$ we immediately obtain the conclusion. 
\end{proof}
\begin{proof}[Proof of Lemma~\ref{l:decayvertexcontinuous}]
The proof in the case of a vertex is very similar. One can again deduce the representation
\begin{equation}
	\nabla^2u(x)=\int_{\R^n_{n+}}\nabla_x^2G(x,\xi)f(\xi)\ud \xi\label{e:decayvertexcontinuous1}
\end{equation}
for $x\in Q_{r/4,n+}$, so that one only needs sharp estimates for the Green's function to complete the argument.

If $n=2$, we can use for this purpose Theorem 8.4.8 in combination with Theorem 6.1.2 in \cite{Kozlov1997}. Theorem 8.4.8 gives a Green's function for right-hand sides in $L^2$. However, according to Theorem 6.1.2, the solution operator has a continuous extension to right-hand sides in $W^{-2,2}$, so that \eqref{e:decayvertexcontinuous1} holds for this Green's function. Now Theorem 8.4.8 also gives asymptotics for $G$ in terms of the eigenvalues of a certain eigenvalue problem. If we stay in the eigenvalue-free strip, this estimate reads
\[|D_x^\alpha D_\xi^\beta G(x,\xi)|\le C_\varepsilon |x|^{1+\delta_+-|\alpha|-\varepsilon}|\xi|^{1-\delta_+-|\beta|+\varepsilon}\]
where $2|x|\le|\xi|$ and $\varepsilon>0$ is arbitrary. Using this estimate we can continue as in the proof of Lemma~\ref{l:decayedgecontinuous}.

The case $n=3$ is slightly more complicated. We can use \cite[Theorem 3.4.5]{Mazya2010}, which states that if $2|x|\le|\xi|$, then for any $\varepsilon>0$
\begin{align*}
&|D_x^\alpha D_\xi^\beta G(x,\xi)|\\
&\le C_\varepsilon |x|^{\Lambda_+-|\alpha|-\varepsilon}|\xi|^{1-\Lambda_+-|\beta|+\varepsilon}\prod_{j=1}^3\left(\frac{r_j(x)}{|x|}\right)^{1+\delta_+-|\alpha|-\varepsilon}\prod_{k=1}^3\left(\frac{r_k(\xi)}{|\xi|}\right)^{1+\delta_--|\beta|-\varepsilon}
\end{align*}
where $\delta_\pm$ are as before, $\Lambda_+$ is another constant defined in terms of a certain eigenvalue problem (see \cite[Section 3.4]{Mazya2010} for the precise definition) and $r_j(x)$ denotes the distance of $x$ to the line $\{x_j=0\}$. If we choose $\varepsilon\le\delta_+-1=\delta_--1$, then the exponents of the terms $\frac{r_k(x)}{|x|}$ and $\frac{r_k(\xi)}{|\xi|}$ are non-negative whenever $|\alpha|\le2$ and $|\beta|\le2$. So we obtain under these assumptions
\[|D_x^\alpha D_\xi^\beta G(x,\xi)|\le C_\varepsilon |x|^{\Lambda_+-|\alpha|-\varepsilon}|\xi|^{1-\Lambda_+-|\beta|+\varepsilon}\,.\]

In \cite[Section 4.3]{Mazya2010} it is proved that $\Lambda_+\ge3$. This allows us to take $\theta>0$ such that $2+\frac\theta2\le\Lambda_+$ and $1+\frac\theta2<\delta_\pm$. By choosing $\epsilon=\min\left(\Lambda_+-2-\frac\theta2,\delta_\pm-1\right)$ we conclude
\[|D_x^\alpha D_\xi^\beta G(x,\xi)|\le C\frac{s^{2+\frac\theta2-|\alpha|}}{r^{1+\frac\theta2+|\beta|}}\]
for $|\alpha|\le2$ and $|\beta|\le2$. Now we can continue as in the proof of Lemma  \ref{l:decayedgecontinuous} (observe that in that proof we only needed estimates for $D_x^\alpha D_\xi^\beta G(x,\xi)$ with $|\alpha|\le2$ and $|\beta|\le1$). 
\end{proof}
\begin{proof}[Proof of Lemma~\ref{l:decayedge2} and Lemma~\ref{l:decayvertex2}]
We follow the proofs of Lemma~\ref{l:decayfull2} and Lemma~\ref{l:decayhalf2}. The proof is slightly easier than the proof of Lemma~\ref{l:decayhalf2} because we no longer need to worry about the subtraction of the averages of $u_h$. We assume that the claim is wrong for some fixed $\rho$, and consider a sequence of counterexamples $u_{h_k}$ and their interpolations $v_k=I_{h_k}u_{h_k}$.
We can assume that $r_k=1$, and $\left\|\nabla_{h_k}^2u_{h_k}\right\|_{L^2(Q_1)}=1$ and conclude that $(v_k)$ is bounded in $W^{2,2}(Q_{3/4})$, and so a non-relabeled subsequence converges to some $v$ in $W^{2,2}(Q_{3/4})$.

As before we see that $\Delta^2v=0$ in $Q_{3/4,+++}$ and $Q_{3/4,n+}$ respectively and that $v$ has 0 boundary values. Also we obtain strong convergence of $\nabla^2v_k$ and $w_k:=I^{h_k}_{pc}\nabla_{h_k}^2u_{h_k}$ in $L^2_{loc}(Q_{5/8}\setminus\partial Q_{3/4,+++})$ and $L^2_{loc}(Q_{5/8}\setminus\partial Q_{3/4,n+})$, respectively. Now, as in Step 2 of the proof of Lemma~\ref{l:decayhalf2}, we find that $\nabla_{h_k}^2u_{h_k}$ does not concentrate at the boundary, so that $\nabla^2v_k$ and $w_k$ actually converge strongly in $L^2(Q_{1/2})$.

This convergence allows us to pass to the limit in 
\[\left\|\nabla_{h_k}^2u_{h_k}\right\|^2_{L^2(Q_{\rho})}>\rho^n\]
so that we easily arrive at a contradiction to Lemma~\ref{l:decayedgecontinuous} or Lemma~\ref{l:decayvertexcontinuous} once we choose $\rho$ small enough. 
\end{proof}

\section{Inner and outer decay estimates for discrete biharmonic functions}   \label{s:general_decay}
\subsection{Inner estimates}\label{s:inner_decay}
We can now combine the results from the previous section in one general decay estimate for biharmonic functions:
\begin{theorem}\label{t:decaycube}
Let $u_h\in\Phi_h$. Let $x\in\Lambda_h^n$, $r>0$ and suppose that $\Delta_h^2u_h(y)=0$ for all $y\in Q_{r-h}(x)\cap \inte\Lambda^n_h$. Then, for all $z\in Q^h_{r/2}(x)\cap\Lambda^n_h$,
\begin{equation}\label{e:decaycube}
	|\nabla_h^2u_h(z)|\le\frac{C}{r^\frac n2}\|\nabla_h^2u_h\|_{L^2(Q_r(x))}\,.
\end{equation}
\end{theorem}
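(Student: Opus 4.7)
The plan is to reduce the pointwise bound on $\nabla_h^2 u_h(z)$ to the $L^2$ bound on $Q_r(x)$ by combining Lemma~\ref{l:decayfull} with the boundary decay estimates of Section~\ref{s:inner_special}, using a case analysis based on how close $z$ lies to each face of $[0,1]^n$.

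Set $\delta(z) = \dist(z, (h\Z)^n \setminus \inte\Lambda_h^n)$. If $\delta(z) \ge cr$ for a small absolute constant $c$ (say $c = 1/32$), then $Q_{cr}^h(z) \subset \inte\Lambda_h^n$ and $Q_{cr}(z) \subset Q_r(x)$, so Lemma~\ref{l:decayfull} applied at $z$ with radius $cr$ gives the bound at once. Otherwise $z$ is close to the boundary, and I would sort the face distances $d_i = \min(z_i, 1 - z_i)$ as $d_{\sigma(1)} \le d_{\sigma(2)} \le \dots \le d_{\sigma(n)}$. Let $k$ be the number of indices with $d_{\sigma(i)} < cr$, so that $\delta(z) = d_{\sigma(1)}$ and the nearest singular feature has codimension $k$.

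For each $j = 0, 1, \dots, k-1$ define $p_j^* \in (h\Z)^n$ by rounding the coordinates $z_{\sigma(1)}, \dots, z_{\sigma(k-j)}$ of $z$ to their nearest value in $\{0,1\}$ and leaving the remaining coordinates unchanged; thus $p_0^*$ lies on a codimension-$k$ feature of $\partial\Lambda_h^n$, each $p_j^*$ on a feature of codimension $k-j$, and we set $p_k^* := z$. At $p_j^*$ I apply whichever of Lemma~\ref{l:decayhalf} (codimension $1$), Lemma~\ref{l:decayedge} (codimension $2$ in $n = 3$), or Lemma~\ref{l:decayvertex} (codimension $n$) corresponds to the feature containing $p_j^*$, with outer radius $R_j \sim d_{\sigma(k-j+1)}$ (using the convention $d_{\sigma(k+1)} := r$) and inner radius $s_j \sim d_{\sigma(k-j)}$. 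Each step contributes a factor $(s_j/R_j)^{n/2}$ on $\|\nabla_h^2 u_h\|_{L^2}$, and chaining the $k$ steps produces an $L^2$ bound on a cube of radius $\sim \delta(z)$ around $z$ with overall factor $(\delta(z)/r)^{n/2}$, since the intermediate $d_{\sigma(i)}$ telescope. A final application of Lemma~\ref{l:decayfull} at $z$ with radius $\delta(z)/2$ converts this to a pointwise bound at cost $\delta(z)^{-n/2}$; the $\delta(z)$-factors cancel to yield the claimed estimate.

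The main technical obstacle I anticipate is checking the hypotheses of the boundary decay lemmas at each stage: one must verify $Q_{R_j}(p_j^*) \subset Q_r(x)$, that the ``vanishing side'' of $Q_{R_j}^h(p_j^*)$ matches $(h\Z)^n \setminus \inte\Lambda_h^n$ (this forces the precise choice of $p_j^*$), and that on the ``biharmonic side'' the cube $Q_{R_j - h}^h(p_j^*)$ is contained in $Q_{r-h}(x) \cap \inte\Lambda_h^n$. These verifications rely on the separations $d_{\sigma(k)} < cr \le d_{\sigma(k+1)}$ together with $|z - x|_\infty \le r/2$, and are routine but tedious in the higher-codimension cases. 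A minor separate adjustment handles $\delta(z) = 0$, i.e.\ $z \in \partial\Lambda_h^n$: there one sets $s_{k-1} \sim h$ and replaces the final use of Lemma~\ref{l:decayfull} by the trivial lattice inverse inequality $|\nabla_h^2 u_h(z)| \le C h^{-n/2} \|\nabla_h^2 u_h\|_{L^2(Q_h(z))}$, with the same telescoping as before.
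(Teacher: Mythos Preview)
Your proposal is correct and takes essentially the same approach as the paper: chain the interior estimate (Lemma~\ref{l:decayfull}) with the boundary decay lemmas (\ref{l:decayhalf}, \ref{l:decayedge}, \ref{l:decayvertex}) according to which faces of $[0,1]^n$ are nearby, with the intermediate length scales telescoping. The only notable difference is that the paper first reduces to the special case $z=x$ via the trivial inclusion $Q_{r/2}(z)\subset Q_r(x)$, which slightly streamlines the geometric verifications you flag as the main technical obstacle; otherwise your codimension-indexed chain is exactly the argument the paper carries out explicitly for $n=2$ and sketches for $n=3$.
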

Observe that $\nabla_h^2u_h=0$ is zero in $(h\Z)^n\setminus\Lambda_h^n$. Therefore we could equivalently only integrate over $Q_r(x)\cap(\Lambda_h^n)_{pc}$ on the right-hand side.
\begin{proof}
The proofs for the cases $n=2$ and $n=3$ are similar, but the latter is somewhat more tedious. Therefore we give the proof for $n=2$ in detail and then describe how to adapt it to the case $n=3$. So let $n=2$.

We first prove the statement in the special case $z=x$. By rotating and reflecting $\Lambda^2_h$ we may assume $x_2\le x_1\le\frac12$. We may also assume $r\ge\frac h2$, as otherwise we can replace $r$ by $\frac h2$ without changing \eqref{e:decaycube}.

Let $x^*=(x_1,0)$ be a point on $\partial\Lambda_h^2$ closest to $x$. We consider the three cases $r\le x_2$, $x_2<r\le x_1$ and $r>x_1$.

\emph{Case 1: $r\le x_2$}\\
In this case the interior estimate Lemma~\ref{l:decayfull} applied to $Q_r(x)$ directly implies
\[|\nabla_h^2u_h(x)|\le\frac{C}{r}\|\nabla_h^2u_h\|_{L^2(Q_r(x))}\,.\]

\emph{Case 2: $x_2<r\le x_1$}\\
Apply first Lemma~\ref{l:decayfull} to $Q_{x_2+h/2}(x)$ to find
\[|\nabla_h^2u_h(x)|\le\frac{C}{x_2+\frac h2}\|\nabla_h^2u_h\|_{L^2(Q_{x_2+h/2}(x))}\,.\]
If $r<3x_2$ then this already implies \eqref{e:decaycube} once we increase $C$ by a factor of 3. If $r\ge3x_2$ we have $Q_{x_2+h/2}(x)\subset Q_{2x_2+h/2}(x^*)\subset Q_r(x^*)\subset Q_r(x)$ and so, by Lemma~\ref{l:decayhalf},
\[|\nabla_h^2u_h\|_{L^2(Q_{x_2+h/2}(x))}\le\|\nabla_h^2u_h\|_{L^2(Q_{2x_2+h/2}(x^*))}\le C\frac{2x_2+\frac h2}{r}\|\nabla_h^2u_h\|_{L^2(Q_r(x^*))}\,.\]
This together with the previous equation implies \eqref{e:decaycube}.

\emph{Case 3: $x_1<r$}\\
As in the previous case we obtain
\begin{equation}
	|\nabla_h^2u_h(x)|\le\frac{C}{x_1+\frac h2}\|\nabla_h^2u_h\|_{L^2(Q_{x_1+h/2}(x^*))}\,.\label{e:decaycube2}
\end{equation}

Now either $r<3x_1$ and we are done, or we can continue with Lemma~\ref{l:decayvertex} to find
\[\|\nabla_h^2u_h\|_{L^2(Q_{x_1+h/2}(x^*))}\le \|\nabla_h^2u_h\|_{L^2(Q_{2x_1+h/2}(0))}\le C\frac{2x_1+\frac h2}{r}\|\nabla_h^2u_h\|_{L^2(Q_r(0))}\,,\]
which in combination with \eqref{e:decaycube2} implies \eqref{e:decaycube}.

This proves \eqref{e:decaycube} in the case $z=x$. For general $z$, it suffices to observe that $Q_{r/2}(z)\subset Q_r(x)$ and apply the statement we have just proved to $Q_{r/2}(z)$.

The proof for $n=3$ is analogous. However there is one more case and hence we need one more intermediate step, where we deal with the case of an edge. So one applies Lemmata \ref{l:decayfull}, \ref{l:decayhalf}, \ref{l:decayedge}, \ref{l:decayvertex} in order until one reaches a radius of order $r$. We omit the details. 
\end{proof}
\subsection{Outer estimates via duality}\label{s:inner_outer_decay}
Theorem~\ref{t:decaycube} states that if a discrete function is biharmonic in a subcube $Q_r(x)$ of $\Lambda_h^n$, then we have pointwise control over its second derivatives in a smaller subcube $Q_{r/2}(x)$. Remarkably, a dual statement is also true: If a discrete function is biharmonic outside a subcube $Q_r(x)$ of $\Lambda_h^n$, then we have control over its second derivatives outside of a larger subcube $Q_{2r}(x)$. The following lemma does not claim pointwise control, but only control in $L^2$. However we will combine it with Theorem~\ref{t:decaycube} into Theorem~\ref{t:decaycubeoutside} where we actually obtain pointwise control. 

\begin{lemma}\label{l:decaycubeoutsideav}
Let $u_h\in \Phi_h$. Let $x\in\Lambda_h^n$, $r\ge d(x)$ and suppose that $\Delta_h^2u_h(x)=0$ for all $x\in\inte\Lambda_h^n\setminus Q_r(x)$. Then, for all $s\ge r$,
\begin{equation}
	\|\nabla_h^2u_h\|_{L^2(\R^n\setminus Q_s(x))}\le C\left(\frac{r}{s}\right)^\frac n2\|\nabla_h^2u_h\|_{L^2(\R^n\setminus Q_r(x))}\,.\label{e:decaycubeoutsideav2}
\end{equation}
\end{lemma}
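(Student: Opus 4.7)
The strategy is to exploit duality together with the interior $L^\infty$--$L^2$ estimate of Theorem~\ref{t:decaycube}. By duality,
\[
\|\nabla_h^2 u_h\|_{L^2(\R^n\setminus Q_s(x))} = \sup_{\phi_h}(\nabla_h^2 u_h,\phi_h)_{L^2(\R^n)},
\]
the supremum taken over matrix-valued lattice functions $\phi_h$ with $\supp\phi_h\subset\R^n\setminus Q_s(x)$ and $\|\phi_h\|_{L^2}\le 1$. Given such a $\phi_h$, I would introduce the companion $v_h\in\Phi_h$ defined as the unique solution of
\[
(\nabla_h^2 v_h,\nabla_h^2\psi_h)_{L^2} = (\phi_h,\nabla_h^2\psi_h)_{L^2}\qquad\forall\psi_h\in\Phi_h.
\]
Testing with $\psi_h = v_h$ gives $\|\nabla_h^2 v_h\|_{L^2}\le\|\phi_h\|_{L^2}\le 1$; the corresponding strong form $\Delta_h^2 v_h = \diverg_h\diverg_{-h}\phi_h$ then shows that $\Delta_h^2 v_h = 0$ on $Q_{s-2h}(x)\cap\inte\Lambda_h^n$, so $v_h$ is discretely biharmonic there. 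Taking $\psi_h = u_h$ in the defining relation for $v_h$ yields the key identity $(\nabla_h^2 u_h,\phi_h)_{L^2} = (\nabla_h^2 u_h,\nabla_h^2 v_h)_{L^2}$.

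Next I would use that $u_h$ is biharmonic outside $Q_r(x)$: the weak form gives $(\nabla_h^2 u_h,\nabla_h^2\psi_h) = 0$ for every $\psi_h\in\Phi_h$ with $\psi_h\equiv 0$ on $Q_r(x)\cap\inte\Lambda_h^n$. Fixing a cut-off $\eta_h$ with $\eta_h\equiv 1$ on $Q_r(x)$, $\eta_h\equiv 0$ outside $Q_{2r}(x)$ and $|\nabla_h^k\eta_h|\le Cr^{-k}$ (I assume $2r\le s$; the complementary range is trivial), the function $(1-\eta_h)v_h\in\Phi_h$ vanishes on $Q_r(x)$, giving
\[
(\nabla_h^2 u_h,\nabla_h^2 v_h)_{L^2} = (\nabla_h^2 u_h,\nabla_h^2(\eta_h v_h))_{L^2}.
\]
Applying Theorem~\ref{t:decaycube} to $v_h$ on the cube $Q_{s-2h}(x)$ gives $|\nabla_h^2 v_h|\le Cs^{-n/2}$ on $Q_{s/2}(x)\supset Q_{2r}(x)$, and combining this with discrete Poincar\'e inequalities on $Q_{2r}(x)$ (valid since $r\ge d(x)$ forces $Q_{2r}(x)$ to touch $\partial\Lambda_h^n$, where $v_h$ vanishes) yields corresponding bounds for $\nabla_h v_h$ and $v_h$; the discrete product rule then produces $\|\nabla_h^2(\eta_h v_h)\|_{L^2}\le C(r/s)^{n/2}$. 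A Cauchy--Schwarz step followed by the supremum over $\phi_h$ then gives the claim.

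The main technical obstacle is the last Cauchy--Schwarz: the support of $\nabla_h^2(\eta_h v_h)$ contains all of $Q_r(x)$, so a crude application pulls in $\|\nabla_h^2 u_h\|_{L^2(Q_r(x))}$, which is not controlled by the outer norm $\|\nabla_h^2 u_h\|_{L^2(\R^n\setminus Q_r(x))}$ appearing on the right-hand side of the lemma. To recover the sharp outer norm, I would split the pairing into the contribution on $Q_r(x)$ and on the annulus $Q_{2r}(x)\setminus Q_r(x)$; the annular piece is directly dominated by $\|\nabla_h^2 u_h\|_{L^2(\R^n\setminus Q_r(x))}$, while the interior piece can be rewritten by summation by parts as $(\Delta_h^2 u_h,\eta_h v_h)_{L^2} = \sum_{y\in Q_r\cap\inte\Lambda_h^n}\Delta_h^2 u_h(y)v_h(y)h^n$ and estimated using the sharp pointwise bound $|v_h(y)|\le C d(y)^{2}s^{-n/2}$ obtained by iterating Theorem~\ref{t:decaycube} with the discrete Poincar\'e inequalities on cubes touching $\partial\Lambda_h^n$.
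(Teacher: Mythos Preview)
Your overall strategy---duality via an auxiliary $v_h$, then applying Theorem~\ref{t:decaycube} to $v_h$---is exactly the route the paper takes, and you correctly isolate the main difficulty: a naive Cauchy--Schwarz produces the inner norm $\|\nabla_h^2 u_h\|_{L^2(Q_r(x))}$, which is not controlled by the outer norm on the right-hand side.

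However, your proposed fix has a genuine gap. After summation by parts you arrive at $\sum_{y\in Q_r\cap\inte\Lambda_h^n}\Delta_h^2 u_h(y)\,v_h(y)\,h^n$, and you propose to bound this using only the pointwise estimate $|v_h(y)|\le C d(y)^2 s^{-n/2}$. But the hypotheses say nothing whatsoever about $\Delta_h^2 u_h$ on $Q_r(x)\cap\inte\Lambda_h^n$: it may be arbitrarily large there, and no bound on $v_h$ alone can compensate. (For instance, take $u_h=G_{h,y_0}-G_{h,y_1}$ with $|y_0-y_1|=h$ and $d(y_0)\approx r$; then $\sum|\Delta_h^2 u_h(y)|d(y)^2 h^n\sim r^2$, whereas the outer norm $\|\nabla_h^2 u_h\|_{L^2(\R^n\setminus Q_r)}$ is of order $h\,r^{1-n/2}$.) You have traded an uncontrolled $\|\nabla_h^2 u_h\|_{L^2(Q_r)}$ for an equally uncontrolled $L^1$-type quantity of $\Delta_h^2 u_h$ on $Q_r$.

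The paper resolves the obstacle differently: it introduces a \emph{second} cut-off $\zeta_h$, applied to $u_h$ rather than to $v_h$, with $\zeta_h\equiv0$ on $Q_{3r}(x)$ and $\zeta_h\equiv1$ outside $Q_{5r}(x)$. Since the dualizing function $f_h=\nabla_h^2 u_h\,\chi_{\R^n\setminus Q_s(x)}$ is supported far outside $Q_{5r}$, one may replace $u_h$ by $\zeta_h u_h$ in the pairing. Now $\|\nabla_h^2(\zeta_h u_h)\|_{L^2}$ is controlled, via the product rule and the Poincar\'e inequality on the \emph{annulus} $Q_{7r}(x)\setminus Q_r(x)$ (which meets $\partial\Lambda_h^n$ because $r\ge d(x)$), by $\|\nabla_h^2 u_h\|_{L^2(\R^n\setminus Q_r(x))}$---precisely the outer norm you want. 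The cut-off $\eta_h$ on $v_h$ is then chosen so that $\eta_h\equiv1$ on $\supp\Delta_h^2(\zeta_h u_h)$, permitting a clean chain of summations by parts. The moral: cut off $u_h$ from the inside (vanishing on $Q_r$), so that Poincar\'e on an annulus delivers the outer norm directly; cutting off only $v_h$ leaves the uncontrolled interior of $u_h$ in play.
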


\begin{proof}
Consider first the case $r<h$. Then $d(x)=0$, i.e. $x\in\partial\Lambda_h^n$, and the assumptions imply $\Delta_h^2u_h=0$ in $\inte\Lambda_h^n$, i.e. $u_h=0$ in $\inte\Lambda_h^n$ by the uniqueness of the bilaplacian equation. So both sides of \eqref{e:decaycubeoutsideav2} are zero and the inequality holds.

So we can assume $r\ge h$. The statement is trivial in the case that $s<23r$, so we can also assume $s\ge23r$. We can then replace $r$ and $s$ by $\tilde{r}=\lfloor r-\frac h2\rfloor_h+\frac {3h}2$ and $\tilde{s}=\lfloor s-\frac h2\rfloor_h+\frac h2$, respectively. It is easy to see that then $\tilde{r}\ge r$, $\tilde{s}\le s$ and $\tilde{s}\ge11\tilde{r}$, and it suffices to prove the theorem for $\tilde{r},\tilde{s}$.
So we will directly assume $r,s\in h\N+\frac h2$, $s\ge11r$ and $r\ge\frac{3h}2$.

Let $f_h=\nabla_h^2u_h\chi_{\Lambda_h^n\setminus Q_s(x)}$, where $\chi_A$ is the indicator function of a set $A$. Let $v_h\in \Phi_h$ be the unique solution of $\Delta^2v_h=\diverg_{-h}\diverg_hf_h$. Then, for any $\varphi_h\in\Phi_h$,
\begin{equation}
	(\nabla_h^2v_h,\nabla_h^2\varphi_h)_{L^2(\R^n)}=(f_h,\nabla_h^2\varphi_h)_{L^2(\R^n)}\,.\label{e:decaycubeoutsideav3}
\end{equation}

Also let $\zeta_h$ and $\eta_h$ be discrete cut-off functions such that $\zeta_h$ is 1 on $\Lambda_h^n\setminus Q_{5r}(x)$, 0 on $Q_{3r}(x)\cap\Lambda_h^n$, $\eta_h$ is 1 on $Q_{7r}(x)\cap\Lambda_h^n$, 0 on $\Lambda_h^n\setminus Q_{9r}(x)$ and such that $|\nabla_h^\kappa\zeta_h|\le\frac{C}{r^\kappa}$ and $|\nabla_h^\kappa\eta_h|\le\frac{C}{r^k}$ for $\kappa\le 2$.

These choices ensure that 
\begin{equation}
	\nabla_h^2(\zeta_hu_h)=\nabla_h^2u_h\text{ on the support of }f_h\label{e:decaycubeoutsideav4}
\end{equation}
and that
\begin{equation}
	\eta_h=1\text{ on the support of }\Delta_h^2(\zeta_hu_h)\,.\label{e:decaycubeoutsideav5}
\end{equation}
Indeed, for example the support of $\Delta_h^2(\zeta_hu_h)$ is contained in $Q_{5r+2h}(x)\setminus Q_{3r-2h}(x)\subset Q_{7r}(x)$.

This implies

\begin{align}
	\|\nabla_h^2u_h\|^2_{L^2(\R^n\setminus Q_s(x))}&=(f_h,\nabla_h^2u_h)_{L^2(\R^n)}\nonumber\\
	&\stackrel{\mathclap{\eqref{e:decaycubeoutsideav4}}}{=}(f_h,\nabla_h^2(\zeta_hu_h))_{L^2(\R^n)}\nonumber\\
	&\stackrel{\mathclap{\eqref{e:decaycubeoutsideav3}}}{=}(\nabla_h^2v_h,\nabla_h^2(\zeta_hu_h))_{L^2(\R^n)}\nonumber\\
	&=(v_h,\Delta_h^2(\zeta_hu_h))_{L^2(\R^n)}\nonumber\\
	&\stackrel{\mathclap{\eqref{e:decaycubeoutsideav5}}}{=}(\eta_hv_h,\Delta_h^2(\zeta_hu_h))_{L^2(\R^n)}\nonumber\\
	&=(\nabla_h^2(\eta_hv_h),\nabla_h^2(\zeta_hu_h))_{L^2(\R^n)}\nonumber\\
	&\le\|\nabla_h^2(\eta_hv_h)\|_{L^2(\R^n)}\|\nabla_h^2(\zeta_hu_h)\|_{L^2(\R^n)}\,.\label{e:decaycubeoutsideav}
\end{align}
Now by the product rule
\begin{align*}
	&\nabla_h^2(\eta_hv_h)\\
	&=\sum_{i,j=1}^nD^h_{-i}D^h_j\eta_hv_h+\tau^h_jD^h_{-i}\eta_hD^h_jv_h+\tau^h_{-i}D^h_j\eta_hD^h_{-i}v_h+\tau^h_{-i}\tau^h_j\eta_hD^h_{-i}D^h_jv_h
\end{align*}
and so, using the Poincaré inequality\footnote{Here we have used the assumption $r\ge d(x)$ (or rather $7r\ge d(x)$): It ensures that we have zero boundary data somewhere on $Q^h_{7r}(x)\setminus Q^h_r(x)$ so that we can indeed use the Poincaré inequality.} on $Q_{9r}(x)$,
\begin{align}
	&\|\nabla_h^2(\eta_hv_h)\|_{L^2(\R^n)}\nonumber\\
	&\quad\le\frac{C}{r^2}\|v_h\|_{L^2(Q_{9r}(x))}+\frac{C}{r}\|\nabla_hv_h\|_{L^2(Q_{9r}(x))}+C\|\nabla_h^2v_h\|_{L^2(Q_{9r}(x))}\nonumber\\
	&\quad\le C\|\nabla_h^2v_h\|_{L^2(Q_{9r}(x))}\,.\label{e:poincare}
\end{align}
Similarly, by the Poincaré inequality on the annulus $Q_{7r}(x)\setminus Q_r(x)$,
\begin{align*}
	&\|\nabla_h^2(\zeta_hu_h)\|_{L^2(\R^n)}\\
	&\le\frac{C}{r^2}\|u_h\|_{L^2(Q_{7r}(x)\setminus Q_r(x))}+\frac{C}{r}\|\nabla_hu_h\|_{L^2(Q_{7r}(x)\setminus Q_r(x)}+C\|\nabla_h^2u_h\|_{L^2(\R^n\setminus Q_r(x))}\\
	&\le C\|\nabla_h^2u_h\|_{L^2(\R^n\setminus Q_r(x))}\,.
\end{align*}

If we plug the last two estimates into \eqref{e:decaycubeoutsideav} and then use Theorem~\ref{t:decaycube} for $v_h$ we obtain
\begin{align*}
	\|\nabla_h^2u_h\|^2_{L^2(\Lambda_h^n\setminus Q_s(x))}&\le C\|\nabla_h^2v_h\|_{L^2(Q_{9r}(x))}\|\nabla_h^2u_h\|_{L^2(\R^n\setminus Q_r(x))}\nonumber\\
	&\le C\left(\frac{9r}{s}\right)^\frac n2\|\nabla_h^2(v_h)\|_{L^2(\R^n)}\|\nabla_h^2u_h\|_{L^2(\R^n\setminus Q_r(x))}\,.
\end{align*}
This is equivalent to \eqref{e:decaycubeoutsideav2} once we use the energy estimate 
\[\|\nabla_h^2v_h\|_{L^2(\R^n)}\le\|f_h\|_{L^2(\R^n)}=\|\nabla_h^2u_h\|_{L^2(\R^n\setminus Q_s(x))}\,.\] 
\end{proof}
Now we can combine this lemma with Theorem~\ref{t:decaycube} to obtain a pointwise outer estimate.
\begin{theorem}\label{t:decaycubeoutside}
Let $u_h\in \Phi_h$. Let $x\in\Lambda_h^n$, $r>0$ and suppose that $\Delta_h^2u_h(x)=0$ for all $x\in\inte\Lambda_h^n\setminus Q_r(x)$.

Then, for all $y\in \Lambda_h^n\setminus Q_{2r}(x)$,
\begin{equation}
	|\nabla_h^2u_h(y)|\le C\frac{(\max(d(x),r))^\frac n2}{|x-y|^n}\|\nabla_h^2u_h\|_{L^2(\R^n\setminus Q_r(x))}\,.\label{e:decaycubeoutside}
\end{equation}
\end{theorem}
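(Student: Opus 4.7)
The plan is to combine the interior $L^\infty$--$L^2$ bound of Theorem~\ref{t:decaycube}, applied at the point $y$ with a radius $\rho\sim|x-y|$, with the $L^2$ outer decay of Lemma~\ref{l:decaycubeoutsideav}, which will turn the $L^2$ norm over a small cube near $y$ into one over the complement of a cube around $x$ whose radius is comparable to $|x-y|$. Each of the two estimates contributes a factor $|x-y|^{-n/2}$, and their product produces the pointwise rate $|x-y|^{-n}$ required by the theorem.

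A preliminary reduction is needed because Lemma~\ref{l:decaycubeoutsideav} requires the auxiliary parameter to dominate $d(x)$. I will set $r':=\max(r,d(x))$. Since $Q_r(x)\subset Q_{r'}(x)$, the assumption $\Delta_h^2u_h=0$ on $\inte\Lambda_h^n\setminus Q_r(x)$ continues to hold with $r$ replaced by $r'$, and enlarging the excluded cube only decreases the $L^2$ norm, so Lemma~\ref{l:decaycubeoutsideav} becomes applicable at the parameter $r'$ with the right-hand side controlled by $\|\nabla_h^2u_h\|_{L^2(\R^n\setminus Q_r(x))}$. Hence it suffices to prove the estimate with $(r')^{n/2}$ in the numerator.

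Given $y\in\Lambda_h^n\setminus Q_{2r}(x)$, set $\rho:=|x-y|_\infty/4$. Because $r<|x-y|_\infty/2$, the cubes $Q_\rho(y)$ and $Q_r(x)$ are disjoint (indeed $\rho+r<|x-y|_\infty$), so $\Delta_h^2u_h=0$ on $Q_\rho(y)\cap\inte\Lambda_h^n$, and Theorem~\ref{t:decaycube} (centered at $y$ with radius $\rho$, taking $z=y$) yields
\[
|\nabla_h^2u_h(y)|\le\frac{C}{\rho^{n/2}}\|\nabla_h^2u_h\|_{L^2(Q_\rho(y))}.
\]
I then distinguish two regimes. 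In the near regime $|x-y|_\infty\le 2r'$, I will use the crude bound $\|\nabla_h^2u_h\|_{L^2(Q_\rho(y))}\le\|\nabla_h^2u_h\|_{L^2(\R^n\setminus Q_r(x))}$ together with $\rho^{-n/2}\le C(r')^{n/2}/|x-y|^n$, which holds because $\rho\sim|x-y|$ and $|x-y|\le Cr'$. In the far regime $|x-y|_\infty>2r'$, I will observe that $Q_\rho(y)\subset\R^n\setminus Q_s(x)$ for $s:=3|x-y|_\infty/4>r'$ and apply Lemma~\ref{l:decaycubeoutsideav} at parameters $r'$ and $s$ to obtain $\|\nabla_h^2u_h\|_{L^2(Q_\rho(y))}\le C(r'/s)^{n/2}\|\nabla_h^2u_h\|_{L^2(\R^n\setminus Q_r(x))}$. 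Combining this with the Theorem~\ref{t:decaycube} bound and using $\rho\,s\sim|x-y|^2$ produces the factor $(r')^{n/2}/|x-y|^n$.

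The only genuine subtlety is the bookkeeping around the hypothesis $r\ge d(x)$ in Lemma~\ref{l:decaycubeoutsideav}: handling it via the substitution $r\leadsto\max(r,d(x))$ is exactly what produces the maximum in the statement. The two-case split of the far-field versus near-field behavior of $y$ is then unavoidable, since in the near regime no additional decay beyond the interior estimate is available. Otherwise this is a clean duality-plus-interior-regularity argument of the standard type.
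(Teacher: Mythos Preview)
Your proposal is correct and follows essentially the same approach as the paper: apply the interior $L^\infty$--$L^2$ estimate (Theorem~\ref{t:decaycube}) at the point $y$, then in the far regime feed the resulting local $L^2$ norm into the outer $L^2$ decay (Lemma~\ref{l:decaycubeoutsideav}). The paper splits into the two cases $y\in Q_{2d(x)}(x)$ and $y\notin Q_{2d(x)}(x)$ and uses radii $d(x)$ and $|x-y|_\infty/2$ respectively for Theorem~\ref{t:decaycube}, whereas you use the single radius $\rho=|x-y|_\infty/4$ and split according to $|x-y|_\infty\lessgtr 2\max(r,d(x))$; these are cosmetic differences and your bookkeeping via $r'=\max(r,d(x))$ is arguably tidier.
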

\begin{proof}
As in the proof of Lemma~\ref{l:decaycubeoutsideav} we see that $d(x)=0$ implies $u=0$ everywhere and \eqref{e:decaycubeoutside} holds. So assume $d(x)\ge h$.

Let $y\in \Lambda_h^n\setminus Q_{2r}(x)$. If $y\in Q_{2d(x)}(x)$ we use Theorem~\ref{t:decaycube} on $Q_{d(x)}(y)\subset\R^n\setminus Q_{2r}(x)$ to obtain
\[|\nabla_h^2u_h(y)|\le\frac{C}{d(x)^\frac n2}\|\nabla_h^2u_h\|_{L^2(Q_{d(x)}(y))}\le \frac{C}{d(x)^\frac n2}\|\nabla_h^2u_h\|_{L^2(\R^n\setminus Q_{2r}(x))}\,,\]
which implies \eqref{e:decaycubeoutside} because $|x-y|\le\sqrt{n}|x-y|_\infty\le2\sqrt{n}d(x)$ and hence $\frac{1}{d(x)}\le 4n\frac{d(x)}{|x-y|^2}$.

If, on the other hand, $y\in \Lambda_h^n\setminus Q_{2d(x)}(x)$ then we use Theorem~\ref{t:decaycube} on $Q_{|x-y|_\infty/2}(y)$ and then Lemma~\ref{l:decaycubeoutsideav} as follows:
\begin{align*}
	|\nabla_h^2u_h(y)|&\le\frac{C}{\left(\frac{|x-y|_\infty}{2}\right)^\frac n2}\|\nabla_h^2u_h\|_{L^2(Q_{|x-y|_\infty/2}(y))}\\
	&\le\frac{C}{|x-y|_\infty^{\frac n2}}\|\nabla_h^2u_h\|_{L^2(\R^n\setminus Q_{|x-y|_\infty/2}(x))}\\
	&\le\frac{C}{|x-y|_\infty^\frac n2}\left(\frac{\max(d(x),r)}{\frac{|x-y|_\infty}{2}}\right)^\frac n2\|\nabla_h^2u_h\|_{L^2(\R^n\setminus Q_{\max(d(x),r)}(x))}\\
	&\le C\frac{(\max(d(x),r))^\frac n2}{|x-y|_\infty^n}\|\nabla_h^2u_h\|_{L^2(\R^n\setminus Q_{\max(d(x),r)}(x))}\,,
\end{align*}
which implies \eqref{e:decaycubeoutside}. 
\end{proof}

\section{The discrete full-space Green's function}\label{s:full_space_green}
In order to obtain estimates for $G_h$, we will compare $G_h$ with a Green's function of $(h\Z)^n$. In the absence of boundary conditions such a Green's function is not uniquely defined. We will choose a normalization that is best suited for our application. The necessary asymptotics for the Green's function of $(h\Z)^n$, have been derived by Mangad \cite{Mangad1967} using Fourier-theoretic methods.

By $\mathcal{F}$ we denote the Fourier transform of tempered distributions (where we use the convention $(\mathcal{F}f)(x)=\int_{\R^n}f(\xi)\mathrm{e}^{-2\pi i x\cdot\xi}\ud\xi$).
\[\]
\begin{theorem}[\cite{Mangad1967}, Section 4]\label{t:asymptgreenfull}
Let $n\in\N^+$. Define $F\colon\Z^n\times\Z^n\rightarrow\R$ by
\[F(x,y)=\mathcal{F}\left(\frac{V(\xi)}{\left(4\sum_{j=1}^n\sin^2(\pi \xi_j)\right)^2} \right)(x-y)\]
where $V\in C_c^\infty([-1,1]^n)$ is chosen such that $V=1$ near 0 and $\sum_{z\in\Z}V(x+z)=1$ for all $x$ and $\frac{V(\xi)}{\left(4\sum_{j=1}^n\sin^2(\pi \xi_j)^2\right)^2}$ denotes the tempered distribution given by its finite part in the sense of Hadamard (see \cite[Chapitre II, §2 and §3]{Schwartz1966}).

Then $F$ is a Green's function for $\Delta_1^2$ in the sense that $\Delta_1^2F(\cdot,y)=\delta_y$. It satisfies the following asymptotic expansion: If $n=2$ and $z=x-y$,
\begin{align*}
	F(x,y)&=\frac{|z|^2\log|z|}{8\pi}+\frac{(\gamma-1+\log\pi)|z|^2}{8\pi}-\frac{\log|z|}{16\pi}+\frac{4(z_1^4+z_2^4)}{|z|^4}\\
	&\quad-12\log\pi-12\gamma-3+O\left(\frac{1}{|z|^2}\right)
\end{align*}
where $\gamma$ is the Euler-Mascheroni constant, and if $n=3$ and $z=x-y$,
\[F(x,y)=-\frac{|z|}{8\pi}+\frac{z_1^4+z_2^4+z_3^4}{64\pi|z|^5}+\frac{1}{64\pi|z|}+O\left(\frac{1}{|z|^3}\right)\,.\]

\end{theorem}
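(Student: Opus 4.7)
The plan is to verify the two assertions separately by Fourier-analytic methods, following Mangad's original argument.

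For the Green's function identity $\Delta_1^2 F(\cdot,y)=\delta_y$, observe that under the convention $\mathcal{F}(g)(z)=\int_{\R^n}g(\xi)e^{-2\pi i z\cdot\xi}\,d\xi$ a shift $z\mapsto z\pm e_j$ multiplies the integrand by $e^{\mp 2\pi i\xi_j}$; hence the Fourier symbol of $\Delta_1$ acting on lattice functions of the form $\mathcal{F}(g)$ is $-\sigma(\xi)=-4\sum_{j}\sin^2(\pi\xi_j)$, and that of $\Delta_1^2$ is $\sigma^2$. The function $\sigma^2$ vanishes to order exactly $4$ at the origin, precisely matching the order of the Hadamard finite part used to define $V/\sigma^2$, so multiplication gives $\sigma^2\cdot(V/\sigma^2)=V$ as distributions. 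Therefore
\[
\Delta_1^2 F(z)=\mathcal{F}\!\left(\sigma^2\cdot\tfrac{V}{\sigma^2}\right)(z)=\mathcal{F}(V)(z)\qquad\text{for } z\in\Z^n,
\]
and the normalization $\sum_{z\in\Z^n}V(\xi+z)\equiv 1$ together with the Poisson summation formula forces $\mathcal{F}(V)(z)=\delta_{z,0}$.

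For the asymptotic expansion, Taylor expand the symbol at $\xi=0$ using $4\sin^2(\pi t)=4\pi^2 t^2-\tfrac{4\pi^4}{3}t^4+\tfrac{8\pi^6}{45}t^6-\cdots$ and invert:
\[
\frac{1}{\sigma(\xi)^2}=\frac{1}{16\pi^4|\xi|^4}+\frac{1}{24\pi^2}\,\frac{\sum_j\xi_j^4}{|\xi|^6}+r(\xi),
\]
where the remainder $r$ has strictly weaker singularity at $0$, controlled to match the desired errors $O(|z|^{-2})$ in 2D and $O(|z|^{-3})$ in 3D. The leading piece $V/(16\pi^4|\xi|^4)$, interpreted as a finite-part distribution, coincides with the continuous biharmonic fundamental solution and yields the $|z|^2\log|z|/(8\pi)$ behavior in 2D and $-|z|/(8\pi)$ in 3D. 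The subleading correction, homogeneous of degree $-2$, has Fourier transform homogeneous of degree $-n+2$: it is bounded in 2D and produces the $4(z_1^4+z_2^4)/|z|^4$ term there, while in 3D it decays like $|z|^{-1}$ and generates both the anisotropic piece $(z_1^4+z_2^4+z_3^4)/(64\pi|z|^5)$ and the isotropic piece $1/(64\pi|z|)$ via the decomposition of $\sum_j\xi_j^4$ into radial and spherical-harmonic components. Each homogeneous contribution is evaluated using classical formulas for Fourier transforms of homogeneous distributions (cf.\ Gel'fand--Shilov or H\"ormander), and the difference between $V/|\xi|^\alpha$ and $1/|\xi|^\alpha$ is the Fourier transform of a $C_c^\infty$ function supported away from the origin, hence Schwartz, contributing only additive constants.

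The main obstacle is tracking the additive constants, in particular the precise combinations of $\gamma$ and $\log\pi$ appearing in the 2D expansion, together with the $-\log|z|/(16\pi)$ correction. These arise from the subleading asymptotics of $\mathcal{F}(V(\xi)/|\xi|^4)(z)$ at lattice points for large $|z|$: individual contributions depend on the choice of regularization and on the scheme-dependent integral $\int(V(\xi)-\chi(\xi))/|\xi|^4\,d\xi$ for a reference cutoff $\chi$, but cancel across the pieces to give universal dimensionless constants. Executing this bookkeeping carefully, together with the explicit evaluation of the homogeneous Fourier transforms, is the technical core of the argument carried out in \cite[Section~4]{Mangad1967}.
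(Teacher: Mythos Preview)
Your proposal is correct and follows essentially the same route as the paper's sketch: both identify $\sigma(\xi)=\bigl(4\sum_j\sin^2(\pi\xi_j)\bigr)^2$ as the Fourier symbol of $\Delta_1^2$, use the partition-of-unity condition on $V$ (via Poisson summation) to obtain $\mathcal{F}(V)|_{\Z^n}=\delta_0$, and then Laurent-expand $1/\sigma$ at the origin into homogeneous pieces whose Fourier transforms are computed explicitly, deferring the bookkeeping of constants to Mangad. Your sketch is in fact somewhat more detailed than the paper's (and your leading coefficient $1/(16\pi^4|\xi|^4)$ is the correct one; the paper's $16\pi^2$ appears to be a typo).
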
Let us briefly sketch how to prove this theorem: Observe that $\sigma(\xi):=\left(4\sum_{j=1}^n\sin^2(\pi \xi_j)\right)^2$ is the symbol of $\Delta_1^2$, so that $\Delta_1^2F(x,y)=\mathcal{F}(V)(x-y)$. On the other hand one easily checks that $\sum_{z\in\Z}V(x+z)=1$ implies that $\mathcal{F}(V)(m)=\delta_0(m)$ for any $m\in\Z^n$. This proves that $F$ is a Green's function. To derive the asymptotic expansion, one develops a Laurent series
\[\frac{1}{\sigma(\xi)}=\frac{1}{16\pi^2|\xi|^4}+\frac{f_{-2}(\xi)}{|\xi|^2}+f_0(\xi)+\cdots+o(|\xi|^N)\,.\]
Then one can check using the explicit formulas for the Fourier transforms of $|\xi|^m$ (see \cite{Schwartz1966}) and the Riemann-Lebesgue lemma that
\[\mathcal{F}\left(\frac{V(\xi}{\sigma(\xi)}-\frac{1}{16\pi^2|\xi|^4}+\frac{f_{-2}(\xi)}{|\xi|^2}+f_0(\xi)+\cdots \right)=o(|x|^{-n-N}) \]
so it suffices to compute the Fourier transform of $\frac{1}{16\pi^2|\xi|^4}+\frac{f_{-2}(\xi)}{|\xi|^2}+f_0(\xi)+\cdots$. This one can again do explicitly and thereby obtain an asymptotic expansion for $F$ up to $O(|x|^N)$. For details we refer to \cite{Mangad1967}.

By scaling the lattice we can deduce from this estimates for Green's functions on $(h\Z)^n$. We state the estimates that we will need.
\begin{lemma}\label{l:estgreenfull}
Let $n=2$ or $n=3$, $h>0$, $r\ge 4h$. There exists a function $\tilde{G}_h\colon(h\Z)^n\times(h\Z)^n\rightarrow\R$ such that $\Delta_h^2\tilde{G}_h(\cdot,y)=\delta_{h,y}$ and such that the following estimates are satisfied:
\begin{align}
|\nabla_{h,y}\tilde{G}_h(x,y)|&\le Cr^{3-n}\quad&&\text{if }|x-y|_\infty\le\frac r2\,,\label{e:estgreenfull3}\\
|\nabla_{h,x}^2\nabla_{h,y}\tilde{G}_h(x,y)|&\le \frac{C}{(|x-y|+h)^{n-1}}\quad&&\text{if }|x-y|_\infty\le\frac r2\,,\label{e:estgreenfull1}\\
|\nabla_{h,x}^2\nabla_{h,y}^2\tilde{G}_h(x,y)|&\le \frac{C}{(|x-y|+h)^n}\quad&&\text{if }|x-y|_\infty\le\frac r2\label{e:estgreenfull4}
\intertext{and}
|D_{h,x}^\alpha D_{h,y}^\beta\tilde{G}_h(x,y)|&\le Cr^{4-n-|\alpha|-|\beta|}\quad&&\text{if }\frac r2\le|x-y|_\infty\le r,|\alpha|+|\beta|\le4\,.\label{e:estgreenfull2}
\end{align}
\end{lemma}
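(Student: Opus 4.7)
My plan is to construct $\tilde G_h$ by rescaling Mangad's fundamental solution $F$ from Theorem~\ref{t:asymptgreenfull} to lattice width $h$, and in dimension two subtracting a single polynomial correction to kill a residual logarithmic term. Setting $\tilde G_h^0(x,y) := h^{4-n}F(x/h,\,y/h)$, the scaling identities $\Delta_{h,x}^2 = h^{-4}\Delta_{1,x/h}^2$ and $\delta_{y/h}(x/h) = h^n\delta_{h,y}(x)$ give $\Delta_{h,x}^2 \tilde G_h^0(\cdot,y) = \delta_{h,y}$, so $\tilde G_h^0$ is a discrete Green's function on $(h\Z)^n$. Since $\Delta_{h,x}^2$ annihilates every polynomial of degree $\le 3$ in $x$, one may freely modify $\tilde G_h^0$ by adding such a polynomial in $x-y$ without destroying the defining equation.

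The derivative estimates then reduce to the pointwise bound
\[
|D_1^\gamma F(z)|\le C\,(|z|+1)^{4-n-|\gamma|}
\qquad\hbox{for $|\gamma|\ge 3$ in $n=2$ and for all $|\gamma|\ge 1$ in $n=3$,}
\]
which follows by termwise differentiation of Mangad's expansion: three derivatives of the critical term $|z|^2\log|z|$ already kill the logarithm and leave $O(|z|^{-1})$, while four give $O(|z|^{-2})$. At $z=0$, any discrete difference $D_1^\gamma F(0)$ is just a bounded lattice constant obtained from finitely many values of $F$. Combining this with the scaling $D_{h,x}^\alpha D_{h,y}^\beta \tilde G_h^0 = \pm h^{4-n-|\alpha|-|\beta|}(D_1^{\alpha+\beta}F)((x-y)/h)$ yields
\[
|D_{h,x}^\alpha D_{h,y}^\beta \tilde G_h^0(x,y)|\le C(|x-y|+h)^{4-n-|\alpha|-|\beta|},
\]
which covers \eqref{e:estgreenfull1}, \eqref{e:estgreenfull4}, the dimension-three case of \eqref{e:estgreenfull3}, and the $|\alpha|+|\beta|\ge 3$ part of \eqref{e:estgreenfull2}. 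The low-order part of \eqref{e:estgreenfull2} in $n=3$ also follows directly, since $F\sim -|z|/8\pi$ is logarithm-free.

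In dimension two the low-order estimates pick up a spurious factor $\log(r/h)$ from $|z|^2\log|z|$. I absorb it by setting
\[
\tilde G_h(x,y) := \tilde G_h^0(x,y) - \tfrac{1}{8\pi}\log(r/h)\,|x-y|^2,
\]
so that the leading behaviour becomes $\tfrac{|x-y|^2}{8\pi}\log(|x-y|/r) + O(|x-y|^2)$; since $t^k\log(1/t)$ is bounded on $(0,1]$ for $k\in\{1,2\}$, the factors $|x-y|^k|\log(|x-y|/r)|$ are bounded by $Cr^k$ on $\{|x-y|\le r\}$, yielding both \eqref{e:estgreenfull3} and the $|\alpha|+|\beta|\le 2$ instances of \eqref{e:estgreenfull2}. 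The higher-order estimates already established are unaffected, because $|x-y|^2$ is annihilated by any three discrete derivatives. The main obstacle is precisely this two-dimensional bookkeeping: a single polynomial correction must simultaneously repair all low-order estimates while leaving the higher-order ones intact. This works cleanly because the logarithmic discrepancy is confined to the one asymptotic term $|z|^2\log|z|$, whose polynomial substitute $|x-y|^2$ is killed by exactly the derivative operators that are already log-free.
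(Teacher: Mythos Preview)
Your proposal is correct and takes essentially the same approach as the paper: rescale Mangad's fundamental solution to the lattice $(h\Z)^n$, and in dimension two subtract the polynomial $\frac{1}{8\pi}\log(r/h)\,|x-y|^2$ to cancel the logarithmic growth coming from $|z|^2\log|z|$. The paper performs the correction on $F$ before rescaling (defining $\tilde F = F + \frac{|x-y|^2\log(h/r)}{8\pi}$ and then $\tilde G_h = h^2\tilde F(\cdot/h)$), while you rescale first and correct afterwards, but the resulting $\tilde G_h$ is identical and the derivative bookkeeping is the same.
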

For $n=2$ the function $\tilde{G}_h$ depends on $r$, but we will suppress this dependence for ease of notation.
\begin{proof}
We begin with the slightly easier case $n=3$. The asymptotic expansion in Theorem~\ref{t:asymptgreenfull} easily implies that
\[|D_{1,x}^\alpha D_{1,y}^\beta F(x,y)|\le C|x-y|^{1-|\alpha|-|\beta|}\]
for $|\alpha|+|\beta|\le4$ and any $x,y$ with $|x-y|\ge10$, say (observe that $g=O(|x|^{-3})$ implies $D^1_{\pm i}g(x)=O(|x|^{-3})$, so we do not need to care about the error term). On the other hand $F$ is finite everywhere, so that 
\[|D_{1,x}^\alpha D_{1,y}^\beta F(x,y)|\le C\]
for $|\alpha|+|\beta|\le4$ and any $x,y$ with $|x-y|<10$. If we combine these two estimates we conclude that we have
\[|D_{1,x}^\alpha D_{1,y}^\beta F(x,y)|\le C(|x-y|+1)^{1-|\alpha|-|\beta|}\,.\]
Now if we set $\tilde{G}_h(x,y)=hF\left(\frac xh,\frac yh\right)$ then $\tilde{G}_h$ satisfies
\[|D_{h,x}^\alpha D_{h,y}^\beta \tilde{G}_h(x,y)|\le C(|x-y|+h)^{1-|\alpha|-|\beta|}\,,\]
which immediately implies the claimed estimates.

If $n=2$ we need to take care of the logarithmic terms. So we set 
\[\tilde{F}(x,y)=F(x,y)+\frac{|x-y|^2\log\left(\frac hr\right)}{8\pi}\,.\]
Then $\tilde{F}$ has the asymptotic expansion
\begin{align*}
	\tilde{F}(z)&=\frac{z|^2\log|z|}{8\pi}+\frac{(\log\left(\frac hr\right)+\gamma-1+\log\pi)|z|^2}{8\pi}-\frac{\log|z|}{16\pi}\\
	&\quad+\frac{4(z_1^4+z_2^4)}{|z|^4}-12\log\pi-12\gamma-3+O\left(\frac{1}{|z|^2}\right)
\end{align*}
and this implies
\[|D_{1,x}^\alpha D_{1,y}^\beta \tilde{F}(x,y)|\le C|x-y|^{2-|\alpha|-|\beta|}\left(\left|\log|x-y|+\log\left(\frac hr\right)\right|+1\right)\]
for $|\alpha|+|\beta|\le2$ and any $x,y$ with $|x-y|\ge10$. Because $D_{1,x}^\alpha D_{1,y}^\beta \tilde{F}(x,y)$ is bounded by $C\left(1+\left|\log\left(\frac hr\right)\right|\right)$ for $|x-y|<10$, we conclude
\[|D_{1,x}^\alpha D_{1,y}^\beta \tilde{F}(x,y)|\le C(|x-y|+1)^{2-|\alpha|-|\beta|}\left|\log\left(\frac {h(|x-y|+1)}{r}\right)\right|\,.\]
We now set $\tilde{G}_h(x,y)=h^2\tilde{F}\left(\frac xh,\frac yh\right)$ and obtain
\[|D_{h,x}^\alpha D_{h,y}^\beta \tilde{G}_h(x,y)|\le C(|x-y|+h)^{2-|\alpha|-|\beta|}\left|\log\left(\frac {|x-y|+h}{r}\right)\right|\,.\]
It is easy to check that this implies \eqref{e:estgreenfull3} and \eqref{e:estgreenfull2} for $|\alpha|+|\beta|\le2$. If $|\alpha|+|\beta|\ge3$ we need to be slightly more careful: Observe that third discrete derivatives of $|x-y|^2$ vanish, so that we actually have
\[|\nabla_{1,x}^\alpha\nabla_{1,y}^\beta\tilde{F}(x-y)|\le \frac{C}{|x-y|^{|\alpha|+|\beta|-2}}\]
if $|x-y|\ge 10$
from which we conclude
\[|\nabla_{1,x}^\alpha\nabla_{1,y}^\beta\tilde{F}(x-y)|\le \frac{C}{(|x-y|+1)^{|\alpha|+|\beta|-2}}\]
for any $x,y$. Recalling that $\tilde{G}_h(x,y)=h^2\tilde{F}\left(\frac xh,\frac yh\right)$ we immediately obtain \eqref{e:estgreenfull1}, \eqref{e:estgreenfull4} and \eqref{e:estgreenfull2} for $|\alpha|+|\beta|\ge3$. 
\end{proof}
\section{Proof of the main theorem}\label{s:proofs}
We are now able to prove Theorem~\ref{t:mainthmh}. We first give the straightforward proof of part ii) and then continue with part i).
\subsection{\texorpdfstring{Lower bounds for $G_h(x,x)$}{Lower bounds for G\_h(x,x)}}
The proof is rather short and based on the choice of an appropriate test function.

\begin{proof}[Proof of Theorem~\ref{t:mainthmh} ii)]$ $\\
We can assume $d(x)\ge h$, as otherwise $d(x)=0$ and hence $G_h(x,x)=0$. If we test the equation $\Delta_h^2G_{h,x}=\delta_{h,x}$ with $G_{h,x}$, we find
\begin{equation}
	\|\nabla_h^2G_{h,x}\|^2_{L^2(\R^n)}=(\Delta_h^2G_{h,x},G_{h,x})_{L^2(\R^n)}=(\delta_{h,x},G_{h,x})_{L^2(\R^n)}=G_h(x,x)\,.\label{e:greenweakenergy}
\end{equation}
Now let $\varphi_h\in \Phi_h$. Then testing the equation $\Delta_h^2G_{h,x}=\delta_{h,x}$ with $\varphi_h$ and using the Cauchy-Schwarz inequality we find
\begin{align*}
\varphi_h(x)&=(\nabla_h^2G_{h,x},\nabla_h^2\varphi_h)_{L^2(\R^n)}\\
&\le\|\nabla_h^2G_{h,x}\|_{L^2(\R^n)}\|\nabla_h^2\varphi_h\|_{L^2(\R^n)}\\
&=\sqrt{G_h(x,x)}\|\nabla_h^2\varphi_h\|_{L^2(\R^n)}\,.
\end{align*}
If $\varphi_h$ is not identically zero this implies
\[G_h(x,x)\ge\frac{(\varphi_h(x))^2}{\|\nabla_h^2\varphi_h\|^2_{L^2(\R^n)}}\]
and so it remains to find a $\varphi_h(x)$ such that $\frac{\varphi_h(x)}{\|\nabla_h^2\varphi_h\|_{L^2(\R^n)}}\ge Cd(x)^{2-\frac n2}$. But this is easy:\\
Take $\varphi_{h,x}\in \Phi_h$ supported in $Q_{d(x)}(x)$ such that $\varphi_{h,x}(x)=1$ and such that $|\nabla_h^2\varphi_{h,x}|\le \frac{C}{d(x)^2}$ and extend it by 0 to all of $\Lambda^n_h$. 
\end{proof}

\subsection{\texorpdfstring{Upper bounds for $G_h(x,y)$}{Upper bounds for G\_h(x,y)}}
In this section we prove part i) of  Theorem~\ref{t:mainthmh}.

We begin with a rather weak estimate for $G_h(x,y)$.
\begin{lemma}\label{l:upperbound}
Let $n=2$ or $n=3$ and $G_h$ be the Green's function of $\Lambda^n_h$. Then we have
\begin{equation}
	0\le G_h(x,x)=\|\nabla_h^2G_{h,x}\|^2_{L^2(\R^n)} \le Cd(x)^{4-n}\label{e:upperbound1}
\end{equation}
for any $x\in\Lambda_h^n$ and
\begin{equation}
	|G_h(x,y)| \le Cd(x)^{2-\frac n2}d(y)^{2-\frac n2}\label{e:upperbound2}
\end{equation}
for any $x,y\in\Lambda_h^n$.
\end{lemma}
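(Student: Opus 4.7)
The equality $G_h(x,x) = \|\nabla_h^2 G_{h,x}\|_{L^2(\R^n)}^2$ and the nonnegativity $G_h(x,x) \ge 0$ reduce to identity \eqref{e:greenweakenergy}, already derived in the proof of part ii): one just tests the defining equation $\Delta_h^2 G_{h,x} = \delta_{h,x}$ against $G_{h,x}$ itself in the weak form \eqref{e:bilaplaceweak}. The substance of the lemma therefore lies in the inequality $G_h(x,x) \le C d(x)^{4-n}$.

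The plan is to first prove, for every $u_h \in \Phi_h$ and every $x \in \inte\Lambda_h^n$, the weighted discrete Sobolev estimate
\[
|u_h(x)| \le C\, d(x)^{2-n/2}\, \|\nabla_h^2 u_h\|_{L^2(\R^n)},
\]
valid in dimension $n=2$ and $n=3$, and then apply it to $u_h = G_{h,x}$. Combined with the identity $\|\nabla_h^2 G_{h,x}\|_{L^2}^2 = G_h(x,x)$, this immediately yields $G_h(x,x) \le C d(x)^{2-n/2} \sqrt{G_h(x,x)}$, hence the desired bound. The degenerate case $d(x)=0$ is trivial since then $G_{h,x}(x)=0$.

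To establish the weighted estimate, I would choose a discrete cube $Q_r^h(z)$ with half-side $r$ of order $d(x)$ that contains $x$ and has at least one of its faces entirely contained in $(h\Z)^n \setminus \inte\Lambda_h^n$. This is possible because $d(x) = \min_j \min(x_j, 1-x_j)$ is realized in some coordinate direction $e_j$, so a suitable shift of the center $z$ pushes the $\pm e_j$ face of the cube into the exterior region; on that face $u_h$ vanishes identically since $u_h \in \Phi_h$. Passing to the B-spline interpolation $J_h u_h$, applying the continuous Sobolev embedding $W^{2,2}(Q_r) \hookrightarrow L^\infty(Q_r)$ with the scaling $\|f\|_{L^\infty(Q_r)} \le C r^{2-n/2} \|\nabla^2 f\|_{L^2(Q_r)}$ (valid for $n \in \{2,3\}$ and functions vanishing on a face, e.g.\ by reflecting $f$ across that face and invoking the zero-trace Sobolev inequality), and transferring back to $u_h$ via Proposition \ref{p:interpolationproperties} and identity \eqref{e:propertyJ}, one obtains $\|u_h\|_{L^\infty(Q_r)} \le C r^{2-n/2} \|\nabla_h^2 u_h\|_{L^2(Q_r)}$ and in particular the desired pointwise bound at $x$.

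Finally, \eqref{e:upperbound2} is a direct consequence of \eqref{e:upperbound1} by duality: testing $\Delta_h^2 G_{h,x} = \delta_{h,x}$ against $G_{h,y}$ in the weak form gives
\[
G_h(x,y) = (\nabla_h^2 G_{h,x}, \nabla_h^2 G_{h,y})_{L^2(\R^n)},
\]
and then Cauchy--Schwarz together with the first part yields
\[
|G_h(x,y)| \le \|\nabla_h^2 G_{h,x}\|_{L^2}\, \|\nabla_h^2 G_{h,y}\|_{L^2} = \sqrt{G_h(x,x)}\sqrt{G_h(y,y)} \le C\, d(x)^{2-n/2} d(y)^{2-n/2}.
\]
The main (though mild) obstacle is the geometric construction of the auxiliary cube and the clean transfer of the continuous $W^{2,2}$-embedding to the discrete setting; both are routine, the first because the minimum in $d(x)$ is always realized along a lattice axis, the second because the B-spline interpolation comes with the favourable comparability estimates of Proposition \ref{p:interpolationproperties}.
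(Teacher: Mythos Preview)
Your proposal is correct and follows essentially the same route as the paper: the energy identity \eqref{e:greenweakenergy}, a Sobolev--Poincar\'e estimate on a cube of size $\sim d(x)$ touching the boundary to get $G_h(x,x)\le Cd(x)^{2-n/2}\|\nabla_h^2 G_{h,x}\|_{L^2}$, the bootstrap $G_h(x,x)\le Cd(x)^{2-n/2}\sqrt{G_h(x,x)}$, and then Cauchy--Schwarz for \eqref{e:upperbound2}. The only cosmetic difference is that the paper invokes the discrete Sobolev--Poincar\'e inequality of Section~\ref{s:functionspaces} directly on the cube $Q_{d(x)+h/2}(x)$ centered at $x$ (which automatically has a face in $\partial\Lambda_h^n$), whereas you propose to re-derive that inequality via the B-spline interpolation on a shifted cube; since the paper's discrete inequality is itself proved by interpolation, this is just a matter of packaging.
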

\begin{proof}
We first prove \eqref{e:upperbound1}. By \eqref{e:greenweakenergy} we have
\begin{equation}
	\|\nabla_h^2G_{h,x}\|^2_{L^2(\R^n)}=G_h(x,x)\,.\label{e:upperbound3}
\end{equation}
If $x\in\partial\Lambda_h^n$ then $G_h(x,x)=0$ and \eqref{e:upperbound1} holds. So assume $x\in\inte\Lambda_h^n$, i.e. $d(x)\ge h$. The Sobolev-Poincaré inequality implies that
\begin{align*}
	G_h(x,x)&\le \|G_{h,x}\|_{L^\infty(Q_{d(x)+h/2}(x))}\\
	&\le C\left(d(x)+\frac h2\right)^{2-\frac n2}\|\nabla_h^2G_{h,x}\|_{L^2(Q_{d(x)+h/2}(x))}\\
	&\le C d(x)^{2-\frac n2}\|\nabla_h^2G_{h,x}\|_{L^2(Q_{2d(x)+h/2}(x))}\,.
\end{align*}

If we combine this estimate with \eqref{e:upperbound3} we find that
\begin{align*}
	\|\nabla_h^2G_{h,x}\|^2_{L^2(\R^n)}=G_h(x,x)&\le Cd(x)^{2-\frac n2}\|\nabla_h^2G_{h,x}\|_{L^2(Q_{2d(x)+h/2}(x))}\\
	&\le Cd(x)^{2-\frac n2}\|\nabla_h^2G_{h,y}\|_{L^2(\R^n)}
\end{align*}
and hence
\[0\le G_h(x,x)=\|\nabla_h^2G_{h,x}\|^2_{L^2(\R^n)}\le Cd(x)^{4-n}\,.\]
This proves \eqref{e:upperbound1}. For \eqref{e:upperbound2}, we test $\Delta_h^2G_{h,x}=\delta_{h,x}$ with $G_{h,y}$ and use the Cauchy-Schwarz inequality to obtain
\begin{align*}
	|G_h(x,y)|&=\left|(\delta_{h,x},G_{h,y})_{L^2(\R^n)}\right|\\
	&=\left|(\nabla_h^2G_{h,x},\nabla_h^2G_{h,y})_{L^2(\R^n)}\right|\\
	&\le\|\nabla_h^2G_{h,x}\|_{L^2(\R^n)}\|\nabla_h^2G_{h,y}\|_{L^2(\R^n)}\\
	&\stackrel{\mathclap{\eqref{e:upperbound1}}}{\le}Cd(x)^{2-\frac n2}d(y)^{2-\frac n2}\,.
\end{align*} 
\end{proof}

The next lemma gives estimates for $G_h$ and its derivatives that are sharp when $x$ and $y$ are far apart. We first prove a pointwise estimate for $\nabla_{h,x}^2\nabla_{h,y}G_h$ by applying Theorem~\ref{t:decaycubeoutside} to a cut-off version of $\nabla_{h,y}G_{h,y}$. Afterwards we integrate it along suitable paths to deduce the estimates in the lemma.

\begin{lemma}\label{l:estderfar}
Let $n=2$ or $n=3$ and $G_h$ be the Green's function of $\Lambda^n_h$. If $x,y\in\Lambda_h^n$ and $|x-y|_\infty>\frac{d(y)}{8}$ then
\begin{align}
	|G_h(x,y)|&\le C\frac{(d(x)+h)^2(d(y)+h)^2}{|x-y|^n}\,,\label{e:estGuppfar}\\
	|\nabla_{h,x}G_h(x,y)|&\le C\frac{(d(x)+h)(d(y)+h)^2}{|x-y|^n}\,,\label{e:estnablaxGuppfar}\\
	|\nabla_{h,x}^2G_h(x,y)|&\le C\frac{(d(y)+h)^2}{|x-y|^n}\,,\label{e:estnablax2Guppfar}\\
	|\nabla_{h,x}\nabla_{h,y}G_h(x,y)|&\le C\frac{(d(x)+h)(d(y)+h)}{|x-y|^n}\,.\label{e:estnablaxnablayGuppfar}
\end{align}
\end{lemma}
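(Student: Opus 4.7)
My strategy is to first establish pointwise decay estimates for $\nabla_{h,x}^2 G_h$ and for the mixed third derivative $\nabla_{h,x}^2\nabla_{h,y} G_h$ via Theorem~\ref{t:decaycubeoutside}, and then to obtain the other three bounds by integrating these pointwise estimates along discrete paths in $x$. For \eqref{e:estnablax2Guppfar}, I apply Theorem~\ref{t:decaycubeoutside} directly to $G_{h,y}=G_h(\cdot,y)\in\Phi_h$, which is biharmonic outside $\{y\}$. Combined with the energy identity $\|\nabla_h^2 G_{h,y}\|_{L^2}^2=G_h(y,y)$ and the bound $G_h(y,y)\le C d(y)^{4-n}$ from Lemma~\ref{l:upperbound}, the choice $r\sim\max(d(y),h)$ in the theorem yields \eqref{e:estnablax2Guppfar} at once.

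The new estimate to establish is the pointwise bound $|\nabla_{h,x}^2\nabla_{h,y} G_h(x,y)|\le C(d(y)+h)/|x-y|^n$. Fix $j$ and set $w_h^j(x):=D^h_{y,j}G_h(x,y)\in\Phi_h$, so that $\Delta_h^2 w_h^j=\tfrac1h(\delta_{h,y+he_j}-\delta_{h,y})$ is a discrete dipole at $y$; a direct application of Theorem~\ref{t:decaycubeoutside} would require a bound on $\|\nabla_h^2 w_h^j\|_{L^2}$ of order $(d(y)+h)^{1-n/2}$ that is not available to me. Instead, with $\tilde G_h$ the full-space Green's function of Lemma~\ref{l:estgreenfull} (its parameter chosen of order $d(y)$ to absorb the logarithmic term in the $n=2$ asymptotic) and $\eta_h$ a discrete cut-off equal to $1$ on $Q_{d(y)/4}(y)$, supported in $Q_{d(y)/2}(y)$, with $|\nabla_h^\kappa\eta_h|\le C d(y)^{-\kappa}$, I consider
\[\tilde w_h^j(x):=w_h^j(x)-\eta_h(x)\,D^h_{y,j}\tilde G_h(x,y).\]
Since $\eta_h$ is supported in the interior of $\Lambda_h^n$, $\tilde w_h^j\in\Phi_h$, and since $\eta_h\equiv 1$ in a neighbourhood of $\{y,y+he_j\}$, the discrete product rule shows that $\Delta_h^2\tilde w_h^j$ is supported in the transition annulus $A$ of $\eta_h$ and satisfies $|\Delta_h^2\tilde w_h^j|\le C d(y)^{-n-1}$ there (by Lemma~\ref{l:estgreenfull}). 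The Green's function representation $\tilde w_h^j(x)=\sum_z G_h(x,z)(\Delta_h^2\tilde w_h^j)(z)h^n$ together with the $L^\infty$ bound \eqref{e:upperbound2} yields $|\tilde w_h^j(x)|\le C d(x)^{2-n/2}d(y)^{1-n/2}$ on $A$; testing $\Delta_h^2\tilde w_h^j$ against $\tilde w_h^j$ then produces the energy estimate $\|\nabla_h^2\tilde w_h^j\|_{L^2}\le C(d(y)+h)^{1-n/2}$. Theorem~\ref{t:decaycubeoutside} applied to $\tilde w_h^j$ with center $y$ and $r\sim d(y)$, combined with the observation that $\tilde w_h^j=w_h^j$ outside the support of $\eta_h$, gives the desired pointwise bound in the regime $|x-y|_\infty\gtrsim d(y)$; the complementary regime $d(y)/8<|x-y|_\infty\lesssim d(y)$ is handled analogously with the cut-off scale replaced by $|x-y|_\infty/C$.

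Since $G_h(x_0,y)$ and its first-order $x$-differences vanish at points $x_0\in(h\Z)^n$ sufficiently deep outside $\inte\Lambda_h^n$, for each $x\in\Lambda_h^n$ I pick a discrete path of length $\sim d(x)+h$ from such an $x_0$ to $x$, arranged so that $|z-y|\sim|x-y|$ along the path; the hypothesis $|x-y|_\infty>d(y)/8$ ensures such a path exists, possibly after a small detour around $y$ when $d(x)$ is comparable to $|x-y|$. Telescoping the pointwise bound on $\nabla_{h,x}^2\nabla_{h,y} G_h$ along this path yields \eqref{e:estnablaxnablayGuppfar}; telescoping \eqref{e:estnablax2Guppfar} yields \eqref{e:estnablaxGuppfar}; a further telescoping of \eqref{e:estnablaxGuppfar} yields \eqref{e:estGuppfar}, where the quadratic dependence on $d(x)+h$ arises because $d(z)$ itself grows from $0$ to $\sim d(x)$ along the path.

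The main technical obstacle will be the energy estimate $\|\nabla_h^2\tilde w_h^j\|_{L^2}\le C(d(y)+h)^{1-n/2}$: it requires careful bookkeeping of all the product-rule terms in $\Delta_h^2(\eta_h D^h_{y,j}\tilde G_h)$, the use of the precise bounds of Lemma~\ref{l:estgreenfull} (with the choice of parameter absorbing the $n=2$ logarithm), and the Green's function representation combined with the $L^\infty$ estimate of Lemma~\ref{l:upperbound} to control $\tilde w_h^j$ on $A$. A secondary technicality is the path selection in the regime $d(x)\sim|x-y|$, where the naive perpendicular path from the boundary to $x$ may come too close to $y$ and must be detoured via the interior.
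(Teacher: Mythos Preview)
Your strategy is essentially the paper's: prove the pointwise bound \eqref{e:estnablax2nablayGuppfar} on $\nabla_{h,x}^2\nabla_{h,y}G_h$ by subtracting a cut-off full-space Green's function, establishing an energy bound on the difference, and applying Theorem~\ref{t:decaycubeoutside}; then integrate along discrete paths in $x$ to the boundary.

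There are two minor but genuine variations. First, you obtain \eqref{e:estnablax2Guppfar} directly by applying Theorem~\ref{t:decaycubeoutside} to $G_{h,y}$ itself with $r\sim h$ (the factor $\max(d(y),r)^{n/2}$ then combines with $\|\nabla_h^2 G_{h,y}\|_{L^2}\le Cd(y)^{2-n/2}$), whereas the paper deduces \eqref{e:estnablax2Guppfar} by integrating \eqref{e:estnablax2nablayGuppfar} along a path in $y$; your route is shorter. Second, for the energy bound $\|\nabla_h^2\tilde w_h^j\|_{L^2}\le C(d(y)+h)^{1-n/2}$ you pair $\Delta_h^2\tilde w_h^j$ with $\tilde w_h^j$ and bound the latter on the annulus via the representation $\tilde w_h^j=\sum G_h\,\Delta_h^2\tilde w_h^j\,h^n$ together with \eqref{e:upperbound2}; the paper instead inserts a second cut-off $\zeta_h$, writes $\Delta_h^2 D^{h,y}_i H_{h,y}=-\zeta_h\Delta_h^2(\eta_h D^{h,y}_i\tilde G_{h,y})$, applies Cauchy--Schwarz after two summations by parts, and closes with Poincar\'e. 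Both arguments work; yours is perhaps more transparent, the paper's avoids estimating $\Delta_h^2(\eta_h\,\cdot)$ and needs only $\nabla_h^2(\eta_h\,\cdot)$.

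One unnecessary complication: because your cut-off is supported in $Q_{d(y)/2}(y)$, Theorem~\ref{t:decaycubeoutside} delivers the pointwise bound only for $|x-y|_\infty\gtrsim d(y)$, and you then need a separate ``complementary regime'' $d(y)/8<|x-y|_\infty\lesssim d(y)$ with a rescaled cut-off. The paper simply places the cut-off at scale $d(y)/32$ to $d(y)/16$, so that Theorem~\ref{t:decaycubeoutside} already covers all $|x-y|_\infty>d(y)/8$ in one shot; you can do the same and drop the extra case entirely.
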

\begin{proof}$ $\\
\emph{Step 1: Pointwise estimate for $\nabla_{h,x}^2\nabla_{h,y}G_h(x,y)$}\\
We claim that if $x,y\in\Lambda_h^n$ and $|x-y|_\infty>\frac{d(y)}{8}$ then
\begin{equation}
	|\nabla_{h,x}^2\nabla_{h,y}G_h(x,y)|\le C\frac{d(y)+h}{|x-y|^n}\,.\label{e:estnablax2nablayGuppfar}
\end{equation}
In the following all derivatives will be with respect to $x$ unless we mark them with a sub- or superscript $y$.

If $d(y)<160h$ we can use a trivial estimate: From Lemma~\ref{l:upperbound} we know
\[\|\nabla_h^2G_{h,y'}\|_{L^2(\R^n)}\le Cd(y')^{2-\frac n2}\le Ch^{2-\frac n2}\]
if $|y'-y|_\infty\le h$.
If we now use \[|D^h_if_h(y)|^2=\left(\frac1h(f_h(y+e_i)-f_h(y)\right)^2\le\frac{2}{h^2}(f_h(y+e_i)^2+f_h(y)^2)\]
with $f=\nabla_h^2G_h$ we get that
\[\|\nabla_h^2D^{h,y}_iG_h\|^2_{L^2(\R^n)}\le\frac{2}{h^2}\left(\|\nabla_h^2\tau^{h,y}_iG_h\|^2_{L^2(\R^n)}+\|\nabla_h^2G_h\|^2_{L^2(\R^n)}\right)\le Ch^{2-n}\,,\]
i.e.
\[\|\nabla_h^2D^{h,y}_iG_h\|_{L^2(\R^n)}\le Ch^{1-\frac n2}\,.\]
Then Theorem~\ref{t:decaycubeoutside} with $r=h$ implies
\begin{align*}
	|\nabla_h^2D^{h,y}_iG_h(x,y)|&\le C\frac{\max(d(y),h)^\frac n2}{|x-y|^n}\|\nabla_h^2D^{h,y}_iG_h\|_{L^2(\R^n)}\\
	&\le C\frac{h^\frac n2}{|x-y|^n}h^{1-\frac n2}=C\frac{h}{|x-y|^n}\,,
\end{align*}
which implies \eqref{e:estnablax2nablayGuppfar} if we choose $C$ there large enough.

So assume $d(y)\ge160h$. Let $\eta_h$ be a discrete cut-off function that is 1 on $Q_{d(y)/32+2h}$, 0 on $(h\Z)^n\setminus Q_{d(y)/16-2h}(x)$, and such that $|\nabla^\kappa\eta_h|\le\frac{C}{d(y)^\kappa}$ for $\kappa\le2$. Let $H_h(x,y)=G_h(x,y)-\eta_h(x)\tilde{G}_h(x,y)$, where $\tilde{G}_h$ is the function from Lemma~\ref{l:estgreenfull} with $r=\frac{d(y)}{16}$. We write $H_{h,y}$ for $H_h(\cdot,y)$.

Then, for $i\in\{1,\ldots, n\}$, $D^{h,y}_iH_{h,y}\in\Phi_h$. Also, the singularities near $y$ cancel out, so that $\Delta_h^2D^{h,y}_iH_{h,y}=0$ in $Q_{d(y)/32}(y)$ and in $\inte\Lambda_h^n\setminus Q_{d(y)/16}(y)$.

Next, we want to bound $\|\nabla_h^2D^{h,y}_iH_{h,y}\|_{L^2(\R^n)}$. To do so, we introduce another cut-off function $\zeta_h$ that is 1 on $\inte\Lambda_h^n\setminus Q_{d(y)/32}(y)$, 0 on $Q_{d(y)/64}(y)$ and such that $|\nabla^\kappa\zeta_h|\le\frac{C}{d(y)^\kappa}$ for $\kappa\le2$. Then we have that
\begin{align*}
	\Delta_h^2D^{h,y}_iH_{h,y}=\zeta_h\Delta_h^2D^{h,y}_iH_{h,y}&=-\zeta_h\Delta_h^2D^{h,y}_i\left(\eta_h\tilde{G}_{h,y}\right)\\
	&=-\zeta_h\Delta_h^2\left(\eta_hD^{h,y}_i\tilde{G}_{h,y}\right)
\end{align*}
where we have used that $\eta_h$ does not depend on $y$. Thus
\begin{align}
	\|\nabla_h^2D^{h,y}_iH_{h,y}\|^2_{L^2(\R^n)}&=(\Delta_h^2D^{h,y}_iH_{h,y},D^{h,y}_iH_{h,y})_{L^2(\R^n)}\nonumber\\
	&=-(\zeta_h\Delta_h^2(\eta_hD^{h,y}_i\tilde{G}_{h,y}),D^{h,y}_iH_{h,y})_{L^2(\R^n)}\nonumber\\
	&=-(\Delta_h^2(\eta_hD^{h,y}_i\tilde{G}_{h,y}),\zeta_hD^{h,y}_iH_{h,y})_{L^2(\R^n)}\nonumber\\
	&=-(\nabla_h^2(\eta_hD^{h,y}_i\tilde{G}_{h,y}),\nabla_h^2(\zeta_hD^{h,y}_iH_{h,y}))_{L^2(\R^n)}\nonumber\\
	&\le\|\nabla_h^2(\eta_hD^{h,y}_i\tilde{G}_{h,y})\|_{L^2(\R^n)}\|\nabla_h^2(\zeta_hD^{h,y}_iH_{h,y})\|_{L^2(\R^n)}\,.\label{e:estderfar1}
\end{align}

If we use the pointwise estimates for $\tilde{G}_{h,y}$ from Lemma~\ref{l:estgreenfull}, we conclude
\[|\nabla_h^2(\eta_hD^{h,y}_i\tilde{G}_{h,y})|\le Cd(y)^{1-n}\] and hence
\[\|\nabla_h^2(\eta_hD^{h,y}_i\tilde{G}_{h,y})\|_{L^2(\R^n)}\le Cd(y)^{1-\frac n2}\,.\]

Furthermore, as in \eqref{e:poincare}, the Poincaré inequality on $Q_{d(y)+h/2}(y)$ and the pointwise estimates for $\zeta_h$ imply that
\begin{align*}
	&\|\nabla_h^2(\zeta_hD^{h,y}_iH_{h,y})\|_{L^2(\R^n)}\\
	&\quad\le\frac{C}{d(y)^2}\|D^{h,y}_iH_{h,y}\|_{L^2(Q_{d(y)+h/2}(y))}+\frac{C}{d(y)}\|\nabla_hD^{h,y}_iH_{h,y}\|_{L^2(Q_{d(y)+h/2}(y))}\\
	&\quad\quad+\|\nabla_h^2D^{h,y}_iH_{h,y}\|_{L^2(\R^n)}\\
	&\quad\le C\|\nabla_h^2D^{h,y}_iH_{h,y}\|_{L^2(\R^n)}\,.
\end{align*}
If we combine the last two estimates with \eqref{e:estderfar1} we conclude that
\[\|\nabla_h^2D^{h,y}_iH_{h,y}\|_{L^2(\R^n)}\le Cd(y)^{1-\frac n2}\,.\]

We recall that $\Delta_h^2H_h=0$ in $\inte\Lambda_h^n\setminus Q_{d(y)/16}$ and use Theorem~\ref{t:decaycubeoutside} to find that, for $x\in\Lambda_h^n\setminus Q_{d(y)/8}(y)$,
\begin{align*}
	|\nabla_h^2D^{h,y}_iH_h(x)|&\le C\frac{d(y)^\frac n2}{|x-y|^n}\|\nabla_h^2D^{h,y}_iH_h\|_{L^2(\R^n)}\\
	&\le C\frac{d(y)^\frac n2}{|x-y|^n}d(y)^{1-\frac n2}=C\frac{d(y)}{|x-y|^n}\,.
\end{align*}
This implies \eqref{e:estnablax2nablayGuppfar} because $D^{h,y}_iH_{h,y}$ is equal to $D^{h,y}_iG_{h,y}$ in $\Lambda_h^n\setminus Q_{d(y)/16}(y)$ and therefore $\nabla_h^2D^{h,y}_iH_{h,y}$ is equal to $\nabla_h^2D^{h,y}_iG_{h,y}$ in $\Lambda_h^n\setminus Q_{d(y)/8}(y)$.

\emph{Step 2: Proof of \eqref{e:estnablaxnablayGuppfar}}\\
We can obtain \eqref{e:estnablaxnablayGuppfar} by integrating \eqref{e:estnablax2nablayGuppfar} along a well-chosen path in $x$. Let $(x^{(k)})_{k=0}^L$ be a path of length $Lh$ from $x^{(0)}=x$ to $x^{(L)}\in(h\Z)^n\setminus\Lambda_h^n$ such that $|x^{(k+1)}-x^{(k)}|_\infty=h$, $|x^{(k)}-y|\ge |x-y|_\infty$ for all $k$, and $L\le2(d(x)+h)$. To construct such a path begin with the straight path from $x$ to a closest point $x^*\in(h\Z)^n\setminus\Lambda_h^n$ (which will have length $d(x)+h$). If this path does not intersect $Q^h_{|x-y|_\infty-h}(y)$, we are done. Else we modify the path by taking a (shortest-possible) detour around $Q^h_{|x-y|_\infty-h}(y)$. This detour lengthens the path by at most $|x-y|_\infty$, and it is easy to check that if it is necessary then $y\in Q^h_{d(x)}(x)$, so that $|x-y|_\infty\le d(x)$, and our path has length at most $d(x)+h+|x-y|_\infty\le2(d(x)+h)$.

Now, by \eqref{e:estnablax2nablayGuppfar},
\begin{align*}
	|\nabla_{h,x}^2\nabla_{h,y}G_h(x^{(k)},y)|&\le C\frac{d(y)+h}{(|x^{(k)}-y|+h)^n}\\
	&\le C\frac{d(y)+h}{(|x^{(k)}-y|_\infty+h)^n}\le C\frac{d(y)+h}{(|x-y|_\infty+h)^n}\,.
\end{align*}
Now we can perform  discrete integration along $(x^{(k)})_{k=0}^L$: Observe that $\nabla_{h,x}\nabla_{h,y}G_h(x^{(L)},y)=0$ and so
\begin{align*}
	|\nabla_{h,x}\nabla_{h,y}G_h(x,y)|&\le\sum_{k=0}^{L-1}|\nabla_{h,x}\nabla_{h,y}G_h(x^{(k+1)},y)-\nabla_{h,x}\nabla_{h,y}G_h(x^{(k)},y)|\\
	&\le\sum_{k=0}^{L-1}h|\nabla_{h,x}^2\nabla_{h,y}G_h(x^{(k)},y)|\\
	&\le L\frac{d(y)+h}{(|x-y|_\infty+h)^n}\,,
\end{align*}
which implies \eqref{e:estnablaxnablayGuppfar}.

\emph{Step 3: Proof of \eqref{e:estnablax2Guppfar}}\\
We proceed as in the previous step with the only difference that this time we integrate in $y$ along a path that avoids $x$. Let $(y^{(k)})_{k=0}^L$ be a path of length $Lh$ from $y^{(0)}=y$ to $y^{(L)}\in(h\Z)^n\setminus\Lambda_h^n$ such that $|y^{(k+1)}-y^{(k)}|_\infty=h$, $|y^{(k)}-x|_\infty\ge |y-x|_\infty$ for all $k$, and $L\le2(d(y)+h)$. If we construct this path as in the previous step, we can in addition ensure that $d(y^{(k)})\le d(y)$ for all $k$ (then in particular $|y^{(k)}-x|_\infty\ge\frac{d(y^{(k)})}{8}$, so that \eqref{e:estnablax2nablayGuppfar} is applicable for all $y^{(k)}$).

Now by \eqref{e:estnablax2nablayGuppfar}
\begin{align*}
	|\nabla_{h,x}^2\nabla_{h,y}G_h(x,y^{(k)})|&\le C\frac{d(y^{(k)})+h}{(|x-y^{(k)}|+h)^n}\\
	&\le C\frac{d(y^{(k)})+h}{(|x-y^{(k)}|_\infty+h)^n}\le C\frac{d(y)+h}{(|x-y|_\infty+h)^n}
\end{align*}
and if we integrate this along $(y^{(k)})_{k=0}^L$, we obtain \eqref{e:estnablax2Guppfar}.

\emph{Step 4: Proof of \eqref{e:estnablaxGuppfar} and \eqref{e:estGuppfar}}\\We proceed as in the previous two steps. If we integrate \eqref{e:estnablax2Guppfar} along a path $(x^{(k)})_{k=0}^L$ that avoids $y$ once, we obtain \eqref{e:estnablaxGuppfar}, and if we integrate once more, we obtain \eqref{e:estGuppfar}. 
\end{proof}

Now we complement this lemma with an estimate when $x$ and $y$ are close:
\begin{lemma}\label{l:estderclose}
Let $n=2$ or $n=3$ and $G_h$ be the Green's function of $\Lambda^n_h$. If $x,y\in\Lambda_h^n$ and $|x-y|_\infty\le\frac{d(y)}{8}$ then
\begin{align}
	|G_h(x,y)|&\le C(d(x)+h)^{2-\frac n2}(d(y)+h)^{2-\frac n2}\,,\label{e:estGuppclose}\\
	|\nabla_{h,x}G_h(x,y)|&\le C(d(y)+h)^{3-n}\,,\label{e:estnablaxGuppclose}\\
	|\nabla_{h,x}^2G_h(x,y)|&\le \begin{cases} C\log\left(\frac{d(y)+h}{|x-y|+h}\right)&\quad n=2\\\frac{C}{|x-y|+h}&\quad n=3\end{cases}\,,\label{e:estnablax2Guppclose}\\
	|\nabla_{h,x}\nabla_{h,y}G_h(x,y)|&\le \begin{cases} C\log\left(\frac{(d(x)+h)(d(y)+h)}{(|x-y|+h)^2}\right)&\quad n=2\\\frac{C}{|x-y|+h}&\quad n=3\end{cases}\,.\label{e:estnablaxnablayGuppclose}
\end{align}
\end{lemma}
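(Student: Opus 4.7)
The plan is to reduce Lemma~\ref{l:estderclose} to the singular asymptotics of the full-space Green function $\tilde G_h$ (Lemma~\ref{l:estgreenfull}) together with interior regularity for the remainder (Theorem~\ref{t:decaycube}), mirroring Step~1 of Lemma~\ref{l:estderfar} but with the localisation set up so that $x$ lies inside the region where the cut-off equals $1$. One may assume $d(y)\ge Ch$, since otherwise the close regime collapses to $x=y$ and all four estimates are trivial. Let $\tilde G_h$ be the function from Lemma~\ref{l:estgreenfull} with $r=d(y)/2$, choose a discrete cut-off $\eta_h$ supported in $Q_{d(y)/2}(y)$ with $\eta_h\equiv 1$ on $Q_{d(y)/4}(y)$ and $|\nabla_h^\kappa\eta_h|\le Cd(y)^{-\kappa}$, and set $H_{h,y}:=G_{h,y}-\eta_h\tilde G_{h,y}\in\Phi_h$. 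Because $\eta_h(y)=1$ the delta sources cancel, so $\Delta_h^2 H_{h,y}$ vanishes on $Q^h_{d(y)/4}(y)$; and for $|x-y|_\infty\le d(y)/8$ the cut-off $\eta_h$ equals $1$ throughout the stencil at $x$, so discrete $x$-derivatives of $\eta_h\tilde G_h$ at $x$ coincide with those of $\tilde G_h$.

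The first key step is the energy bound $\|\nabla_h^2 H_{h,y}\|_{L^2(\R^n)}\le Cd(y)^{2-n/2}$, obtained by triangle inequality from $\|\nabla_h^2 G_{h,y}\|_{L^2}^2=G_h(y,y)\le Cd(y)^{4-n}$ (Lemma~\ref{l:upperbound}) together with a matching bound $\|\nabla_h^2(\eta_h\tilde G_{h,y})\|_{L^2}\le Cd(y)^{2-n/2}$, which one checks by inserting the pointwise estimates for $\tilde G_h$ and its derivatives (read off from Theorem~\ref{t:asymptgreenfull}) into the discrete Leibniz rule and integrating over $\supp\eta_h$. Applying Theorem~\ref{t:decaycube} at $x$ with radius $\sim d(y)/16$ then yields $|\nabla_h^2 H_{h,y}(x)|\le Cd(y)^{2-n}$, i.e.\ $O(1)$ for $n=2$ and $O(d(y)^{-1})$ for $n=3$. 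Combined with the pointwise bounds $|\nabla_{h,x}^2\tilde G_h(x,y)|\le C\log\tfrac{d(y)}{|x-y|+h}$ for $n=2$, respectively $\le C(|x-y|+h)^{-1}$ for $n=3$, extracted from the scaled Mangad expansion, this proves \eqref{e:estnablax2Guppclose}.

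For \eqref{e:estnablaxnablayGuppclose} the same scheme is applied to $D^{h,y}_i H_{h,y}$, which still lies in $\Phi_h$ and is discretely biharmonic on $Q^h_{d(y)/4}(y)$ because differentiating in $y$ preserves the cancellation of the delta source. The required bound $\|\nabla_h^2 D^{h,y}_i H_{h,y}\|_{L^2}\le Cd(y)^{1-n/2}$ is obtained verbatim as in Step~1 of Lemma~\ref{l:estderfar}: write $\Delta_h^2 D^{h,y}_i H_{h,y}=-\zeta_h\Delta_h^2(\eta_h D^{h,y}_i\tilde G_{h,y})$ for a cut-off $\zeta_h$ supported in the annulus $Q_{d(y)/2}(y)\setminus Q_{d(y)/8}(y)$, pair with $\zeta_h D^{h,y}_i H_{h,y}$ and apply summation-by-parts so that Cauchy--Schwarz only sees $\nabla_h^2(\eta_h D^{h,y}_i\tilde G_{h,y})$ on the annulus where one has the uniform pointwise bound $Cd(y)^{1-n}$. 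Theorem~\ref{t:decaycube} followed by integration of $\nabla_{h,x}^2\nabla_{h,y} H_{h,y}$ along a discrete path of length at most $Cd(y)$ from an anchor $x_0$ with $|x_0-y|_\infty\sim d(y)/8$ (where \eqref{e:estnablaxnablayGuppfar} supplies the starting value $Cd(y)^{2-n}$) then gives $|\nabla_{h,x}\nabla_{h,y} H_{h,y}(x,y)|\le Cd(y)^{2-n}$, which is dominated by the $\tilde G_h$ contribution and yields the claim.

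The two remaining inequalities are easy. Estimate \eqref{e:estGuppclose} is a direct corollary of Lemma~\ref{l:upperbound} because $d(x)\ge 7d(y)/8$ in the close regime; and for \eqref{e:estnablaxGuppclose} one again splits into $\tilde G_h$ and $H_h$ pieces, bounding $|\nabla_{h,x}\tilde G_h|\le C(d(y)+h)^{3-n}$ from the Mangad expansion and $|\nabla_{h,x}H_{h,y}(x)|\le Cd(y)^{3-n}$ by integrating the pointwise bound $|\nabla_{h,x}^2 H_{h,y}|\le Cd(y)^{2-n}$ from an anchor supplied by \eqref{e:estnablaxGuppfar}. The main technical subtlety throughout is the dual energy estimate for $D^{h,y}_i H_{h,y}$: since $\nabla_{h,x}^2\nabla_{h,y}\tilde G_h$ is only borderline square integrable near $y$, the cut-offs $\eta_h$ and $\zeta_h$ must be arranged so that their supports meet only in the annular region where all needed derivatives of $\tilde G_h$ are uniformly bounded, and this is exactly the manoeuvre already carried out in Step~1 of Lemma~\ref{l:estderfar}.
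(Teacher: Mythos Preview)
Your proposal is correct, and it uses the same basic ingredients as the paper (the decomposition $G_h=H_{h,y}+\eta_h\tilde G_{h,y}$, the energy bound for $D^{h,y}_iH_{h,y}$ obtained by the dual-pairing trick, Theorem~\ref{t:decaycube}, and the Mangad asymptotics), but the organisation is genuinely different.

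The paper's proof is built around a single key intermediate estimate: it first establishes $|\nabla_{h,x}^2\nabla_{h,y}G_h(x,y)|\le C(|x-y|+h)^{1-n}$ (its Step~1), and then derives \eqref{e:estnablaxnablayGuppclose} and \eqref{e:estnablax2Guppclose} by integrating this third-derivative bound along discrete paths in $x$ and in $y$ respectively. For \eqref{e:estnablaxGuppclose} the paper explicitly observes that integrating the full second-derivative bound would produce an unwanted logarithm when $n=3$, and instead applies the Sobolev--Poincar\'e inequality to $D^{h,y}_iH_{h,y}$ followed by the symmetry $G_h(x,y)=G_h(y,x)$. By contrast, you work at each derivative level directly: you obtain an energy bound for $H_{h,y}$ itself by triangle inequality (rather than only for its $y$-derivative), apply Theorem~\ref{t:decaycube} to get $|\nabla_h^2H_{h,y}(x)|\le Cd(y)^{2-n}$, and then add back the singular $\tilde G_h$ piece. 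Your treatment of \eqref{e:estnablaxGuppclose} is particularly neat: because you integrate only the \emph{regular} part $\nabla_h^2H_{h,y}$ (which is uniformly $O(d(y)^{2-n})$) rather than the full $\nabla_{h,x}^2G_h$, the logarithmic obstruction at $n=3$ never arises, and you can dispense with the Sobolev/symmetry detour. The price you pay is the extra energy bound $\|\nabla_h^2(\eta_h\tilde G_{h,y})\|_{L^2}\le Cd(y)^{2-n/2}$, which needs pointwise control of $\tilde G_h$ and $\nabla_h\tilde G_h$ in the interior ball; these are not stated in Lemma~\ref{l:estgreenfull} but, as you note, are read off directly from the scaled Mangad expansion (cf.\ the proof of Lemma~\ref{l:estgreenfull}).
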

\begin{proof}$ $\\
\emph{Step 1: Pointwise estimate for $\nabla_{h,x}^2\nabla_{h,y}G_h(x,y)$}\\
We claim that if $x,y\in\Lambda_h^n$ and $|x-y|_\infty\le\frac{d(y)}{4}$ then
\begin{equation}
		|\nabla_{h,x}^2\nabla_{h,y}G_h(x,y)|\le \frac{C}{(|x-y|+h)^{n-1}}\,.\label{e:estnablax2nablayGuppclose}
\end{equation}
The fact that we prove this for $|x-y|_\infty\le\frac{d(y)}{4}$ will give us a bit of space to wiggle around in the following steps where we integrate \eqref{e:estnablax2nablayGuppclose}. The proof of \eqref{e:estnablax2nablayGuppclose} is similar to the proof of \eqref{e:estnablax2nablayGuppfar}. The main difference is that this time we choose the cut-off function further away from the singularity.

If $d(y)<10h$ we can again use a trivial estimate: By Lemma~\ref{l:upperbound}, $G_h(x',y')$ is bounded by $Cd(x')^{2-\frac n2}d(y')^{2-\frac n2}\le Ch^{4-n}$ if $|x'-x|_\infty\le h$ and $|y'-y|_\infty\le h$, so that 
\[|\nabla_{h,x}^2\nabla_{h,y}G_h(x,y)|\le C\frac{1}{h^3}h^{4-n}=Ch^{1-n}\,.\]
Therefore \eqref{e:estnablax2nablayGuppclose} holds if we choose $C$ sufficiently large.

So assume that $d(y)\ge10h$. Let $\eta_h$ be a discrete cut-off function that is 1 on $Q_{d(y)/2+2h}(y)$ and 0 on $(h\Z)^n\setminus Q_{d(y)-2h}(y)$ and such that $|\nabla^\kappa\eta_h|\le\frac{C}{d(x)^k}$ for $\kappa\le2$ and let $H_h(x,y)=G_h(x,y)-\eta_h(x)\tilde{G}_h(x,y)$, where $\tilde{G}_h$ is the function from Lemma~\ref{l:estgreenfull} with $r=d(y)$.

Then, for $i\in\{1,\ldots ,n\}$, $D^{h,y}_iH_{h,y}\in\Phi_h$ and $\Delta_h^2D^{h,y}_iH_{h,y}=0$ in \linebreak $Q_{d(y)/2}(y)$ and in $\inte\Lambda_h^n\setminus Q_{d(y)}(y)$. We can estimate $\|\nabla_h^2D^{h,y}_iH_{h,y}\|_{L^2(\R^n)}$ just as in Step 1 of the proof of Lemma~\ref{l:estderfar} and obtain that
\begin{equation}
	\|\nabla_h^2D^{h,y}_iH_{h,y}\|_{L^2(\R^n)}\le Cd(y)^{1-\frac n2}\,.\label{e:estderclose5}
\end{equation}
Now recall that $H_h$ is biharmonic in $Q_{d(y)/2}(y)$. So Theorem~\ref{t:decaycube} implies for $x\in Q^h_{d(y)/4}(y)$
\[|\nabla_h^2D^{h,y}_iH_{h,y}(x)|\le\frac{C}{d(y)^\frac n2}\|\nabla_h^2D^{h,y}_iH_{h,y}\|_{L^2(\R^n)}\le Cd(y)^{1-n}\,.\]
Because $\nabla_h^2D^{h,y}_iH_{h,y}=\nabla_h^2D^{h,y}_iG_{h,y}-\nabla_h^2D^{h,y}_i\tilde{G}_{h,y}$ in $Q_{d(y)/2}(y)$ we can use \eqref{e:estgreenfull1} and obtain
\begin{align*}
	|\nabla_h^2D^{h,y}_iG_{h,y}(x)|&\le|\nabla_h^2D^{h,y}_iH_{h,y}(x)|+|\nabla_h^2D^{h,y}_i\tilde{G}_{h,y}(x)|\\
	&\le C\left(\frac{1}{d(y)^{n-1}}+\frac{1}{(|x-y|+h)^{n-1}}\right)\,.
\end{align*}
This implies \eqref{e:estnablax2nablayGuppclose} if we use that $|x-y|_\infty\le\frac{d(y)}{4}$ and $d(y)\ge 10h$ so that $|x-y|+h\le Cd(y)$.

\emph{Step 2: Proof of \eqref{e:estnablaxnablayGuppclose}}\\
If $d(y)<4h$ we can repeat the trivial estimate from the previous step, so assume $d(y)\ge4h$.

We want to integrate \eqref{e:estnablax2nablayGuppclose} along a suitable path. So let $(x^{(k)})_{k=0}^L$ be a straight path from $x^{(0)}=x$ to a closest point $x^{(L)}\in Q_{d(y)/4}(y)\setminus Q_{d(y)/4-h}(y)$. This path will have length $Lh=\left\lfloor\frac{d(y)}{4}\right\rfloor_h-|x-y|_\infty$. By Lemma~\ref{l:estderfar} we have
\begin{align}
|\nabla_{h,x}\nabla_{h,y}G_h(x^{(L)},y)|&\le C\frac{(d(x^{(L)})+h)(d(y)+h)}{|x^{(L)}-y|^n}\nonumber\\
&\le C\frac{(d(y)+h)^2}{|d(y)+h|^n}\le C\frac{1}{|d(y)+h|^{n-2}}\,.\label{e:estderclose1}
\end{align}
Furthermore \eqref{e:estnablax2nablayGuppclose} implies that
\begin{equation}
|\nabla_{h,x}^2\nabla_{h,y}G_h(x^{(k)},y)|\le \frac{C}{(|x^{(k)}-y|+h)^{n-1}}\le\frac{C}{(|x-y|+(k+1)h)^{n-1}}\,.\label{e:estderclose2}
\end{equation}
Now we can integrate \eqref{e:estderclose2} along $(x^{(k)})_{k=0}^L$ and use \eqref{e:estderclose1}, and after a short calculation we arrive at \eqref{e:estnablaxnablayGuppclose}.

\emph{Step 3: Proof of \eqref{e:estnablax2Guppclose}}
If $d(y)<77h$ we can again use the trivial estimate from Step 1, so assume $d(y)\ge77h$.

This is similar to the previous step: We choose a shortest-possible path $(y^{(k)})_{k=0}^L$ from $y^{(0)}=y$ to a point $y^{(L)}\in Q_{d(x)/6}(x)\setminus Q_{d(x)/6-h}(y)$. Then $|y^{(k)}-x|_\infty\le\frac{d(x)}6$, so that $\frac56d(x)\le d(y^{(k)})\le\frac76d(x)$ and hence \[|y^{(k)}-x|_\infty\le\frac{d(x)}6\le\frac{d(y^{(k)})}5\,.\]
Therefore we can apply \eqref{e:estnablax2nablayGuppclose} at the point $(x,y^{(k)})$ for each $k$ and conclude
\begin{equation}
|\nabla_{h,x}^2\nabla_{h,y}G_h(x,y^{(k)})|\le \frac{C}{(|x-y^{(k)}|+h)^{n-1}}\le\frac{C}{(|x-y|+(k+1)h)^{n-1}}\,.\label{e:estderclose3}
\end{equation}
On the other hand,
\[d(y^{(L)})\ge\frac56d(x)\ge\frac56\frac78d(y)\ge56h\]
so that
\[|y^{(L)}-x|_\infty\ge\frac{d(x)}6-h\ge\frac{d(y^{(L)})}{7}-h>\frac{d(y^{(L)})}{8}\,.\]
This means that we can apply \eqref{e:estnablaxnablayGuppfar} at the point $(x,y^{(L)})$ and conclude
\begin{equation}
|\nabla_{h,x}^2G_h(x,y^{(L)})|\le C\frac{(d(y^{(L)})+h)^2}{|x-y^{(L)}|^n}\le C\frac{(d(y)+h)^2}{|d(y)+h|^n}\le C\frac{1}{|d(y)+h|^{n-2}}\,.\label{e:estderclose4}
\end{equation}
Now we can integrate \eqref{e:estderclose3} along the path $(y^{(k)})_{k=0}^L$ and use the estimate \eqref{e:estderclose4} for the one endpoint to obtain \eqref{e:estnablax2Guppclose}.

\emph{Step 4: Proof of \eqref{e:estnablaxGuppclose}}\\
We could try to prove this by integrating \eqref{e:estnablax2Guppclose} along a path. However, this turns out to be not sharp enough at least if $n=3$ (we would get a logarithmic term instead of a constant term). Instead we will use the Sobolev inequality on the function $H_{h,y}$ from Step 1. Thereby we get a bound for $\nabla_{h,y}G_h(x,y)$ if $x,y$ are close. By the symmetry of $G_h$ we can turn this into a bound for $\nabla_{h,x}G_h(x,y)$.

If $d(y)<10h$ we can again use the trivial estimate from Step 1, so assume $d(y)\ge10h$. Recall the function $H_{h,y}$ from Step 1. If we use the Sobolev and Poincaré inequality on $Q_{d(y)+h/2}(y)$ and the estimate \eqref{e:estderclose5} we obtain
\begin{align*}
	\|D^{h,y}_iH_{h,y}\|_{L^\infty(Q_{d(y)+h/2}(y))}&\le C(d(y)+h/2)^{2-\frac n2}\|\nabla_h^2D^{h,y}_iH_{h,y}\|_{L^2(Q_{d(y)+h/2}(y))}\\
	&\le Cd(y)^{2-\frac n2}\|\nabla_h^2D^{h,y}_iH_{h,y}\|_{L^2(\R^n)}\\
	&\le Cd(y)^{3-n}
\end{align*}
and therefore
\[|\nabla_{h,y}H_{h,y}(x)|\le Cd(y)^{3-n}\]
for any $x\in Q_{d(y)}(y)$. Now we can use \eqref{e:estgreenfull1} and the fact that $D^{h,y}_iH_{h,y}=D^{h,y}_iG_{h,y}-D^{h,y}_i\tilde{G}_{h,y}$ in $Q_{d(y)/2}(y)$ and obtain
\[|D^{h,y}_iG_{h,y}(x)|\le|D^{h,y}_iH_{h,y}(x)|+|D^{h,y}_i\tilde{G}_{h,y}(x)|\le Cd(y)^{3-n}\]
for any $x\in Q_{d(y)/2}(y)$ and any $i\in\{1,\ldots, n\}$. By the symmetry of $G_h$ in $x$ and $y$ we conclude that also
\begin{equation}
	|D^{h,x}_iG_{h,x}(y)|\le Cd(x)^{3-n}\label{e:estderclose6}
\end{equation}
for any $y\in Q_{d(x)/2}(x)$.

Now in the setting of \eqref{e:estnablaxGuppclose} we are given $x,y$ with $|y-x|_\infty\le\frac{d(y)}4$. These satisfy $\frac34d(y)\le d(x)\le\frac54d(y)$, so that $|y-x|_\infty\le\frac13d(x)$ and in particular $y\in Q_{d(x)/2}(x)$. Thus we can apply \eqref{e:estderclose6} and obtain
\[|D^{h,x}_iG_{h,x}(y)|\le Cd(x)^{3-n}\le Cd(y)^{3-n}\,,\]
which implies \eqref{e:estnablaxGuppclose}.

\emph{Step 5: Proof of \eqref{e:estGuppclose}}\\
This follows immediately from \eqref{e:upperbound2}. 
\end{proof}
\begin{proof}[Proof of Theorem~\ref{t:mainthmh} i)]
Now that we have proved Lemma~\ref{l:estderclose} and Lemma~\ref{l:estderfar} the proof is straightforward. First observe that it suffices to consider $x,y\in\Lambda_h^n$ as otherwise $G_h$ and its relevant derivatives are trivially 0.

We claim that we can combine \eqref{e:estnablaxnablayGuppfar} and \eqref{e:estnablaxnablayGuppclose} to obtain \eqref{e:estnablaxnablayGupp}. Indeed, if $|x-y|_\infty\le\frac{d(y)}{8}$ we have $d(y)\le\frac87d(x)$ and $|x-y|+h\le\sqrt{n}|x-y|_\infty+h<d(y)+h$ which implies
\[1\le\frac{(d(x)+h)(d(y)+h)}{(|x-y|+h)^2}\]
and we are done by \eqref{e:estnablaxnablayGuppclose}.

If however $|x-y|_\infty>\frac{d(y)}{8}$, then we have in particular $|x-y|\ge h$, so that $|x-y|+h\le2|x-y|$. We also have $d(y)\le8|x-y|$ and $d(x)\le9|x-y|$ and we easily see that
\[\frac{(d(x)+h)(d(y)+h)}{|x-y|^n}\le \frac{C}{(|x-y|+h)^{n-2}}\]
so we are done by \eqref{e:estnablaxnablayGuppfar}.

Similarly, we can combine \eqref{e:estnablax2Guppfar} and \eqref{e:estnablax2Guppclose} into the estimate
\[|\nabla_{h,x}^2G_h(x,y)|\le \begin{cases}C\log\left(1+\frac{(d(y)+h)^2}{(|x-y|+h)^2}\right)&\quad n=2\\ C\min\left(\frac{1}{|x-y|+h},\frac{(d(y)+h)^2}{(|x-y|+h)^3}\right)&\quad n=3\end{cases}\,.\]
This is not quite \eqref{e:estnablax2Gupp}, but it implies \eqref{e:estnablax2Gupp} unless $d(y)=0$. On the other hand, if $d(y)=0$ then $y\in\partial\Lambda_h^n$. Therefore $G_{h,y}$ is identically 0, so that $\nabla_{h,x^2}G_h(x,y)=0$ and \eqref{e:estnablax2Gupp} holds as well.

Similarly we can combine \eqref{e:estnablaxGuppfar} and \eqref{e:estnablaxGuppclose}, and \eqref{e:estGuppfar} and \eqref{e:estGuppclose} into 
\begin{align*}
	|\nabla_{h,x}G_h(x,y)|&\le C\min\left((d(y)+h)^{3-n},\frac{(d(x)+h)d(y)^2}{(|x-y|+h)^n}\right)\,,\\
	|G_h(x,y)|&\le C\min\left((d(x)+h)^{2-\frac n2}(d(y)+h)^{2-\frac n2},\frac{(d(x)+h)^2(d(y)+h)^2}{(|x-y|+h)^n}\right)
\end{align*}
respectively. These estimates imply \eqref{e:estnablaxGupp} and \eqref{e:estGuppclose}, except in the cases $d(x)=0$ or $d(y)=0$, which are again trivial. 
\end{proof}

\begin{remark}\label{r:otherest}
As a byproduct of the proofs of Lemma~\ref{l:estderclose} and Lemma~\ref{l:estderfar} we proved the estimates \eqref{e:estnablax2nablayGuppfar} and \eqref{e:estnablax2nablayGuppclose} which can easily be combined into the estimate
\begin{equation}
	|\nabla_{h,x}^2\nabla_{h,y}G_h(x,y)|\le C\min\left(\frac{1}{(|x-y|+h)^{n-1}},\frac{d(y)+h}{(|x-y|+h)^n}\right)
\end{equation}
for any $x,y\in(h\Z)^n$.

With the same method of proof it is possible to prove an estimate for $\nabla_{h,x}^2\nabla_{h,y}^2G_h$ as well. One again considers $H_{h,y}=G_{h,y}-\eta_h\tilde{G}_{h,y}$ in Lemma~\ref{l:estderclose} and Lemma~\ref{l:estderfar} and derives estimates for $\|\nabla_{h,x}^2\nabla_{h,y}^2H_{h,y}\|_{L^2(\R^n)}$. In combination with the pointwise estimates for $\tilde{G}_h$ (in particular \eqref{e:estgreenfull4}) these again yield estimates for $\nabla_{h,x}^2\nabla_{h,y}^2G_h$ in the two regimes where $x$ and $y$ are far away and close together, respectively. The final result is
\begin{equation}
	|\nabla_{h,x}^2\nabla_{h,y}^2G_h(x,y)|\le\frac{C}{(|x-y|+h)^n}
\end{equation}
for any $x,y\in(h\Z)^n$.

Actually it is even possible to derive estimates for higher derivatives $\nabla_{h,x}^a\nabla_{h,y}^bG_h$, at least when $a\le2$ or $b\le2$. However we cannot expect these estimates to be optimal any more, because high derivatives are increasingly divergent near the singular boundary points, and our approach does not really capture this behaviour.
\end{remark}

\subsection{Convergence of Green's functions}
\begin{proof}[Proof of Corollary \ref{c:convergence_G}]
We begin with the proof of assertion i). We can assume that $h\leq\frac13$.
There exists a unique $y_h \in  \Lambda_h^n$ such that $y \in y_h + [-\frac{h}{2}, \frac{h}{2})^2$.
Set $u_h(x) = G_h(x,y_h)$. We extend $u_h$ by zero to $(h\Z)^n \setminus \inte \Lambda_h^n$. 

To prove (i) we have to show that $u_h$  converges uniformly to $G(\cdot, y)$. 
Testing the equation for $\Delta^2_h u_h$ with $u_h$ we get (see Lemma~\ref{l:upperbound})
\[ \| \nabla_h^2 u_h \|_{L^2(\R^n)} \le C d^{2-\frac{n}{2}}(y_h) \le C\,. \] 
The discrete Sobolev-Poincaré inequality implies in particular that the $u_h$ are uniformly Hölder continuous
\begin{equation} \label{e:uh_hoelder}
[u_h]_{C_h^{0, \frac14}(\R^n)} \le C.
\end{equation}

Denote by $J_h$ the interpolation operator introduced in Section \ref{s:interpolation}.
From Proposition \ref{p:interpolationproperties} vi) and the Poincaré inequality we deduce that the sequence  $J_h u_h$ is bounded in 
$W^{2,2}(\R^n)$ and $J_h u_h  = 0$ in $\R^n \setminus (-3h, 1 + 3h)^n$. It follows that for a subsequence 
\[ 
J_{h_k} u_{h_k} \rightharpoonup u \quad \hbox{in $W^{2,2}(\R^n)$}\,, \qquad u = 0 \text{ in }\R^n \setminus (0,1)^n\,.
\]
From the uniform Hölder continuity \eqref{e:uh_hoelder}  and Proposition~\ref{p:interpolationproperties} iii), iv) and vi) we deduce that, for any $x\in(-3h,1+3h)^n$,
\begin{align*}
	|J_{h_k} u_{h_k}(x)-I^{pc}_{h_k} u_{h_k}(x)|&=|J_{h_k}(u_{h_k}(\cdot)-u_{h_k}(x))(x)|\\
	&\le C\|u_{h_k}-u_{h_k}(x)\|_{L^\infty(Q_{3h_k}(x))}\le C {h_k}^{\frac14}
\end{align*}
and therefore
\[
\sup_{x\in(-1,2)^n}|J_{h_k} u_{h_k}(x) - I^{pc}_{h_k} u_{h_k}(x)| \le C {h_k}^{\frac14}\,.
\]
In connection with the compact embedding from $W^{2,2}_0((-1,2)^n)$ to $C^0((-1,2)^n)$ we conclude that
\begin{equation} \label{e:uniform_conv}
I^{pc}_{h_k} u_{h_k} \to u  \quad \text{uniformly}\,.
\end{equation}
If we can show that $u(x) = G(x,y)$ then by uniqueness of the limit it follows that the convergences
above do not only hold along a particular subsequence $h_k \to 0$ but for every subsequence $h_k \to 0$
and we are done.

To show that $u(x) = G(x,y)$ we use that by definition of $G_h( \cdot, y_h)$ we have for each $\varphi \in C_c^5( (0,1)^n)$
\begin{align*}
\varphi(y_{h_k}) = &  \sum_{ x \in \inte\Lambda_h} \Delta_{h_k}^2 u_{h_k}(x) \varphi(x)  h_k^n 
=    \sum_{ x \in \inte\Lambda_h} u_{h_k}(x)    \Delta_{h_k}^2 \varphi(x)  h_k^n \\
= &  \int_{(0,1)^n} I^{pc}_{h_k} u_{h_k} I^{pc}_{h_k} \Delta_{h_k}^2\varphi(x)  \ud x\,.
\end{align*}
Now by Taylor expansion $|I^{pc}_{h_k} \Delta_{h_k}^2 \varphi - \Delta^2 \varphi| \le C {h_k}$. 
Together with \eqref{e:uniform_conv} we get
\[
\varphi(y) = \lim_{k \to \infty} \varphi(y_{h_k}) = \lim_{k \to \infty}  \int_{(0,1)^n} I^{pc}_{h_k} u_{h_k} I^{pc}_{h_k} \Delta_{h_k}^2 \varphi_h\ud x
= \int_{(0,1)^n} u\Delta^2 \varphi\ud x\,.
\]
Thus $\Delta^2 u = \delta_y$ in the sense of distributions. Since we also know that $u \in W_0^{2,2}((0,1)^n)$ we conclude that
$u(x) = G(x,y)$ as desired. 

To prove ii) note that the estimates in Theorem~\ref{t:mainthmh} show that the second discrete derivatives are bounded in 
$L^p$ for all $p < \infty$. Hence by the discrete Sobolev embedding theorem the discrete first deriatives are bounded in 
$C^{0, \alpha}$ for all $\alpha<1$. This implies that 
\begin{equation}  \label{e:osc_first_derivatives}
 |I^{pc}_h \nabla_h u -  \nabla J_h u_h| \le C h^\alpha.
 \end{equation} 
Moreover the $L^p$ bound on the discrete second derivatives and \eqref{e:interpolation1} give a bound of $J u_h$ in $W^{2,p}$.
Hence a subsequence of $J_h  u_h$ converges in $C^{1, \alpha}$ to $G(\cdot, y)$. Since the limit is unique, the whole sequence
converges in $C^{1, \alpha}$ to $G$. Together with  \eqref{e:osc_first_derivatives} this yields uniform convergence
of the discrete first derivatives. 

The local compactness argument in  Section~\ref{s:inner_special} (and a diagonalisation argument) shows that a subsequence
of $I^{pc}_h \nabla_h^2 u_h$ converges in $L^2_{\rm loc}( (0,1)^2 \setminus \{y\})$ to a function $v$. 
Since  $I^{pc}_h \nabla_h^2 u_h$ is also bounded in $L^q$ for some $q > 2$ we get strong convergence in $L^2( (0,1)^2)$. 
Using again the $L^q$ bound we get strong convergence in all $L^p$ with $p < q$. Since we  have $L^q$ bounds for all
$ q < \infty$ we get strong convergence for all $p< \infty$. It remains to show that $v = \nabla^2 G(\cdot, y)$. 
To obtain this identity we can use discrete integration by parts and pass to the limit on both sides, as in the proof that
$\Delta^2 u = \delta_y$.

The proof of (iii) is similar. Uniform boundedness of the discrete derivatives follows directly from   Theorem~\ref{t:mainthmh}.
This theorem also shows that the second discrete derivatives are uniformly bounded on the complement of any cube $Q_r(y)$. 
It follows that the functions $u_h$ are uniformly Lipschitz on the complement of any cube $Q_r(y)$ and we obtain locally 
uniform convergence of $I^{pc}_h \nabla_h u_h$ in the complement of those cubes as in the proof of (ii). Combined with the uniform boundedness we immediately conclude convergence of $I^{pc}_h \nabla_h u_h$ in $L^p$ for all $p<\infty$.

The proof of $L^p$ convergence of $I_h^{pc}\nabla_h^2u_h$ for $p<3$ is again analogous to the argument for $n=2$. 
\end{proof}

\section*{Acknowledgements}
The authors are very grateful to G. Dolzmann for many inspiring conversations and suggestions. 
Indeed,  G. Dolzmann and the first author derived in unpublished work estimates in the interior 
and near the regular boundary points using discrete Campanato estimates in the spirit of \cite{Dolzmann1993,Dolzmann1999}
and developed a version of the duality argument in Section~\ref{s:inner_outer_decay}. The authors are also very grateful to E. Süli for his insightful comments on a preliminary version of this manuscript.

This work has its roots in very interesting discussions with J.D. Deuschel, N. Kurt and E. Bolthausen in the framework
of the Berlin-Leipzig DFG Research Unit 718 'Analysis and stochastics in complex physical systems'. The authors would like to thank A. Cipriani for pointing out connections to sandpile models.

SM was supported by the DFG through the Hausdorff Center 
for Mathematics (EXC59) and the CRC 1060 'The mathematics of emergent effects', 
project A04. FS was supported by the Hausdorff Center 
for Mathematics through the Bonn International Graduate School for Mathematics (BIGS), and by the German National Academic Foundation. 
SM also strongly benefitted from the Trimester Programme 'Mathematical challenges
of materials science and condensed matter physics'
at the Hausdorff Research Institute for Mathematics (HIM).

\bibliographystyle{alpha_edited}
\bibliography{DiscreteBilaplacianArxivVersion}  

\begin{thebibliography}{GLMP83}

\bibitem[BDG01]{Bolthausen2001}
Bolthausen, E., Deuschel, J.-D., and Giacomin, G.
\newblock {Entropic repulsion and the maximum of the two-dimensional harmonic
  crystal}.
\newblock {\em Ann. Probab.}, 29(4):1670--1692, 2001.

\bibitem[Bog05]{Boggio1905}
Boggio, T.
\newblock {Sulle funzioni di Green d'ordine m}.
\newblock {\em Rend. Circ. Mat. Palermo}, 20:97--135, 1905.

\bibitem[Bre11]{Brezis2011}
Brezis, H.
\newblock {\em {Functional analysis, {S}obolev spaces and partial differential
  equations}}.
\newblock {Universitext}. Springer, New York, 2011.

\bibitem[Cam80]{Campanato1980}
Campanato, S.
\newblock {\em {Sistemi ellittici in forma divergenza. {R}egolarit{\`a}
  all'interno}}.
\newblock {Quaderni. [Publications]}. Scuola Normale Superiore Pisa, Pisa,
  1980.

\bibitem[CC16]{Chiarini2016}
{Chiarini}, A. and {Cipriani}, A.
\newblock {A note on the Green's function for the transient random walk without
  killing on the half lattice, orthant and strip}, 2016.
\newblock {a}r{X}iv:1608.04578.

\bibitem[CD08]{Caravenna2008}
Caravenna, F. and Deuschel, J.-D.
\newblock {Pinning and wetting transition for {$(1+1)$}-dimensional fields with
  {L}aplacian interaction}.
\newblock {\em Ann. Probab.}, 36(6):2388--2433, 2008.

\bibitem[CD09]{Caravenna2009}
Caravenna, F. and Deuschel, J.-D.
\newblock {Scaling limits of {$(1+1)$}-dimensional pinning models with
  {L}aplacian interaction}.
\newblock {\em Ann. Probab.}, 37(3):903--945, 2009.

\bibitem[CFL28]{Courant1928}
Courant, R., Friedrichs, K., and Lewy, H.
\newblock {{\"U}ber die partiellen {D}ifferenzengleichungen der mathematischen
  {P}hysik}.
\newblock {\em Math. Ann.}, 100(1):32--74, 1928.

\bibitem[CSR16a]{Cipriani2016b}
{Cipriani}, A., {Subhra Hazra}, R., and {Ruszel}, W.~M.
\newblock {Scaling limit of the odometer in divisible sandpiles}, 2016.
\newblock {a}r{X}iv:1604.03754.

\bibitem[CSR16b]{Cipriani2016a}
{Cipriani}, A., {Subhra Hazra}, R., and {Ruszel}, W.~M.
\newblock {The divisible sandpile with heavy-tailed variables}, 2016.
\newblock {a}r{X}iv:1610.09863.

\bibitem[Deu96]{Deuschel1996}
Deuschel, J.-D.
\newblock {Entropic repulsion of the lattice free field. {II}. {T}he
  {$0$}-boundary case}.
\newblock {\em Comm. Math. Phys.}, 181(3):647--665, 1996.

\bibitem[Dol93]{Dolzmann1993}
Dolzmann, G.
\newblock {\em {Campanato-{U}ngleichungen f{\"u}r {D}ifferenzenverfahren und
  finite {E}lemente}}.
\newblock Number 251 in {Bonner mathematische Schriften}. Rheinische
  Friedrich-Wilhelms-Universit{\"a}t Bonn, 1993.

\bibitem[Dol99]{Dolzmann1999}
Dolzmann, G.
\newblock {Optimal convergence for the finite element method in {C}ampanato
  spaces}.
\newblock {\em Math. Comp.}, 68(228):1397--1427, 1999.

\bibitem[DS58]{Duffin1958}
Duffin, R.~J. and Shelly, E.~P.
\newblock {Difference equations of polyharmonic type}.
\newblock {\em Duke Math. J.}, 25:209--238, 1958.

\bibitem[DS04]{DallAcqua2004}
Dall'Acqua, A. and Sweers, G.
\newblock {Estimates for {G}reen function and {P}oisson kernels of higher-order
  {D}irichlet boundary value problems}.
\newblock {\em J. Differential Equations}, 205(2):466--487, 2004.

\bibitem[Ger30]{Gerschgorin1930}
Gerschgorin, S.
\newblock {Fehlerabsch{\"a}tzung f{\"u}r das {D}ifferenzenverfahren zur
  {L}{\"o}sung partieller {D}ifferentialgleichungen}.
\newblock {\em ZAMM}, 10:373--382, 1930.

\bibitem[GGS10]{Gazzola2010}
Gazzola, F., Grunau, H.-C., and Sweers, G.
\newblock {\em {Polyharmonic boundary value problems. Positivity preserving and
  nonlinear higher order elliptic equations in bounded domains.}}, volume 1991
  of {\em {Lecture Notes in Mathematics}}.
\newblock Springer-Verlag, Berlin, 2010.

\bibitem[Gia93]{Giaquinta1993}
Giaquinta, M.
\newblock {\em {Introduction to regularity theory for nonlinear elliptic
  systems}}.
\newblock {Lectures in Mathematics ETH Z{\"u}rich}. Birkh{\"a}user Verlag,
  Basel, 1993.

\bibitem[GLMP83]{Gavrilyuk1983b}
Gavriljuk, I.~P., Lazarov, R.~D., Makarov, V.~L., and Pirnazarov, S.~P.
\newblock {Estimates for the rate of convergence of difference schemes for
  fourth-order equations of elliptic type}.
\newblock {\em Zh. Vychisl. Mat. i Mat. Fiz.}, 23(2):355--365, 1983.
\newblock (Russian, transl. {\it U.S.S.R. Comput. Math. Math. Phys.} 23(2),
  64--70 (1983)).

\bibitem[GMP83]{Gavrilyuk1983}
Gavriljuk, I.~P., Makarov, V.~L., and Pirnazarov, S.~P.
\newblock {Consistent estimates of the rate of convergence of difference
  solutions to generalized solutions of the first boundary value problem for
  fourth-order equations}.
\newblock {\em Izv. Vyssh. Uchebn. Zaved. Mat.}, (2):15--22, 1983.
\newblock (Russian, transl. {\it Soviet Math. (Iz. VUZ)} 27(2), 13--21 (1983)).

\bibitem[Hac81]{Hackbusch1981}
Hackbusch, W.
\newblock {On the regularity of difference schemes}.
\newblock {\em Ark. Mat.}, 19(1):71--95, 1981.

\bibitem[Hac83]{Hackbusch1983}
Hackbusch, W.
\newblock {On the regularity of difference schemes. {II}. {R}egularity
  estimates for linear and nonlinear problems}.
\newblock {\em Ark. Mat.}, 21(1):3--28, 1983.

\bibitem[Hac17]{Hackbusch2017}
Hackbusch, W.
\newblock {\em {Elliptic differential equations}}, volume~18 of {\em {Springer
  Series in Computational Mathematics}}.
\newblock Springer-Verlag, Berlin, second edition, 2017.
\newblock Theory and numerical treatment.

\bibitem[I{\u I}S86]{Ivanovich1986}
Ivanovich, L.~D., {\u I}ovanovi{\'c}, B.~S., and Shili, {\`E}.~{\`E}.
\newblock {Convergence of difference schemes for the biharmonic equation}.
\newblock {\em Zh. Vychisl. Mat. i Mat. Fiz.}, 26(5):776--779, 1986.
\newblock (Russian, transl. {\it U.S.S.R. Comput. Math. Math. Phys.} 26(3),
  87--90, (1986)).

\bibitem[JS14]{Jovanovic2014}
Jovanovi{\'c}, B.~S. and S{\"u}li, E.
\newblock {\em {Analysis of finite difference schemes for linear partial
  differential equations with generalized solutions}}, volume~46 of {\em
  {Springer Series in Computational Mathematics}}.
\newblock Springer, London, 2014.

\bibitem[KMR97]{Kozlov1997}
Kozlov, V.~A., Maz$'$ya, V.~G., and Rossmann, J.
\newblock {\em {Elliptic boundary value problems in domains with point
  singularities}}, volume~52 of {\em {Mathematical Surveys and Monographs}}.
\newblock American Mathematical Society, Providence, RI, 1997.

\bibitem[Kur]{Kurt}
Kurt, N.
\newblock {Unpublished Research Notes}.
\newblock $ $.

\bibitem[Kur07]{Kurt2007}
Kurt, N.
\newblock {Entropic repulsion for a class of {G}aussian interface models in
  high dimensions}.
\newblock {\em Stochastic Process. Appl.}, 117(1):23--34, 2007.

\bibitem[Kur09]{Kurt2009}
Kurt, N.
\newblock {Maximum and entropic repulsion for a {G}aussian membrane model in
  the critical dimension}.
\newblock {\em Ann. Probab.}, 37(2):687--725, 2009.

\bibitem[Laz81]{Lazarov1981}
Lazarov, R.~D.
\newblock {On the convergence of difference solutions to generalized solutions
  of a biharmonic equation in a rectangle}.
\newblock {\em Differ. Uravn.}, 17(7):1295--1303, 1344, 1981.
\newblock (Russian, transl. Differ. Equ. {\bf 17}(7), 836-843 (1982)).

\bibitem[LMPU16]{Levine2016}
Levine, L., Murugan, M., Peres, Y., and Ugurcan, B.~E.
\newblock {The divisible sandpile at critical density}.
\newblock {\em Ann. Henri Poincar{\'e}}, 17(7):1677--1711, 2016.

\bibitem[Man67]{Mangad1967}
Mangad, M.
\newblock {Asymptotic expansions of {F}ourier transforms and discrete
  polyharmonic {G}reen's functions}.
\newblock {\em Pacific J. Math.}, 20:85--98, 1967.

\bibitem[{Mor}66]{Morrey1966}
{Morrey Jr.}, C.~B.
\newblock {\em {Multiple integrals in the calculus of variations}}.
\newblock {Die Grundlehren der mathematischen Wissenschaften, Band 130}.
  Springer-Verlag New York, Inc., New York, 1966.

\bibitem[MR10]{Mazya2010}
Maz$'$ya, V. and Rossmann, J.
\newblock {\em {Elliptic equations in polyhedral domains}}, volume 162 of {\em
  {Mathematical Surveys and Monographs}}.
\newblock American Mathematical Society, Providence, RI, 2010.

\bibitem[MW40]{McCrea1940}
McCrea, W.~H. and Whipple, F. J.~W.
\newblock {Random paths in two and three dimensions}.
\newblock {\em Proc. Roy. Soc. Edinburgh}, 60:281--298, 1940.

\bibitem[Sak03]{Sakagawa2003}
Sakagawa, H.
\newblock {Entropic repulsion for a {G}aussian lattice field with certain
  finite range interaction}.
\newblock {\em J. Math. Phys.}, 44(7):2939--2951, 2003.

\bibitem[Sch66]{Schwartz1966}
Schwartz, L.
\newblock {\em {Th{\'e}orie des distributions}}.
\newblock {Publications de l'Institut de Math{\'e}matique de l'Universit{\'e}
  de Strasbourg, No. IX-X. Nouvelle {\'e}dition, enti{\'e}rement corrig{\'e}e,
  refondue et augment{\'e}e}. Hermann, Paris, 1966.

\bibitem[Sch81]{Schumaker1981}
Schumaker, L.~L.
\newblock {\em {Spline Functions: Basic Theory}}.
\newblock {Pure \& Applied Mathematics}. Wiley-Interscience, New York, 1981.

\bibitem[Sch16]{Schweiger2016}
Schweiger, F.
\newblock {Discrete {G}reen{\rq}s functions and statistical physics of
  membranes}.
\newblock Master's thesis, Rheinische Friedrich-Wilhelms-Universit{\"a}t Bonn,
  2016.

\bibitem[Sim67]{Simpson1967}
Simpson, R.~B.
\newblock {A fundamental solution for a biharmonic finite-difference operator}.
\newblock {\em Math. Comp.}, (21):321--339, 1967.

\bibitem[Vel06]{Velenik2006}
Velenik, Y.
\newblock {Localization and delocalization of random interfaces}.
\newblock {\em Probab. Surv.}, 3:112--169, 2006.

\end{thebibliography}

\end{document}